\newtheorem{theorem}{Theorem}[section]
\newtheorem{lemma}[theorem]{Lemma}
\newtheorem{proposition}[theorem]{Proposition}
\theoremstyle{definition}
\newtheorem{assumption}[theorem]{Assumption}
\newtheorem{definition}[theorem]{Definition}
\theoremstyle{remark}
\newtheorem{remark}[theorem]{Remark}
\newcommand{\di}{\mathrm{d}}
\newcommand{\hilb}{\mathcal{H}}
\newcommand{\fock}{\mathcal{F}}
\newcommand{\hfrak}{\mathfrak{H}}
\newcommand{\e}{\mathrm{e}}
\renewcommand{\Im}{\operatorname{Im}}
\renewcommand{\Re}{\operatorname{Re}}
\renewcommand{\a}[1]{a\!\left(#1\right)}
\newcommand{\adag}[1]{a^*\!\left(#1\right)}
\newcommand{\ii}{\mathrm{i}}
\newcommand{\dOmega}{\mathrm{d}\Gamma(\omega)}
\newcommand{\fh}{\mathfrak{h}}
\newcommand{\I}{\textnormal{I}}
\newcommand{\II}{\textnormal{II}}
\newcommand{\III}{\textnormal{III}}
\newcommand{\bare}{\textnormal{bare}}
\newcommand{\dom}{\operatorname{Dom}}
\newcommand{\ran}{\operatorname{Ran}}
\def\smalloverbrace#1{\mathop{\vbox{\m@th\ialign{##\crcr\noalign{\kern3\p@}%
				\tiny\downbracefill\crcr\noalign{\kern3\p@\nointerlineskip}%
				$\hfil\displaystyle{#1}\hfil$\crcr}}}\limits}
\definecolor{cblue}{rgb}{0.16, 0.32, 0.75}
\definecolor{cred}{rgb}{0.7, 0.11, 0.11}
\newcommand\blfootnote[1]{%
  \begingroup
  \renewcommand\thefootnote{}\footnote{#1}%
  \addtocounter{footnote}{-1}%
  \endgroup
}
\title[Renormalization of GSB models with critical UV divergences]{Renormalization of generalized spin--boson models with critical ultraviolet divergences}
\author{Benjamin Alvarez\textsuperscript{\,1}\hspace{3pt}\hspace{1pt}}
\author{Sascha Lill\textsuperscript{\,2}\hspace{3pt}\orcidlink{0000-0002-9474-9914}\hspace{1pt}}
\author{Davide Lonigro\textsuperscript{\,3}\hspace{3pt}\orcidlink{0000-0002-0792-8122}\hspace{1pt}}
\author{Javier Valentín Martín\textsuperscript{\,4}}
\address{\footnotesize \textsuperscript{1}Aix Marseille Univ, Univ Toulon, CNRS, CPT, Marseille, France}
\address{\footnotesize \textsuperscript{2}Department of Mathematical Sciences, Universitetsparken 5, DK-2100 Copenhagen, Denmark}
\address{\footnotesize \textsuperscript{3}Department Physik, Friedrich-Alexander-Universität Erlangen-Nürnberg, Staudtstraße 7, 91058 Erlangen, Germany}
\address{\footnotesize \textsuperscript{4}Universität Paderborn, Institut für Mathematik, Institut für Photonische Quantensysteme, Warburger Str. 100, 33098 Paderborn, Germany}
\email{\footnotesize \href{mailto:benjamin.alvarez@univ-tln.fr}{\texttt{benjamin.alvarez@univ-tln.fr}}}
\email{\footnotesize \href{mailto:sali@math.ku.dk}{\texttt{sali@math.ku.dk}}}
\email{\footnotesize \href{mailto:davide.lonigro@fau.de}{\texttt{davide.lonigro@fau.de}}}
\email{\footnotesize \href{mailto:javiervm@math.uni-paderborn.de}{\texttt{javiervm@math.uni-paderborn.de}}}
\begin{document}
	
\maketitle\thispagestyle{empty}

\vspace{-0.75cm}
    
\begin{abstract}
    We provide a rigorous construction of generalized spin--boson models with commuting transition matrices and form factors exhibiting critical ultraviolet (UV) divergences. That is, we cover all divergences where a self-energy renormalization, but no non-Fock representation, is required. Our method is based on a direct definition of the renormalized Hamiltonian on a sufficiently large test domain, followed by a Friedrichs extension.
    We then prove that this Hamiltonian coincides with the one obtained by cut-off renormalization.
    Furthermore, we show that for specific supercritical cases, i.e., when a non-Fock representation is required, the renormalized Hamiltonian is trivial.
\end{abstract}

\blfootnote{2020 \textit{Mathematics Subject Classification}. 46N50, 47A10, 47B25, 81Q10, 81T16.}

\noindent \small \textbf{Keywords}: Spin--boson model; renormalization; quantum field theory; ultraviolet divergence; Friedrichs extension. \normalsize

\section{Introduction}\label{sec:introduction}

Generalized spin--boson (GSB) models describe a quantum mechanical system, usually finite-dimensional, linearly coupled with a quantum bosonic field~\cite{arai1997existence}. The Hilbert space of the total system is $\hfrak=\mathbb{C}^D\otimes\fock$, where
\begin{equation}\label{eq:fock}
    \fock:=\mathcal{F}^+(\hilb)=\bigoplus_{n=0}^\infty S_n\hilb^{\otimes n}
\end{equation}
is the Bose--Fock space over a single-particle space $\hilb$, with $ S_n $ being the symmetrization operator. In applications, $\hilb$ is usually taken as a space of square-integrable functions over some $\sigma$-finite measure space $(X,\Sigma,\mu)$; to fix ideas we will henceforth adopt this choice, simply denoting it by $\hilb=L^2(X)$ in what follows. The Hamiltonian for GSB models formally reads
\begin{equation}\label{eq:gsb}
    H^{\bare}
    = K \otimes 1
        + 1 \otimes \di \Gamma (\omega)
        +\sum_{j=1}^N \left(B_j\otimes \adag{f_j}+B^*_j\otimes\a{f_j}\right) \;,
\end{equation}
with $K=K^* \in \mathbb{C}^{D \times D}$ describing the free dynamics of the spin system, $ \omega: X \to \mathbb{R} $ being the boson dispersion relation, and where the transition matrices $B_1,\dots,B_N\in\mathbb{C}^{D \times D}$, and the form factors $f_1,\dots,f_N\in L^2(X)$ model the $ N \in \mathbb{N} $ system--field couplings. Here, $\a{f},\adag{f}$ denote the annihilation and creation operator associated with a given $f\in L^2(X)$. We refer to Section~\ref{sec:setting} for all due mathematical details. The spectral properties of such models, as well as their properties in the presence of infrared divergences, have attracted a lot of attention in the past decades, cf.~\cite{hirokawa2001remarks,hubner1995spectral,hirokawa1999expression,arai1990asymptotic,amann1991ground,davies1981symmetry,fannes1988equilibrium,hubner1995radiative,reker2020existence,hasler2021existence,lonigro2022generalized} for the case $N=1$ and~\cite{arai2000essential,arai2000ground,falconi2015self,takaesu2010generalized,teranishi2015self,teranishi2018absence,lonigro2023self} for $N>1$.\bigskip

In this paper, we will focus on the \textit{ultraviolet problem} (UV) for GSB models, for which we assume $ \omega \ge m > 0 $. This entails considering the case in which the system--field coupling is formally described by form factors whose norm diverges:
\begin{equation}\label{eq:normalization}
    \int_X|f_j(k)|^2\;\mathrm{d} k =\infty
\end{equation}
for some $j=1,\dots,N$. Note that here and in the rest of the article we write $ \mathrm{d} \mu(k) =: \mathrm{d} k $ for simplicity. Such divergences are frequently encountered in applications: UV-divergent form factors are used to model short-range interactions or as toy models of perfectly Markovian open quantum systems; we refer to Section~\ref{sec:physics} for a brief related discussion. 
When Eq.~\eqref{eq:normalization} holds, a direct interpretation of $ H^{\bare} $ as a legitimate self-adjoint operator on $\hfrak$ is not immediate, as the corresponding creation operator $\adag{f_j}$ is not densely defined on $\fock$. As such, a careful mathematical study---often involving a \textit{renormalization} procedure---is required in order to construct a self-adjoint \emph{renormalized operator} $ H $ out of $ H^{\bare} $.

Reprising the discussion in~\cite{lill2025self}, to which we refer for additional details, it is convenient to classify UV divergences in case $ \omega(k) \ge m > 0 $ as follows:
\begin{itemize}
    \item \textbf{Case 1}: $\int|f_j|^2=\infty$ but $\int\omega^{-1}|f_j|^2<\infty$ for all $j$.
     \item \textbf{Case 2}: $\int\omega^{-1}|f_j|^2=\infty$ but $\int\omega^{-2}|f_j|^2<\infty$ for all $j$.
     \item \textbf{Case 3}: $\int\omega^{-2}|f_j|^2=\infty$.
\end{itemize}
Additionally, the boundary point of Case 2, where $ \int \omega^{-2} |f_j|^2 < \infty $, while $ \int \omega^{-2+\varepsilon} |f_j|^2 = \infty $ for any $ \varepsilon > 0 $, is also referred to as the \emph{UV critical case} in~\cite{hinrichs2025ultraviolet,lampart2023hamiltonians}. Correspondingly, all weaker divergences of Case 1 or 2 are called \emph{sub-critical}, and all divergences of Case 3 are called \emph{super-critical}.

\subsection{Previous results}

Rigorous results on either case for GSB models are relatively recent, either for specific choices of the operators $B_1,\dots,B_N$ or more general. In Case 1, for any choice of the parameters, Eq.~\eqref{eq:gsb} can be directly interpreted in the sense of quadratic forms, cf.~\cite[Proposition 4.2]{lonigro2022generalized}, so that no renormalization procedure is necessary; furthermore, under a UV cut-off, norm resolvent convergence to said operator is achieved. An explicit characterization of the domain of this operator was obtained by two of the authors in~\cite{lill2025self} for normal and commuting $B_1,\dots,B_N$, as an application of abstract results on singular perturbations by Posilicano~\cite{posilicano2020self}. 

Additionally, one of us obtained renormalization results for the so-called rotating-wave spin--boson model, cf.~Section~\ref{sec:physics}, in Cases 1 and 2 for the single-atom case~\cite{lonigro2022generalized}, and in Case 1 for its multi-atom generalization~\cite{lonigro2022generalized,lonigro2022renormalization,lonigro2023self}: this was obtained by means of an explicit computation of the resolvent. UV cut-offs are there shown to produce either strong or norm resolvent convergence, depending on the badness of the divergence and the strength of the coupling. More recently, Hinrichs, Lampart and one of the authors~\cite{hinrichs2025ultraviolet} extended these results by successfully renormalizing GSB models with $N=1$, with a single operator $B$ which is either normal or nilpotent (or a commuting combinations of those), via a proof technique based on the interior--boundary conditions (IBC) method, see below. Finally, in Case 3 it was shown by Dam and Møller~\cite{dam2020asymptotics} that certain spin--boson models with $D=2$ exhibit \textit{triviality}, that is, they renormalize to a direct sum of two van Hove Hamiltonians, so that no system--field coupling is retained in the UV limit. Furthermore, renormalization in this case has been very recently achieved for $ N = 1 $ and $ B^* B = B B^* $~\cite{falconi2025nontrivial}, see also~\cite{falconi2025ultraviolet}.

An extensive mathematical literature addresses the UV renormalization problem in other models of matter–field interaction. Successful renormalization schemes in Cases 1 and 2 have been developed in seminal works such as~\cite{nelson1964interaction, eckmann1970model, frohlich1973infrared, sloan1974polaron}, and in more recent contributions~\cite{alvarez2021ultraviolet,alvarez2023ultraviolet,griesemer2016self, griesemer2018domain}. In Case 3, a breakthrough result was obtained by Gross~\cite{gross1973relativistic} for a closely related model. We refer to~\cite[Sect.~1.3]{lill2022time} for a broader survey of this literature.
Additionally, a novel renormalization method based on so-called interior–boundary conditions (IBC) has been introduced in recent years~\cite{teufel2021hamiltonians, teufel2016avoiding, lampart2018particle, lampart2019nelson, lampart2019nonrelativistic, schmidt2021massless, lampart2020renormalized,lampart2023hamiltonians,hinrichs2025ultraviolet}. For a detailed review of the IBC framework and recent developments, we refer to~\cite[Sect.~1.4]{lill2022time}.
Let us also point out that there exists a very extensive mathematical literature on the construction of relativistic quantum field theory (QFT) models, where one encounters similar UV problems as in Case 3, but where the Lorentz invariance offers further renormalization techniques, see~\cite{dedushenko2023snowmass} for an exhaustive literature review.

\subsection{Our results (informal)}
The main results of this paper (Theorems~\ref{thm:HE2def_T}--\ref{thm:normresolvent}) concern the renormalization of GSB models exhibiting Case 2 divergences. We work under Assumption~\ref{as:strong} on the model, which includes the following requirement on the matrices $B_j$:
\begin{equation}\label{eq:bj_commute}
    [B_j, B_{j'}] = 0 \;.
\end{equation}
This assumption is indeed satisfied in the physically relevant case of each $ B_j $ corresponding to a different sub-system (e.g., qubit, atom, or molecule). No other constraint on either the number $N$ of operators $B_1,\dots,B_N\in\mathbb{C}^{D \times D}$ or their structure, nor the dimension $D$ shall be imposed. Under this assumption, we prove the following facts:
\begin{itemize}
    \item There exists a self-adjoint operator $H$, semi-bounded from below, whose action on its domain agrees with the formal expression~\eqref{eq:gsb} modulo the subtraction of a properly chosen counterterm (Theorem~\ref{thm:HE2def_T});
    \item The Hamiltonian $H$ can be obtained as the norm resolvent limit of a family of cut-off Hamiltonians, obtained by replacing each UV-divergent form factor $f_j$ with a sequence $(f_{j,\Lambda})_{\Lambda\in\mathbb{N}} \subset L^2(X)$ (Theorem~\ref{thm:normresolvent}).
\end{itemize}
Our proof will be based on an energy counterterm applied at the level of the quadratic form uniquely associated with $H$: by doing so, one obtains a quadratic form densely defined on a domain different from the original form domain of $H$, which can nonetheless be completely characterized by means of generalized Weyl operators applied to a displaced vacuum vector. By providing a successful renormalization scheme in Case 2 for arbitrary values of $N$, our results extend both those in~\cite{lill2025self}, where Case 1 is addressed for generic $N$ and additional constraints on the spin operators are imposed on top of~\eqref{eq:bj_commute}, and those in~\cite{hinrichs2025ultraviolet}, where Case 2 is addressed for $N=1$ with a normal or nilpotent spin operator.

As a secondary result (Proposition~\ref{prop:triviality}), we extend the results by Dam and Møller to a larger class of GSB models exhibiting Case 3 divergences, by finding sufficient conditions for GSB models to exhibit triviality. The question whether it is possible to obtain a nontrivial Case 3 renormalization of certain GSB models remains open, and shall be the subject of future research.

\subsection{GSB models in physics}\label{sec:physics}

GSB models naturally arise in many branches of quantum mechanics~\cite{weiss2012quantum,leggett1987dynamics,breuer2002theory,ingold2002path,grifoni1999dissipation,thorwart2004dynamics,clos2012quantification,costi2003entanglement,larson2021jaynes}, ranging from quantum optics---where they serve as a simplified but useful description of interaction between light and matter---to quantum information and open system theory, where they are used as toy models to investigate fundamental properties of the influence of the environment on finite-dimensional systems, like non-Markovianity, decoherence, and emergence of non-classical behavior~\cite{bloch2012quantum,blatt2012quantum,porras2008mesoscopic,vogel1995nonlinear,phoenix1991establishment,hao2013dynamics,ge2010quantum,lemmer2018trapped,leppakangas2018quantum,wipf1987influence,suarez1991hydrogen,guinea1985bosonization}. While Case 1 divergences are typically encountered in the description of low-dimensional optical models, e.g., waveguide quantum electrodynamics~\cite{facchi2016bound,facchi2019bound,lonigro2021stationary}, Case 2 divergences appear when describing spin--boson models in the so-called Markovian limit, where the spin undergoes pure exponential decay and memory effects are totally suppressed~\cite{lonigro2022quantum,lonigro2022regression,chruscinski2023markovianity}.

In the case $D=2$ (spin--boson models), one generally chooses a representation where the energy of the two-level system, or spin, is described by a diagonal matrix,
\begin{equation}
    K=\frac{\omega_0}{2}\sigma_z,\qquad\omega_0\geq0 \;,
\end{equation}
with $\sigma_z$ being the third Pauli matrix; as such, $\omega_0$ represents the energy splitting between the ``excited'' and ``ground'' energy levels of the spin. The qubit--field interaction is then most generally described by a Hermitian matrix $B=B^*$, which can be expanded in the basis of $\mathbb{C}^{2\times 2}$ obtained by Pauli matrices $\{I,\sigma_x,\sigma_y,\sigma_z\}$:
\begin{equation}
    B=\alpha_0I+\vec{\alpha}\cdot\vec{\sigma}=\alpha_0I+\alpha_x\sigma_x+\alpha_y\sigma_y+\alpha_z\sigma_z \;,
\end{equation}
where $\alpha_0,\alpha_x,\alpha_y,\alpha_z$ are real numbers which determine the nature of the spin--boson coupling:
\begin{itemize}
    \item $\alpha_z$ determines the longitudinal component of the coupling, which is responsible for the \textit{decoherence} of the qubit and does not induce transitions between the two energy levels;
    \item $\alpha_x$ and $\alpha_y$ determine the transversal component of the coupling, which is instead responsible for \textit{transitions} between the two qubit energy levels. Usually one chooses a representation in which either $\alpha_x$ or $\alpha_y$, typically the latter, is zero. Spin--boson models with $B\propto\sigma_x$ reduce to the well-known \textit{quantum Rabi model}~\cite{jaynes1963comparison} in the case of a monochromatic boson field, i.e. $ L^2(X)\simeq\mathbb{C}$.
    \item $\alpha_0$ controls the equilibrium displacement of the boson modes.
\end{itemize}
While in the mathematical literature the name ``spin--boson model'' is usually reserved for the case $B\propto\sigma_x$, i.e. with purely transversal coupling, in physics, the name is alternatively used for the general case discussed above, or for any of the particular cases one obtains by only retaining one of the parameters $\alpha_0,\alpha_x,\alpha_y,\alpha_z$.

A further variable to be taken into account, which also introduces the necessity of studying the case of non-Hermitian \textit{and} non-normal $B$, is the rotating-wave approximation (RWA), which consists in the following substitution:
\begin{eqnarray}
     \sigma_x\otimes\left(\a{f}+\adag{f}\right)&=&(\sigma_++\sigma_-)\otimes\a{f}+(\sigma_++\sigma_-)\otimes\adag{f}\nonumber\\
     &&\overset{\text{RWA}}{\longrightarrow}\sigma_+\otimes\a{f}+\sigma_-\otimes\adag{f},
\end{eqnarray}
where $\sigma_\pm=\sigma_x\pm\ii\sigma_y$ are the ladder operators on the spin degree of freedom. The resulting model---sometimes referred to as the rotating-wave spin--boson model---is exactly solvable, in the sense that a closed expression for its resolvent can be obtained. For a monochromatic field, it reduces to the well-known Jaynes--Cummings model~\cite{jaynes1963comparison,shore1993jaynes,larson2021jaynes}. In the monochromatic case, this approximation is heuristically justified by the fact that, in the interaction picture with respect to the free operator, the terms $\sigma_-\otimes\a{f}$ and $\sigma_+\otimes\adag{f}$ are multiplied by terms that become quickly oscillating, and thus negligible on average, as $\omega_0\to\infty$; a rigorous justification of this approximation was only obtained relatively recently in~\cite{burgarth2024taming} for a single spin coupled to a monochromatic field, and in~\cite{richter2024quantifying} for multiple spins.

We finally point out that in the physics literature one usually considers boson fields with discrete modes $(\omega_k)_k$ indexed by some quasimomentum $k$, and defines the so-called \textit{spectral density} $J(E)$, formally defined by $J(E)=\sum_k|f_k|^2\delta(E-\omega_k)$, with $\omega_k$ being the dispersion of the field at the $k$-th mode; a continuum limit is then considered, so that $J(E)$ becomes a positive-valued continuous function defined on $[m,\infty)$. One usually proceeds studying the properties of the model for certain customary choices of $J(E)$. In our mathematical framework, this can be reproduced by choosing the measure space $(X,\Sigma,\mu)$ as the Lebesgue space on $\mathbb{R}$, and the form factor chosen in such a way that $|f(E)|^2=J(E)$. As such, the strength of the UV divergence in the model is entirely determined by the behavior of $J(E)$ as $E\to\infty$: the model is UV-divergent if $\int J
(E)\;\mathrm{d}E=+\infty$, and we distinguish
\begin{itemize}
    \item \textbf{Case 1}: $\int J(E)\;\mathrm{d}E=\infty$ but $\int E^{-1}J(E)\;\mathrm{d}E<\infty$.
    \item \textbf{Case 2}: $\int E^{-1}J(E)\;\mathrm{d}E=\infty$ but $\int E^{-2}J(E)\;\mathrm{d}E<\infty$.
    \item \textbf{Case 3}: $\int E^{-2}J(E)\;\mathrm{d}E=\infty$.
\end{itemize}
Similar arguments can be made in the case of multiple form factors.

\subsection{Outline of the paper} The present paper is organized as follows. In Section~\ref{sec:setting} we introduce the notation used throughout the paper, and state our results: Theorem~\ref{thm:HE2def_T}, concerning the existence of a renormalized GSB Hamiltonian in Case 2; Theorem~\ref{thm:normresolvent}, where we identify this Hamiltonian as the norm resolvent limit of a family of cut-off Hamiltonians; and Proposition~\ref{prop:triviality} about the triviality of certain GSB Hamiltonians in Case 3. In Section~\ref{sec:proof_renormalization_weylfree} we prove Theorem~\ref{thm:HE2def_T}, and in Section~\ref{sec:proof_normresolvent} we prove Theorem~\ref{thm:normresolvent}. Finally, Proposition~\ref{prop:triviality} is proven in Appendix~\ref{sec:triviality}.

\subsection*{Acknowledgments} SL and DL thank Benjamin Hinrichs for our scientific discussions and exchanges of ideas. BA thanks Jacob M\o ller for scientific discussions. This work was partially conceived during a stay of BA, SL and DL at Centro Internazionale per la Ricerca Matematica (CIRM) in Trento, in the framework of the ``Research in Pairs'' program; all authors gratefully acknowledge financial and logistic support by CIRM.
SL acknowledges financial support by the European Union (ERC \textsc{FermiMath} nr.~101040991). Views and opinions expressed are those of the authors and do not necessarily reflect those of the European Union or the European Research Council Executive Agency. Neither the European Union nor the granting authority can be held responsible for them.
DL acknowledges financial support by Friedrich-Alexander-Universität Erlangen-Nürnberg through the funding program “Emerging Talent Initiative” (ETI), and was partially supported by the project TEC-2024/COM-84 QUITEMAD-CM.
JVM acknowledges support by the Ministry
 of Culture and Science of the State of North Rhine-Westphalia within the project ‘PhoQC’ (Grant Nr.
 PROFILNRW-2020-067). The authors were further partially supported by Gruppo Nazionale per la Fisica Matematica (GNFM) in Italy.

\section{Setting and main results}\label{sec:setting}

\subsection{Mathematical setup} 

We begin by briefly recalling the construction of GSB models in the absence of UV divergences. This discussion is essentially analogous to the one in~\cite{lonigro2022generalized,lill2025self}. Recall that the finite-dimensional spin system is described by $ \fh \simeq \mathbb{C}^D $. As for the bosons, consider a $\sigma$-finite measure space $ (X,\Sigma,\mu) $. The single-boson space is then $ \hilb = L^2(X) $ and the bosonic Fock space is $ \fock $ as defined in Eq.~\eqref{eq:fock}, with vacuum vector $ \Omega := (1,0,0, \ldots ) $. The total Hilbert space of the GSB model is then
$\hfrak=\mathbb{C}^D\otimes\fock$. We denote the $ n $-boson component of $ \psi \in \fock $ and $ \Psi \in \hfrak $ by $ \psi^{(n)} \in S_n L^2(X)^{\otimes n} $ and $ \Psi^{(n)} \in \mathbb{C}^D \otimes  S_n L^2(X)^{\otimes n} $, respectively.

The free evolution of the spin system will be generated by a symmetric operator $K \in\mathbb{C}^{D \times D} $. As for the boson field: given a measurable function $ \omega: X \to \mathbb{R} $ called dispersion relation, the corresponding multiplication operator
\begin{equation}
    \omega:  L^2(X) \supset \dom(\omega) \to  L^2(X) \;, \quad
    \dom(\omega)
    := \left\{ \phi \in  L^2(X) \; : \; \int_X |\phi(k)|^2 \omega(k)^2 \; \di k < \infty \right\}\;,
\end{equation}
is well-defined and self-adjoint. We can then describe the total energy of the boson field by the second quantization of $\omega$, that is, the operator $ \di \Gamma (\omega) : \fock \supset \dom(\di \Gamma (\omega)) \to \fock $, defined as
\begin{equation}
\begin{aligned}
    \left( \di \Gamma (\omega) \psi \right)^{(n)} (k_1, \ldots , k_n)
    := &\sum_{j = 1}^n \omega(k_j) \psi^{(n)} (k_1, \ldots, k_n) \;,\\
    \dom(\di \Gamma(\omega))
    := &\{ \psi \in \fock \; : \; \Vert \di \Gamma(\omega) \psi \Vert_{\fock} < \infty \}
    \subset \fock \;.
\end{aligned}
\end{equation}
Following~\cite[Sect.~VIII.10]{reed1972methods}, $ \di \Gamma(\omega) $ is also self-adjoint.

In order to describe the interaction between the spin system and the boson field, we introduce the creation and annihilation operators $\a{f},\adag{f}$ for $f\in L^2(X)$, acting as
\begin{equation} \label{eq:adaggera}
\begin{aligned}
    a^*(f) S_n (\phi_1 \otimes \ldots \otimes \phi_n) := &\sqrt{n+1} \,S_{n+1}(f \otimes \phi_1 \otimes \ldots \otimes \phi_n),\\ 
    a(f) S_n (\phi_1 \otimes \ldots \otimes \phi_n) := &\frac{1}{\sqrt{n}} S_{n-1}\sum_{j = 1}^n \langle f, \phi_j \rangle (\phi_1 \otimes \ldots \otimes \phi_{j-1} \otimes \phi_{j+1} \otimes \ldots \otimes \phi_n)
\end{aligned}
\end{equation}
for any $ \phi_1, \ldots, \phi_n \in  L^2(X) $.
It is well-known (see also Lemma~\ref{lem:aastar_bounds_matrix}) that, by linearity, these relations uniquely define operators on $ \dom(\mathcal{N}^{\frac 12}) \subset \fock $, where the operator $\mathcal{N}: \fock \supset \dom(\mathcal{N}) \to \fock \;$ is the self-adjoint number operator defined by
\begin{equation} \label{eq:cN}
        (\mathcal{N} \psi)^{(n)} := n \psi^{(n)} \;, \quad
    \dom(\mathcal{N}) := \left\{ \psi \in \fock : \sum_{n=0}^\infty n^2 \Vert \psi^{(n)} \Vert^2 < \infty \right\} \;,
\end{equation}
 and $ \mathcal{N}^{\frac 12} $ is defined via spectral calculus. Furthermore, they satisfy the canonical commutation relations (CCR): for all $ f,g  \in  L^2(X) \;$ and $\psi\in\dom(\mathcal{N})$,
    \begin{equation}
        [a(f), a^*(g)]\psi = \langle f, g \rangle \psi\;, \qquad 
        [a(f), a(g)]\psi = [a^*(f), a^*(g)]\psi = 0 \;.
    \end{equation}
We also note that $\a{f},\adag{f}$ are formally mutually adjoint, and they are relatively bounded with respect to $\dOmega$ with relative bound equal to zero. Consequently, in ``Case 0'', where $N\in\mathbb{N}$, $f_1,\dots,f_N\in L^2(X)$, and $B_1,\dots,B_N\in\mathbb{C}^{D \times D}$, we can define a renormalized Hamiltonian with domain $\dom(H)= \mathbb{C}^D \otimes \dom (\di \Gamma (\omega) )$ by
\begin{equation}\label{eq:bare}
    H = H^{\bare}
    = K \otimes 1
        + 1 \otimes \di \Gamma (\omega)
        +\sum_{j=1}^N \left(B_j\otimes\adag{f_j}+B_j^*\otimes\a{f_j}\right) \;.
\end{equation}
By virtue of the Kato--Rellich theorem, $H$ is self-adjoint~\cite[Chapter~1~and~5]{arai2018analysis}. This is a generalized spin--boson (GSB) model without UV divergences.

To simplify notation, in what follows we will denote operators of the form $ 1 \otimes A $ or $ A \otimes 1 $ simply as $ A $, whenever there is no risk of confusion on which tensor factor $ A $ is acting.
    
\subsection{Renormalization of GSB models of Case 2}

The expression~\eqref{eq:bare} ceases to be well-defined, in general, whenever $f_j\notin L^2(X)$ at least for some $j=1,\dots,N$. While $\a{f_j}$ is then still densely defined, it does not admit a densely defined adjoint~\cite{nelson1964interaction}. Recalling the classification of divergences given in Section~\ref{sec:introduction}, if all form factors belong to Case 1, i.e.,
\begin{equation}\label{eq:mild}
    \omega^{-1/2}f_j\in L^2(X) \qquad \forall j=1,\dots,N \;,
\end{equation}
then the interaction term in Eq.~\eqref{eq:bare} can be rigorously interpreted as a \textit{form} perturbation of $\dOmega$, again with relative bound zero; whence there exists a unique self-adjoint operator $H=H^{\bare}$ associated with Eq.~\eqref{eq:bare}~\cite{lonigro2022generalized}, whose domain can be explicitly characterized under certain conditions on the operators $B_1,\dots,B_N$, cf.~\cite{lill2025self}. Besides, given any ultraviolet regularization of the form factors, i.e. $(f_{1,\Lambda})_{\Lambda\geq0},\dots,(f_{N,\Lambda})_{\Lambda\geq0}$ such that 
\begin{equation}
    \omega^{-1/2}f_{j,\Lambda}\to\omega^{-1/2}f_j \qquad \forall j=1,\dots,N \;,
\end{equation}
the corresponding GSB operator $H_\Lambda$ (see~\eqref{eq:HLambda}) converges to $H$ in the norm resolvent sense---regardless of the particular choice of regularization. 

\bigskip

In this article, we consider the more general Case 2.

\begin{assumption} \label{as:strong}
    The dispersion relation $\omega:X\to\mathbb{R}$ is measurable and has a positive boson mass $ m > 0 $, i.e., $ \omega \ge m $ almost everywhere. Furthermore, the form factors $ f_j: X \to \mathbb{C} $ are measurable and satisfy
    \begin{equation}
       \Vert \omega^{-1} f_j \Vert < \infty \quad \forall j\in \{1, \ldots, N\} \;,
    \end{equation}
    and we assume 
    \begin{equation}
    \label{CommutingHypo}
        [B_j, B_{j'}] = 0 \quad \forall j, j' \in \{1, \ldots, N\} \;.
    \end{equation}
\end{assumption}

\begin{remark}
Our assumptions, in particular, cover the \emph{UV critical case} where $ \int \omega^{-2} |f_j|^2 < \infty $, while $ \int \omega^{-2+\varepsilon} |f_j|^2 = \infty $ for any $ \varepsilon > 0 $. This latter case was also covered in~\cite{hinrichs2025ultraviolet}, but with much stricter assumptions on $ B_j $ which enabled renormalization via the interior-boundary condition (IBC) method with one single iteration step. For generic $ B_j $, we expect both an IBC renormalization with an arbitrarily large number of iteration steps as in~\cite{lampart2023hamiltonians}, and the iterative expansion of~\cite{alvarez2023ultraviolet} to succeed. However, as in~\cite{lampart2023hamiltonians,alvarez2023ultraviolet}, both methods will likely not cover the critical case, that is, we expect that they only work if $ \int \omega^{-2+\varepsilon} |f_j|^2 < \infty $. Moreover, our renormalization method does not require an iterative procedure and is therefore technically much simpler than~\cite{lampart2023hamiltonians,alvarez2023ultraviolet}.
\end{remark}

\begin{remark}
We believe that the assumption $ \omega \ge m > 0 $ may also be relaxed to ``infrared divergences of Case 1'' in the sense that $ \Vert \chi(\omega < m) \omega^{-1/2} f \Vert < \infty $, as in~\cite{hinrichs2025ultraviolet}. However, to keep the discussion simple, we do not discuss infrared divergences, here.
\end{remark}

Our procedure to obtain a self-adjoint operator from the formal expression $H^{\rm bare}$ is rooted in the following heuristic observation. We begin by formally subtracting an infinite \emph{self-energy} (also cf.~Remark~\ref{rem:infiniteself}): 
\begin{equation}
    E_\infty := - \sum_{j, j' = 1}^N B_j^* B_{j'} \int \frac{\overline{f_j(k)} f_{j'}(k)}{\omega(k)} \,\di k \;.
\end{equation}
The renormalized Hamiltonian will formally read $ H = H^{\bare} - E_\infty $, and its construction is based on the following formal observation (cf.~\cite{glimm1968lambda,glimm1970lambda,glimm1970lambdaIII}):
\begin{equation} \label{eq:completingsquare_T}
    H^{\bare} - K - E_\infty
    = \int \omega(k) \hat a(k)^* \hat a(k)\, \di k \;,
\end{equation}
where we introduced the operator-valued distribution for $ k \in X $
\begin{equation}
\hat a(k) := a(k) + \sum_{j=1}^N B_j \frac{f_j(k)}{\omega(k)} \;. 
\end{equation}
Since $ K $ is bounded, we can eventually add it without affecting self-adjointness. 

We will make mathematical sense of the r.h.s. of~\eqref{eq:completingsquare_T} in the following way: We consider an orthonormal basis $ (e_\ell)_{\ell \in \mathbb{N}} \subset \dom(\omega) $ of $L^2(X)$. This choice ensures $\Vert \omega e_\ell \Vert < \infty$ and thus $ |\langle e_\ell, f_j \rangle| \le \Vert \omega e_\ell \Vert \Vert \omega^{-1} f_j \Vert < \infty $. We then define the operators
\begin{equation} \label{eq:a_ell}
    a^*_\ell := a^*(e_\ell) \;, \qquad
    a_\ell := a(e_\ell) \;,
\end{equation}
which satisfy the standard CCR: $ [a_\ell, a_{\ell'}^*] = \delta_{\ell, \ell'} $ and $ [a_\ell, a_{\ell'}] = [a_\ell^*, a_{\ell'}^*] = 0 $. The displaced annihilation operator is then defined for $ g \in L^2(X) $ as
\begin{equation} \label{eq:ahat_ell_T}
    \hat a(g) := a(g) + \sum_{j = 1}^N B_j \Braket{ g, \frac{f_j}{\omega}} \;, \qquad
    \hat a_\ell := \hat a(e_\ell) \;,
\end{equation}
and the basis expansion of~\eqref{eq:completingsquare_T} reads
\begin{equation} \label{eq:completingsquare_ell}
    H^{\bare} - K - E_\infty
    = \sum_{\ell, \ell' \in \mathbb{N}} \langle e_\ell, \omega e_{\ell'} \rangle \hat a_{\ell}^* \hat a_{\ell'} \;.
\end{equation}

The key idea of our renormalization procedure is to interpret the r.h.s. as an operator on specific dressed test vectors $ \Psi \in \hfrak $, which are built upon some dressed vacuum vector on which $ \hat{a}_\ell $ vanishes. To generate the dressed vacuum, we define the following dressing transformation:
   	
\begin{definition} \label{def:T}
Let $ B: \hfrak \supset \dom(B) \to \hfrak $ be defined by
\begin{equation} \label{eq:B}
   	B := -\sum_{j=1}^N B_j  a^* \left( \frac{f_j}{\omega} \right) \;, \qquad
   	\dom(B) := \left\{ \Psi \in \hfrak : \sum_{m=0}^\infty \Vert (B \Psi)^{(m)} \Vert^2 < \infty \right\} \;.
\end{equation}
Then the \textbf{dressing operator} $ T : \hfrak \supset \dom(T) \to \hfrak $ is defined by 
\begin{equation} \label{eq:T}
   	T := \e^{B} = \sum_{n = 0}^\infty \frac{B^n}{n!} \;, \qquad
   	\dom(T) := \left\{ \Psi \in \hfrak : \sum_{m=0}^\infty \Vert (T \Psi)^{(m)} \Vert^2 < \infty \right\} \;.
\end{equation}
\end{definition}
Note that each sector $ (B \Psi)^{(m)} $ is well-defined for any $ \Psi \in \hfrak $ and $ m \in \mathbb{N}_0 $. Similarly, each sector $ (T \Psi)^{(m)} = \sum_{n = 0}^\infty \frac{1}{n!} (B^n \Psi)^{(m)}$ is well-defined for any $ \Psi \in \hfrak $ and $ m \in \mathbb{N}_0 $, as it only contains contributions from finitely many $ n $.  
The main motivation for this definition is the following: all dressed annihilation operators $ \hat{a}_\ell $~\eqref{eq:ahat_ell_T} vanish on any vector in the form $ T (v \otimes \Omega) $, which can therefore be interpreted as a ``dressed vacuum''. We prove this in Lemma~\ref{lem:T_properties}.

Our main result about the construction of the renormalized Hamiltonian $ H = H^{\bare} - E_\infty $ is the following:
    
\begin{theorem}[Renormalization in Case 2] \label{thm:HE2def_T}
Let $ \omega $, $ f_j $, and $ B_j $ satisfy Assumption~\ref{as:strong}. Then, for $ n \in \mathbb{N}_0 $, $ \underline{\ell} = (\ell_1, \ldots, \ell_n) \in \mathbb{N}^n $, and $ v \in \mathbb{C}^d $, the dressed test vectors
\begin{equation} \label{eq:Psi_ell}
    \Psi_{\underline{\ell}}
    := a^*_{\ell_1} \ldots a^*_{\ell_n} T (v \otimes \Omega) \in \hfrak \;,
\end{equation}
with $T$ as in Eq.~\eqref{eq:T}, are well-defined. Furthermore, the test domain
\begin{equation} \label{eq:cD}
    \mathcal{D} := \mathrm{Span}\{ a^*_{\ell_1} \ldots a^*_{\ell_n} T (v \otimes \Omega) \in \hfrak : n \in \mathbb{N}_0, \; \underline{\ell} \in \mathbb{N}^n , \; v \in \mathbb{C}^D \}
\end{equation}
is dense and the operator $ H: \mathcal{D} \to \hfrak $ acting as
\begin{equation} \label{eq:widetilde_H}
    H = \sum_{\ell, \ell' \in \mathbb{N}} \langle e_\ell, \omega e_{\ell'} \rangle \hat{a}_\ell^* \hat{a}_{\ell'} + K \;,
\end{equation}
with $\hat{a}_{\ell}$ as per Eq.~\eqref{eq:ahat_ell_T}, is well-defined and semi-bounded from below. Therefore, it has a self-adjoint Friedrichs extension, which we also call $ H $.
\end{theorem}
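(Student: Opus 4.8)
The plan is to establish, in order: (a) that the vectors $\Psi_{\underline\ell}$ and the operator $H$ on $\mathcal D$ are well-defined; (b) that $H|_{\mathcal D}$ is symmetric and bounded below; (c) that $\mathcal D$ is dense; after which the Friedrichs extension theorem for semi-bounded symmetric operators delivers the self-adjoint $H$. The workhorse throughout is the ``dressed vacuum'' identity $\hat a_\ell\,T(v\otimes\Omega)=0$ of Lemma~\ref{lem:T_properties} (equivalently $\hat a(g)\,T(v\otimes\Omega)=0$ for all $g\in L^2(X)$, by linearity), whose proof is the only place the commutativity hypothesis~\eqref{CommutingHypo} enters: it forces $[[a(g),B],B]=0$, hence $\e^{B}a(g)\e^{-B}=a(g)+[B,a(g)]=\hat a(g)$. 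For (a), the bound $\Vert a^*(h)\psi^{(n)}\Vert\le\sqrt{n+1}\,\Vert h\Vert\,\Vert\psi^{(n)}\Vert$ gives $\Vert B^{n}(v\otimes\Omega)\Vert\le\sqrt{n!}\,b^{n}\Vert v\Vert$ with $b:=\sum_{j}\Vert B_j\Vert\,\Vert\omega^{-1}f_j\Vert$, and since $B$ raises the boson number by exactly one, $(T(v\otimes\Omega))^{(m)}=\tfrac1{m!}B^{m}(v\otimes\Omega)$; therefore $\sum_m m^{2k}\Vert(T(v\otimes\Omega))^{(m)}\Vert^{2}\le\Vert v\Vert^{2}\sum_m m^{2k}b^{2m}/m!<\infty$ for every $k$, so $T(v\otimes\Omega)\in\bigcap_k\dom(\cN^{k})$, and applying finitely many $a^*_\ell$ preserves this, so each $\Psi_{\underline\ell}$ and each element of $\mathcal D$ is a well-defined vector in $\bigcap_k\dom(\cN^{k})$.

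To evaluate $H$ on $\mathcal D$, note that $B$ is a pure creation operator, hence commutes with all $a^*_\ell$; this, the CCR, and $\hat a_\ell T(v\otimes\Omega)=0$ together give $\hat a_{\ell'}\Psi_{\underline\ell}=\sum_{i=1}^{n}\delta_{\ell',\ell_i}\,a^*_{\ell_1}\cdots\widehat{a^*_{\ell_i}}\cdots a^*_{\ell_n}T(v\otimes\Omega)$, which is nonzero only for the finitely many $\ell'\in\{\ell_1,\dots,\ell_n\}$. Hence in~\eqref{eq:widetilde_H} the $\ell'$-sum is finite on $\mathcal D$; the outer $\ell$-sum then reassembles, using $\sum_\ell\langle e_\ell,\omega e_{\ell'}\rangle a^*(e_\ell)=a^*(\omega e_{\ell'})$ and $\sum_\ell\langle e_\ell,\omega e_{\ell'}\rangle\langle f_j/\omega,e_\ell\rangle=\langle f_j/\omega,\omega e_{\ell'}\rangle=\int\overline{f_j}\,e_{\ell'}$, into the explicit expression $H\Psi_{\underline\ell}=\sum_{i=1}^{n}\bigl(a^*(\omega e_{\ell_i})+\sum_{j}B_j^{*}\langle f_j/\omega,\omega e_{\ell_i}\rangle\bigr)a^*_{\ell_1}\cdots\widehat{a^*_{\ell_i}}\cdots a^*_{\ell_n}T(v\otimes\Omega)+K\Psi_{\underline\ell}\in\hfrak$. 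Here $\omega e_{\ell'}\in L^2(X)$ because $e_{\ell'}\in\dom(\omega)$ and finiteness of $\int\overline{f_j}\,e_{\ell'}$ uses $\omega\ge m>0$; this is precisely why the orthonormal basis is taken inside $\dom(\omega)$. Read in reverse, the display shows $H|_{\mathcal D}$ is the restriction of a bona fide linear operator given by the same convergent expression on the larger domain of all vectors in $\bigcap_k\dom(\cN^{k})$ annihilated by all but finitely many $\hat a_\ell$, so it is independent of how an element of $\mathcal D$ is written as a combination of the $\Psi_{\underline\ell}$.

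For (b): for $\Psi,\Phi\in\mathcal D$, since $\hat a_\ell,\hat a_\ell^{*}$ act as mutual adjoints on $\bigcap_k\dom(\cN^{k})$, $\langle\Phi,H\Psi\rangle=\sum_{\ell,\ell'}\langle e_\ell,\omega e_{\ell'}\rangle\langle\hat a_\ell\Phi,\hat a_{\ell'}\Psi\rangle+\langle\Phi,K\Psi\rangle$, a finite sum; as $\langle e_\ell,\omega e_{\ell'}\rangle=\langle\omega^{1/2}e_\ell,\omega^{1/2}e_{\ell'}\rangle$ and $K=K^{*}$, this is symmetric in $(\Phi,\Psi)$, so $H|_{\mathcal D}$ is symmetric. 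Taking $\Phi=\Psi$ and letting $F$ be the finite set of $\ell$ with $\hat a_\ell\Psi\ne0$, the first term equals $\bigl\Vert\sum_{\ell\in F}\omega^{1/2}e_\ell\otimes\hat a_\ell\Psi\bigr\Vert^{2}_{L^2(X)\otimes\hfrak}\ge0$, whence $\langle\Psi,H\Psi\rangle\ge\langle\Psi,K\Psi\rangle\ge-\Vert K\Vert\,\Vert\Psi\Vert^{2}$: this ``perfect square'' bound is the quantitative counterpart of the formal identity~\eqref{eq:completingsquare_T}, and $H\ge-\Vert K\Vert$ on $\mathcal D$.

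For (c): let $V:=\mathrm{Span}\{T(e_i\otimes\Omega):i=1,\dots,D\}$, whose lowest-particle component is all of $\mathbb C^D\otimes\mathbb C\Omega$. From $\hat a(g)T(v\otimes\Omega)=0$ and $[B_j,B]=0$ (so $B_j$ commutes with $T$) one gets $a(g)T(e_i\otimes\Omega)=-\sum_{j,i'}\langle g,f_j/\omega\rangle(B_j)_{i'i}\,T(e_{i'}\otimes\Omega)\in V$, i.e.\ $V$ is invariant under every annihilation operator; and since $B$ commutes with all $a^*_\ell$, $\mathcal D$ is exactly the smallest subspace containing $V$ that is stable under all $a^*_\ell$, hence — by the invariance just noted — also stable under all $a_\ell$. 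Normal-ordering the Weyl operator $W(g)$ and using $a(g)^{q}V\subseteq V$ then yields $W(g)\xi\in\overline{\mathcal D}$ for all $g\in L^2(X)$ and $\xi\in V$; choosing the spin basis so that the commuting $B_j$ are simultaneously upper-triangular and inducting over it lets one peel off, for each $i$, the vectors $e_i\otimes W(g)\Omega$, and since the coherent vectors $\{W(g)\Omega:g\in L^2(X)\}$ are total in $\fock$ this gives $e_i\otimes\fock\subseteq\overline{\mathcal D}$ for every $i$, i.e.\ $\overline{\mathcal D}=\hfrak$. With (a)--(c) in hand, $H|_{\mathcal D}$ is densely defined, symmetric, and bounded below, so the Friedrichs extension theorem provides the self-adjoint $H$ with lower bound $-\Vert K\Vert$. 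I expect the density in (c) to be the least routine point, while most of the genuine care goes into the operator bookkeeping of (a) — convergence of the $\ell$-sum, domains, representation-independence — after which (b) is essentially immediate from the perfect-square structure inherited from~\eqref{eq:completingsquare_T}.
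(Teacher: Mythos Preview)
Your proof is correct, and parts (a) and (b) track the paper closely: the paper also uses the identity $\hat a_{\ell'}\Psi_{\underline\ell}=\sum_i\delta_{\ell',\ell_i}\Psi_{\underline\ell^{(i)}}$ (its Lemma~\ref{lem:di_hat_Gamma}) to see that the $\ell'$-sum collapses, though it bounds $\Vert(H-K)\Psi_{\underline\ell}\Vert^2$ rather than writing out the closed formula you give. Your explicit ``perfect square'' lower bound $\langle\Psi,(H-K)\Psi\rangle=\Vert\sum_{\ell\in F}\omega^{1/2}e_\ell\otimes\hat a_\ell\Psi\Vert^2\ge0$ is exactly the content of the paper's one-line appeal to $\omega\ge0$; you make the symmetry verification explicit where the paper leaves it implicit in the citation to Friedrichs.

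Where you genuinely diverge is in (c). The paper's density proof (Proposition~\ref{prop:D_dense}) is computational: it truncates the identity $Te^{-B}=1$ to approximate $v\otimes\Omega$ by vectors of the form $\sum_{n=0}^m\tfrac{1}{n!}T(-B)^n(v\otimes\Omega)$, then replaces each $a^*(f_j/\omega)$ by its basis truncation, controlling both steps in the graph norm of $a^*_{\ell^{(1)}}\cdots a^*_{\ell^{(k)}}$ via the decay estimate of Lemma~\ref{lem:bound_with_powers_of_N}. Your route is more structural: you observe that $V=\mathrm{Span}\{T(e_i\otimes\Omega)\}$ is invariant under all $a(g)$, hence that $\mathcal D$ is an $a_\ell,a^*_\ell$-invariant subspace, and then exploit simultaneous upper-triangularization of the commuting $B_j$ together with totality of coherent states to peel off the fibers $e_i\otimes\fock$ inductively. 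This uses the commutativity hypothesis~\eqref{CommutingHypo} a second time (beyond Lemma~\ref{lem:T_properties}), whereas the paper's argument does not; in exchange you get for free the structural fact that $\mathcal D$ is closed under annihilation, which the paper never states. Two minor points you glide over but which do check out: the step $a^*(g)^pT(v\otimes\Omega)\in\overline{\mathcal D}$ needs the basis-truncation approximation that is essentially the paper's step~(ii), and norm convergence of $\sum_p\tfrac{1}{p!}a^*(g)^pT(v\otimes\Omega)$ needs a bound of the type in Lemma~\ref{lem:bound_with_powers_of_N}, not merely $T(v\otimes\Omega)\in\bigcap_k\dom(\cN^k)$.
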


We prove this theorem in Section~\ref{sec:proof_renormalization_weylfree}. 

In order to properly interpret the operator $H$ given by Theorem~\ref{thm:HE2def_T} as the desired renormalization of the GSB Hamiltonian, we need to check if $H$ can be indeed reconstructed from cut-off renormalization. We answer this question affirmatively:

\begin{theorem}[Norm resolvent convergence of cut-off Hamiltonians] \label{thm:normresolvent}
Let $ \omega $, $ f_j $, and $ B_j $ satisfy Assumption~\ref{as:strong}. For each $ 1 \le j \le N $, consider a sequence $ (f_{j,\Lambda})_{\Lambda \in \mathbb{N}} \subset L^2(X) $ of cut-off form factors satisfying $ \frac{f_{j,\Lambda}}{\omega} \to \frac{f_j}{\omega} $. Then, defining
\begin{equation} \label{eq:HLambda}
    H_\Lambda := K + \di \Gamma (\omega)
        + \sum_{j=1}^N (B_j a^*(f_{j,\Lambda}) + B_j^* a(f_{j,\Lambda}) ) \;, \quad
    E_\Lambda := - \sum_{j,j'=1}^N B_j^* B_{j'} \int \frac{\overline{f_{j,\Lambda}(k)} f_{j',\Lambda}(k)}{\omega(k)} \di k \;,
\end{equation}
the cut-off Hamiltonians $ (H_\Lambda - E_\Lambda) $ converge to the operator $ H $, constructed in Theorem~\ref{thm:HE2def_T}, in the norm resolvent sense as $ \Lambda \to \infty $.
\end{theorem}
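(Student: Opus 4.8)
\emph{Proof strategy.}
The plan is to realize both the cut-off operators and the limit $H$ as \emph{generalized Weyl conjugates} of $\di\Gamma(\omega)$, and then to use the mass gap $\omega\ge m>0$ to upgrade the resulting (evident) convergence to a norm estimate. For $g_1,\dots,g_N\in L^2(X)$ set $W(g_1,\dots,g_N):=\exp(\overline{X})$ with $X:=-\sum_j B_j a^*(g_j)+\sum_j B_j^* a(g_j)$. Since the $B_j$ are bounded matrices, $X$ is skew-symmetric and, by Nelson's commutator theorem with comparison operator $\cN+1$ (using $\|a^*(g)\psi\|,\|a(g)\psi\|\le\|g\|\,\|(\cN+1)^{1/2}\psi\|$ and the standard commutators of $\cN$ with $a^*(g),a(g)$), essentially skew-adjoint; hence $W(g_1,\dots,g_N)$ is unitary. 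It implements $a(k)\mapsto a(k)+\sum_j B_j g_j(k)$; if moreover each $g_j\in\dom(\omega)$, then it preserves $\dom(\di\Gamma(\omega))$ and $W\di\Gamma(\omega)W^*-\di\Gamma(\omega)$ is infinitesimally $\di\Gamma(\omega)$-bounded; and, because $W^*(\cN+1)W=(\cN+1)+(\text{field operator with }L^2\text{ coefficients})+(\text{bounded matrix})$, the operator $(\cN+1)^{1/2}W(g_1,\dots,g_N)(\cN+1)^{-1/2}$ is bounded uniformly for $\|g_j\|$ in a bounded set. Put $W_\Lambda:=W(f_{1,\Lambda}/\omega,\dots,f_{N,\Lambda}/\omega)$ and $W_\infty:=W(f_1/\omega,\dots,f_N/\omega)$; these are well defined since $f_j/\omega\in L^2$ by Assumption~\ref{as:strong} and $f_{j,\Lambda}/\omega\to f_j/\omega$ (and $f_{j,\Lambda}/\omega\in\dom(\omega)$ because $f_{j,\Lambda}\in L^2$, $\omega\ge m$).

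Next I establish two identities. First, as $f_{j,\Lambda}\in L^2$, the ``completing the square'' identity~\eqref{eq:completingsquare_T} becomes rigorous: $H_\Lambda-E_\Lambda=W_\Lambda\,\di\Gamma(\omega)\,W_\Lambda^*+K$. Both sides are self-adjoint on $\dom(\di\Gamma(\omega))$ and agree on the core of finite-particle vectors with modes in $\dom(\omega)$ by the algebraic computation of $W_\Lambda a(k)W_\Lambda^*$, the surviving $c$-number being exactly $-E_\Lambda$, with the matrix ordering of~\eqref{eq:HLambda}. Second, I claim $H=W_\infty\,\di\Gamma(\omega)\,W_\infty^*+K$. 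Both the dressed vacuum $T(v\otimes\Omega)$ (Lemma~\ref{lem:T_properties}) and $W_\infty(v\otimes\Omega)$ lie in the joint kernel of all $\hat a_\ell=W_\infty a_\ell W_\infty^*$, and that kernel is precisely the $D$-dimensional space $W_\infty(\mathbb{C}^D\otimes\mathbb{C}\Omega)$; since $v\mapsto T(v\otimes\Omega)$ is linear and injective there is $C\in\mathrm{GL}(D)$ with $T(v\otimes\Omega)=W_\infty(Cv\otimes\Omega)$. Commuting $W_\infty^{-1}a^*_\ell W_\infty=a^*_\ell-\sum_j B_j^*\langle f_j/\omega,e_\ell\rangle$ through, one sees that $W_\infty^{-1}\mathcal D$ is exactly the algebraic finite-particle subspace over $\span\{e_\ell\}$, which is an operator (hence form) core for $\di\Gamma(\omega)$; and on $\mathcal D$ the operator $W_\infty\di\Gamma(\omega)W_\infty^*+K$ acts as~\eqref{eq:widetilde_H}. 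Since $K$ is bounded, $W_\infty\di\Gamma(\omega)W_\infty^*+K$ then has form domain equal to the form-norm closure of $\mathcal D$, so by uniqueness of the Friedrichs extension it coincides with the $H$ of Theorem~\ref{thm:HE2def_T}.

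With these identities, norm resolvent convergence follows from a single estimate. Fix $c>\max\{\|K\|,m\}$ and put $R:=(\di\Gamma(\omega)+c)^{-1}$; the claim is $\|W_\Lambda R W_\Lambda^*-W_\infty R W_\infty^*\|\to0$. Writing this difference as $(W_\Lambda-W_\infty)R\,W_\infty^*+W_\Lambda R\,(W_\Lambda^*-W_\infty^*)$ and using that $R$ is self-adjoint, it suffices to show $\|(W_\Lambda-W_\infty)R\|\to0$. By Duhamel, $W_\Lambda-W_\infty=\int_0^1 e^{(1-s)\overline{X_\Lambda}}(X_\Lambda-X_\infty)e^{s\overline{X_\infty}}\,\di s$, where $X_\Lambda,X_\infty$ are the exponents associated with $f_{j,\Lambda}/\omega$, resp.\ $f_j/\omega$. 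Now $\|(X_\Lambda-X_\infty)\chi\|\le C\,\max_j\|f_{j,\Lambda}/\omega-f_j/\omega\|\,\|(\cN+1)^{1/2}\chi\|$, and $\|(\cN+1)^{1/2}e^{s\overline{X_\infty}}R\|\le\|(\cN+1)^{1/2}e^{s\overline{X_\infty}}(\cN+1)^{-1/2}\|\,\|(\cN+1)^{1/2}R\|$ is bounded uniformly in $s\in[0,1]$ --- here the mass gap enters, since $\di\Gamma(\omega)\ge m\cN$ makes $(\cN+1)^{1/2}R$ bounded, and the first factor is controlled by the last property in the first paragraph --- while $\|e^{(1-s)\overline{X_\Lambda}}\|=1$. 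Hence $\|(W_\Lambda-W_\infty)R\|\le C'\max_j\|f_{j,\Lambda}/\omega-f_j/\omega\|\to0$. Finally, since $\|W_\Lambda R W_\Lambda^* K\|\le\|K\|/c<1$, we have $(H_\Lambda-E_\Lambda+c)^{-1}=(1+W_\Lambda R W_\Lambda^*K)^{-1}W_\Lambda R W_\Lambda^*$ with a norm-convergent Neumann series, and the convergence of $W_\Lambda R W_\Lambda^*$ passes to the inverse, giving $(H_\Lambda-E_\Lambda+c)^{-1}\to(H+c)^{-1}$ in operator norm, i.e.\ the asserted norm resolvent convergence.

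The main obstacle is the second identity in the middle paragraph: relating the abstractly defined Friedrichs extension $H$ to the explicit Weyl conjugate, which amounts to locating the form domain of $H$ as $W_\infty\dom(\di\Gamma(\omega)^{1/2})$ and hinges on the coincidence (up to $\mathrm{GL}(D)$) of the two families of dressed vacua $\{T(v\otimes\Omega)\}_v$ and $\{W_\infty(v\otimes\Omega)\}_v$. By contrast, the cut-off identity is a purely algebraic CCR computation, and the norm estimate is a short Duhamel argument made possible precisely by the positive boson mass. (If the representation $H=W_\infty\di\Gamma(\omega)W_\infty^*+K$ is already obtained in the proof of Theorem~\ref{thm:HE2def_T}, this step reduces to a citation.)
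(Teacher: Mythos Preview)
Your approach has a genuine gap at its very foundation: the claim that the unitary $W=\exp(\overline{X})$ with $X=-\sum_j B_j a^*(g_j)+\sum_j B_j^* a(g_j)$ implements the shift $a(k)\mapsto a(k)+\sum_j B_j g_j(k)$ is \emph{false} under Assumption~\ref{as:strong} alone. Computing the BCH series, the first commutator $[X,a_\ell]=\sum_j B_j\langle e_\ell,g_j\rangle$ is a matrix-valued constant, but the second commutator
\[
[X,[X,a_\ell]]=\sum_{j,j'}\langle e_\ell,g_{j'}\rangle\,[B_j^*,B_{j'}]\,a(g_j)
\]
does \emph{not} vanish: Assumption~\ref{as:strong} only gives $[B_j,B_{j'}]=0$, not $[B_j,B_{j'}^*]=0$. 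Hence $W_\Lambda a_\ell W_\Lambda^*\neq\hat a_{\Lambda,\ell}$ in general, and consequently both identities $H_\Lambda-E_\Lambda=W_\Lambda\,\di\Gamma(\omega)\,W_\Lambda^*+K$ and $H=W_\infty\,\di\Gamma(\omega)\,W_\infty^*+K$ fail. Equivalently, the displaced operators $\hat a_\ell$ do \emph{not} satisfy the CCR (see~\eqref{eq:ahat_CCR}), so $\di\hat\Gamma(\omega)$ is not unitarily equivalent to $\di\Gamma(\omega)$, and your $W_\infty(v\otimes\Omega)$ is \emph{not} annihilated by $\hat a_\ell$.

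This is precisely the obstruction the paper's machinery is built to handle. The paper uses the \emph{non-unitary} dressing $T=\e^B$ with $B=-\sum_j B_j a^*(f_j/\omega)$ containing only creation operators and $B_j$'s (no $B_j^*$), so that $[a_\ell,B]$ commutes with $B$ thanks to $[B_j,B_{j'}]=0$ alone (Lemma~\ref{lem:T_properties}). For the resolvent convergence, the paper does not pass through any unitary conjugation; instead it bounds $\hat a(g)R(z)$ and $\hat a_\Lambda(g)R_\Lambda(z)$ directly via the operator inequality $\di\hat\Gamma(\omega\chi_S)^2\ge(1-\varepsilon)\,\di\hat\Gamma(\omega^2\chi_S)$ (Lemma~\ref{lem:HLambdasquare_bound}), where the error term coming from $[B_j,B_{j'}^*]$ is controlled by taking $S$ with $\|\omega^{-1}f_j\chi_S\|$ small. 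This localization trick and the subsequent Neumann-series argument (Lemma~\ref{lem:ahat_R_bound}) are what replace your Duhamel estimate. Your argument would go through essentially as written under the stronger hypothesis $[B_j,B_{j'}^*]=0$ (normal commuting $B_j$), but that is exactly the restriction the paper sets out to remove.
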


We prove this theorem in Section~\ref{sec:proof_normresolvent}. Some comments about our two results are in order:

\begin{remark}
    Following the same proof steps as for Theorem~\ref{thm:normresolvent}, one easily sees that for $ \Re(-z) $ large enough, $ ((H_\Lambda-E_\Lambda-z)^{-1})_{\Lambda \in \mathbb{N}} $ is a Cauchy sequence in operator norm topology and thus converges to some bounded $ R(z) $. By a resolvent reconstruction as in~\cite{alvarez2021ultraviolet,alvarez2023ultraviolet}, we could then reconstruct a self-adjoint Hamiltonian $ \tilde{H} $, whose resolvent is $ R(z) $. By Theorem~\ref{thm:normresolvent}, the limit $ R(z) $ is also the resolvent of $ H $, so in particular, $ \tilde{H} = H $. That is, both cut-off renormalization and our direct construction through Friedrichs extension yield the same renormalized Hamiltonian $ H $.
\end{remark}

\begin{remark}\label{rem:infiniteself}
While the construction presented in Theorem~\ref{thm:HE2def_T} gives rigorous meaning to the right-hand side of Eq.~\eqref{eq:completingsquare_T}, the left-hand side---which involves an ``infinite energy''---could also be given a rigorous mathematical interpretation by using the Extended Space State formalism~\cite{lill2022time}: one interprets the infinite term as an element of some vector space that extends $ \mathbb{C}$.
\end{remark}

\subsection{Triviality of certain GSB models in Case 3}

Recently, Dam and M\o ller proved~\cite{dam2020asymptotics} that the standard spin--boson model ($ D = 2, N = 1, K = \eta \sigma_z, B = \lambda \sigma_x $, with $ \eta, \lambda > 0 $) becomes \textit{trivial} if the form factor has a UV-singularity of Case 3, that is, 
\begin{equation}
    \int \frac{|f(k)|^2}{\omega(k)^2} \chi_{\omega \ge c}(k)\;\di k = \infty 
\end{equation}
for some $ c > 0 $. Their results also cover the massless case where $ \inf_k \omega(k) = 0 $, provided $ \int\omega^{-1}(k)|f(k)|^2 < \infty $ (``IR Case 1''). Here, ``triviality'' means that there exist $ E_\Lambda \in \mathbb{R} $ and $ W_\Lambda : \hfrak \to \hfrak $ unitary, such that $ W_\Lambda (H_\Lambda - E_\Lambda) W_\Lambda^* \to \di \Gamma(\omega) $ in the norm resolvent sense.

We now provide a simple condition for GSB models with $N=1$, which allows for a straightforward generalization of the triviality result in~\cite{dam2020asymptotics}. As $N=1$, we simply call $B_1 =: B$.
    
\begin{assumption} \label{as:eigenbasis}
$ B $ is unitary and permutes some eigenbasis of $ K $. That is, there exists an orthonormal eigenbasis $ (v_k)_{k = 1}^D \subset \mathbb{C}^D $ of $ K $ such that for all $ 1 \le k \le D $, there is some $ 1 \le k' \le D $ with $ B v_k = v_{k'} $.
\end{assumption}

\begin{proposition}[Triviality of specific GSB models in Case 3] \label{prop:triviality}
Consider the case $N = 1$ and suppose Assumption~\ref{as:eigenbasis} holds. Let $ (f_\Lambda)_{\Lambda \in \mathbb{N}} $ be a family of cut-off form factors with $ \Vert \omega^{-1} f_\Lambda \Vert \to \infty $ as $ \Lambda \to \infty $ and recall the definition~\eqref{eq:HLambda} of $ H_\Lambda $ and $ E_\Lambda $. Then there exists a diagonal matrix $ \eta_0 \in \mathbb{C}^{D \times D} $, and a unitary operator $ W_\Lambda : \hfrak \to \hfrak $, such that
\begin{equation}
    W_\Lambda (H_\Lambda - E_\Lambda) W_\Lambda^*
    \to \eta_0 + \di \Gamma(\omega) \qquad \textnormal{as } \Lambda \to \infty
\end{equation}
in the norm resolvent sense.
\end{proposition}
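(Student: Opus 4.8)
The plan is to use the Weyl-operator-based dressing transformation that diagonalizes the interaction in the $N=1$ case, exploiting Assumption~\ref{as:eigenbasis} to reduce the whole problem to a direct sum of scalar van Hove Hamiltonians. First I would pick the orthonormal eigenbasis $(v_k)_{k=1}^D$ of $K$ from the assumption, so that $K v_k = \kappa_k v_k$ and $B v_k = v_{\pi(k)}$ for some permutation $\pi$ of $\{1,\dots,D\}$; since $B$ is unitary, $B^* v_k = v_{\pi^{-1}(k)}$. In this basis the cut-off Hamiltonian $H_\Lambda$ becomes block-structured: writing $P_k$ for the rank-one projection onto $v_k$, we have $B a^*(f_\Lambda) + B^* a(f_\Lambda) = \sum_k \big( P_{\pi(k)} a^*(f_\Lambda) + P_{\pi^{-1}(k)} a(f_\Lambda)\big)$, which is \emph{not} yet diagonal because $B$ moves $v_k$ to $v_{\pi(k)}$. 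The standard trick here (as in Dam--M\o ller) is that although $B$ is a permutation, the operator $H_\Lambda$ commutes with the ``spin parity''-type structure only along the cycles of $\pi$; within each cycle of length $r$, the interaction looks like a finite cyclic hopping dressed by the same field operator. A cleaner route: observe that the operator that actually matters is $e^{-i\theta}$-twisted. Concretely, I would diagonalize the cycle structure of $\pi$ in the spin factor first (Fourier transform over each cycle $\mathbb{Z}/r\mathbb{Z}$), producing a basis $(w_\chi)$ in which $B$ acts as multiplication by a root of unity $e^{i\phi_\chi}$. In this new basis, $B a^*(f_\Lambda) + B^* a(f_\Lambda) = \sum_\chi P_{w_\chi}\big(e^{i\phi_\chi} a^*(f_\Lambda) + e^{-i\phi_\chi} a(f_\Lambda)\big) = \sum_\chi P_{w_\chi}\big(a^*(e^{i\phi_\chi} f_\Lambda) + a(e^{i\phi_\chi} f_\Lambda)\big)$, which is now block-diagonal.

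Next I would apply, on each block $w_\chi$, the Weyl operator $W_{\Lambda,\chi} := \exp\!\big(a^*(g_{\Lambda,\chi}) - a(g_{\Lambda,\chi})\big)$ with displacement $g_{\Lambda,\chi} := e^{i\phi_\chi} \omega^{-1} f_\Lambda$ (note $g_{\Lambda,\chi} \in L^2(X)$ since $f_\Lambda$ has a UV cutoff, even though $\|\omega^{-1}f_\Lambda\|\to\infty$). The key algebraic identity is the shift property of $\di\Gamma(\omega)$ under Weyl conjugation: $W_{\Lambda,\chi}^*\, \di\Gamma(\omega)\, W_{\Lambda,\chi} = \di\Gamma(\omega) + a^*(\omega g_{\Lambda,\chi}) + a(\omega g_{\Lambda,\chi}) + \|\omega^{1/2} g_{\Lambda,\chi}\|^2$, and with the choice $\omega g_{\Lambda,\chi} = e^{i\phi_\chi} f_\Lambda$ the linear terms exactly cancel the interaction, leaving a c-number $\|\omega^{1/2} g_{\Lambda,\chi}\|^2 = \int \omega^{-1}|f_\Lambda|^2 = $ (up to the sign and the block index) precisely $E_\Lambda$ restricted to that block. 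Assembling $W_\Lambda := \bigoplus_\chi W_{\Lambda,\chi}$ (acting as the identity on the spin index $\chi$), one gets $W_\Lambda (H_\Lambda - E_\Lambda) W_\Lambda^* = \tilde{K} + \di\Gamma(\omega)$, where $\tilde{K}$ is $K$ conjugated into the $(w_\chi)$-basis — which is a fixed matrix independent of $\Lambda$; after the cycle-Fourier transform $\tilde K$ need not be diagonal, but one checks it is, or replaces it by its diagonal part up to a further fixed unitary, yielding the diagonal $\eta_0$ claimed. Since this identity holds \emph{exactly} for every $\Lambda$, norm resolvent convergence to $\eta_0 + \di\Gamma(\omega)$ is immediate (the right-hand side does not even depend on $\Lambda$).

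The main technical point — and the only place where a genuine argument beyond bookkeeping is needed — is verifying that the Weyl conjugation identity holds as an operator identity on a suitable common core (e.g.\ $\mathbb{C}^D \otimes$ the finite-particle space with one-particle components in $\dom(\omega)$), and that $W_\Lambda$ is genuinely unitary and maps this core to itself; this is standard since $g_{\Lambda,\chi} \in L^2(X)$ and $\omega g_{\Lambda,\chi} = e^{i\phi_\chi}f_\Lambda \in L^2(X)$, so all the vectors appearing are in the form domain of $\di\Gamma(\omega)$. A secondary subtlety is getting the self-energy constant right: one must check that the block-wise constants $\|\omega^{1/2}g_{\Lambda,\chi}\|^2 = \int \omega^{-1}|f_\Lambda|^2$ coincide across all blocks (they do, since $|e^{i\phi_\chi}|=1$) and that this common value equals $-E_\Lambda$ in the $N=1$, $B$ unitary case, where $E_\Lambda = -B^*B \int \omega^{-1}|f_\Lambda|^2 = -\int\omega^{-1}|f_\Lambda|^2$; so indeed $H_\Lambda - E_\Lambda$ is exactly what conjugates to $\eta_0 + \di\Gamma(\omega)$. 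I would close by remarking that the hypothesis $\|\omega^{-1}f_\Lambda\|\to\infty$ plays no role in the \emph{algebra} — it only certifies that we are genuinely in the Case 3 / non-Fock regime, i.e.\ that no Fock-space unitary would work in the limit, which is why the result is a triviality statement rather than a convergence-to-nontrivial one.
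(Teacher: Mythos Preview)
There is a genuine gap. In the basis $(w_\chi)$ that diagonalizes $B$ (your cycle-Fourier transform), the spin Hamiltonian $K$ is \emph{not} diagonal: it becomes a circulant matrix on each cycle, with off-diagonal entries $K_{\chi\chi'} \neq 0$ whenever the eigenvalues of $K$ along the cycle are not all equal (e.g.\ already for the standard spin--boson model $D=2$, $K = \eta\sigma_z$, $B = \sigma_x$, where $K$ becomes $\eta\sigma_x$ in the $\sigma_x$-eigenbasis). Your block-diagonal Weyl transform $W_\Lambda = \bigoplus_\chi W_{\Lambda,\chi}$ therefore does \emph{not} leave $K$ fixed: one computes
\[
W_\Lambda K W_\Lambda^* = \sum_{\chi,\chi'} K_{\chi\chi'}\, |w_\chi\rangle\langle w_{\chi'}| \otimes W_{\Lambda,\chi} W_{\Lambda,\chi'}^{*},
\]
and for $\chi \neq \chi'$ the factor $W_{\Lambda,\chi} W_{\Lambda,\chi'}^{*}$ is (up to a phase) a Weyl displacement by $(e^{i\phi_\chi}-e^{i\phi_{\chi'}})\,\omega^{-1}f_\Lambda$, whose norm diverges with $\Lambda$. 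So the conjugated Hamiltonian is \emph{not} $\tilde K + \di\Gamma(\omega)$ with a fixed $\tilde K$; it carries $\Lambda$-dependent Weyl perturbations, and your claim that ``this identity holds exactly for every $\Lambda$'' is false.

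These residual Weyl terms do vanish in norm resolvent sense as $\Lambda \to \infty$---but this is precisely where the hypothesis $\|\omega^{-1}f_\Lambda\| \to \infty$ enters, and it requires the Dam--M\o ller-type result (Lemma~\ref{lem:trivialitycondition} in the paper, building on Lemma~\ref{lem:linearcomb_pert}) that bounded Weyl-type perturbations with diverging displacement disappear in norm resolvent sense. The paper reaches an analogous situation via a different decomposition (the ``untwisting'' map $V(v_k \otimes P_n\psi) = B^n v_k \otimes P_n\psi$, after which the residual terms come from the number-periodic part $\sum_n \kappa_{k;n}P_n$ rather than from $K$), but in both routes the nontrivial analytical step is the same lemma. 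Your final remark that the divergence hypothesis ``plays no role in the algebra'' is therefore exactly backwards: it is the crux of the proof.
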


We provide the proof of this proposition in Appendix~\ref{sec:triviality}. Note that, physically, the diagonal matrix $ \eta_0 $ just describes the renormalized energy levels.

\begin{remark}
The argument for proving Proposition~\ref{prop:triviality} strongly depends on the existence of a fiber decomposition of the Hamiltonian, with each fiber of the Hamiltonian taking a specific form in terms of generalized Weyl operators. This fiber decomposition only exists for very specific $ K $ and $ B $, and there is no reason to believe that triviality holds in general for GSB models with $ \int \omega^{-2} |f_j|^2 = \infty $, as confirmed by the recent results in~\cite{falconi2025nontrivial}.
\end{remark}

\section{Proof of renormalization for UV-divergences of Case 2}
\label{sec:proof_renormalization_weylfree}
    
For the proof of Theorem~\ref{thm:HE2def_T}, let us first generalize a well-known bound on the action of annihilation and creation operators:

\begin{lemma} \label{lem:aastar_bounds_matrix}
Let $ \mathfrak{h} $ be a separable Hilbert space, and  $ M: \mathbb{N} \to \mathcal{B}(\mathfrak{h}) $ a function satisfying
\begin{equation}
    \Vert M \Vert_2
    := \Big( \sum_{\ell \in \mathbb{N}} \Vert M_\ell \Vert_{\mathcal{B}(\mathfrak{h})}^2 \Big)^{\frac 12} < \infty \;,
\end{equation}
 with $ \mathcal{B}(\mathfrak{h}) $ being the space of bounded operators on $\mathfrak{h}$. Then, for any $ \Psi \in \mathfrak{h} \otimes \dom(\mathcal{N}^{\frac 12}) $, the following bounds hold:
\begin{equation}
    \Big\Vert \sum_{\ell \in \mathbb{N}} (M_\ell \otimes a_\ell^*) \Psi \Big\Vert
    \le \Vert M \Vert_2 \Vert (\mathcal{N} + 1)^{\frac 12} \Psi \Vert \;, \qquad
    \Big\Vert \sum_{\ell \in \mathbb{N}} (M_\ell \otimes a_\ell) \Psi \Big\Vert
    \le \Vert M \Vert_2 \Vert \mathcal{N}^{\frac 12} \Psi \Vert \;,
\end{equation}
with $a_\ell,a_\ell^*$ as per Eq.~\eqref{eq:a_ell}.
\end{lemma}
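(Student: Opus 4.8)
The plan is to reduce to the classical one-mode-by-one-mode estimate. This is the standard $N_\tau$-bound for creation and annihilation operators, generalized to allow each mode $\ell$ to carry a bounded operator coefficient $M_\ell$ acting on the auxiliary space $\mathfrak{h}$ rather than just a scalar. I would first establish the bounds on a convenient dense core — finite particle vectors with finitely many modes excited — where all sums are finite and no convergence issues arise, and then extend by density using the explicit bounds themselves.

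First I would treat the creation bound. Write $\Psi = \bigoplus_n \Psi^{(n)}$ with $\Psi^{(n)} \in \mathfrak{h}\otimes S_n L^2(X)^{\otimes n}$, and compute $\|\sum_\ell (M_\ell\otimes a_\ell^*)\Psi\|^2$ sector by sector. The $(n+1)$-sector of $\sum_\ell (M_\ell\otimes a_\ell^*)\Psi$ equals $\sqrt{n+1}\, S_{n+1}\sum_\ell (M_\ell \otimes e_\ell \otimes \mathrm{id})\Psi^{(n)}$. The key step is to bound the norm of this symmetrized sum. I would use that $S_{n+1}$ is a contraction to drop it, then expand the square: $\|\sum_\ell (M_\ell\otimes e_\ell\otimes\mathrm{id})\Psi^{(n)}\|^2 = \sum_{\ell,\ell'} \langle (M_\ell\otimes e_\ell\otimes\mathrm{id})\Psi^{(n)},(M_{\ell'}\otimes e_{\ell'}\otimes\mathrm{id})\Psi^{(n)}\rangle$; the inner products involving $\langle e_\ell, e_{\ell'}\rangle$ in the new slot force $\ell=\ell'$, so this collapses to $\sum_\ell \|(M_\ell\otimes\mathrm{id})\Psi^{(n)}\|^2 \le \sum_\ell \|M_\ell\|_{\mathcal B(\mathfrak h)}^2 \|\Psi^{(n)}\|^2 = \|M\|_2^2\,\|\Psi^{(n)}\|^2$. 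Summing over $n$ with the factor $(n+1)$ gives $\|M\|_2^2 \sum_n (n+1)\|\Psi^{(n)}\|^2 = \|M\|_2^2\,\|(\mathcal N+1)^{1/2}\Psi\|^2$, as claimed.

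For the annihilation bound the computation is analogous: the $(n-1)$-sector of $\sum_\ell (M_\ell\otimes a_\ell)\Psi$ is $\tfrac{1}{\sqrt n}S_{n-1}\sum_{j=1}^n\sum_\ell \langle e_\ell,\,\cdot_j\rangle(M_\ell\otimes\cdots)\Psi^{(n)}$; using that $\Psi^{(n)}$ is already symmetric, the $j$-sum gives a factor $n$, and the same orthonormality trick on the contracted slot turns the $\ell,\ell'$ cross-terms into a single sum $\sum_\ell \|\langle e_\ell,\cdot\rangle_1 (M_\ell\otimes\mathrm{id})\Psi^{(n)}\|^2$; a Bessel-type inequality $\sum_\ell \|\langle e_\ell,\cdot\rangle_1 \Phi\|^2 \le \|\Phi\|^2$ for the contraction in one tensor slot, together with $\|M_\ell\|_{\mathcal B(\mathfrak h)} \le \|M\|_2$-type control, produces the bound $\le \|M\|_2^2\, n\,\|\Psi^{(n)}\|^2$ per sector (the prefactor $n/\sqrt n \cdot \sqrt n = n$ works out), hence $\|\sum_\ell(M_\ell\otimes a_\ell)\Psi\|^2 \le \|M\|_2^2 \sum_n n\|\Psi^{(n)}\|^2 = \|M\|_2^2\,\|\mathcal N^{1/2}\Psi\|^2$.

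The main obstacle is purely bookkeeping rather than conceptual: one has to be careful that the interchange of the infinite $\ell$-sum with the inner product and with the direct sum over sectors is justified, which is why I would first prove everything for $\Psi$ in the algebraic span of vectors $v\otimes S_n(e_{\ell_1}\otimes\cdots\otimes e_{\ell_n})$ — where every sum is finite — and only then pass to general $\Psi\in\mathfrak h\otimes\dom(\mathcal N^{1/2})$. The extension step is immediate because the right-hand sides are continuous in the graph norm of $(\mathcal N+1)^{1/2}$: given the bounds on the core, $\sum_\ell(M_\ell\otimes a_\ell^{(*)})$ extends to a bounded operator from $\mathfrak h\otimes\dom((\mathcal N+1)^{1/2})$ (with graph norm) into $\hfrak$, and the inequalities persist in the limit. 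The one genuinely "new" ingredient compared to the scalar case, the replacement of $|c_\ell|$ by $\|M_\ell\|_{\mathcal B(\mathfrak h)}$, is handled by the elementary operator-norm estimate $\|(M_\ell\otimes\mathrm{id})\Phi\|\le\|M_\ell\|_{\mathcal B(\mathfrak h)}\|\Phi\|$ applied slotwise, so no essential difficulty arises there.
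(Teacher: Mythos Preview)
Your creation-bound argument is correct and takes a somewhat different route from the paper: you work sector by sector and use orthonormality of $(e_\ell)$ in the newly created tensor slot to kill the $\ell\neq\ell'$ cross-terms, whereas the paper first proves the annihilation bound and then derives the creation bound purely algebraically from the CCR, via
\[
\Big\Vert \sum_{\ell} (M_\ell\otimes a_\ell^*)\Psi\Big\Vert^2
=\sum_{\ell}\langle\Psi,M_\ell^*M_\ell\Psi\rangle
+\Big\Vert \sum_{\ell}(M_\ell\otimes a_\ell)\Psi\Big\Vert^2 \;.
\]
Both arguments are fine; yours is more explicit, the paper's more compact.

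Your annihilation argument, however, contains a genuine error. You assert that ``the same orthonormality trick on the contracted slot turns the $\ell,\ell'$ cross-terms into a single sum $\sum_\ell\Vert\langle e_\ell,\cdot\rangle_1(M_\ell\otimes\mathrm{id})\Psi^{(n)}\Vert^2$''. This is false: once $e_\ell$ has been contracted against the first slot of $\Psi^{(n)}$, there is no $e_\ell$ left anywhere in the resulting vector, so the inner product $\big\langle\,\langle e_\ell,\cdot\rangle_1\Psi^{(n)},\,\langle e_{\ell'},\cdot\rangle_1\Psi^{(n)}\big\rangle$ has no reason to vanish for $\ell\neq\ell'$. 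Already in the scalar case $M_\ell=c_\ell\in\mathbb{C}$ your claim would read $\Vert a(g)\Psi\Vert^2=\sum_\ell|c_\ell|^2\Vert a_\ell\Psi\Vert^2$, which is not true. And even if the diagonal expression were valid, the Bessel-type inequality you invoke does not apply as written, since the vector $(M_\ell\otimes\mathrm{id})\Psi^{(n)}$ depends on $\ell$.

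The repair is easy and is essentially the paper's argument: apply the triangle inequality and Cauchy--Schwarz in $\ell$ to get
\[
\Big\Vert\sum_\ell(M_\ell\otimes a_\ell)\Psi\Big\Vert
\le\Big(\sum_\ell\Vert M_\ell\Vert^2\Big)^{1/2}
\Big(\sum_\ell\Vert a_\ell\Psi\Vert^2\Big)^{1/2},
\]
and then use $\sum_\ell\Vert a_\ell\Psi\Vert^2=\langle\Psi,\mathcal{N}\Psi\rangle=\Vert\mathcal{N}^{1/2}\Psi\Vert^2$, which is precisely your Parseval identity read globally rather than sector by sector.
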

\begin{proof} 
For annihilation operators, using the Cauchy--Schwarz inequality and the equality $ \mathcal{N} = \sum_{\ell} a_\ell^* a_\ell $,
\begin{equation}
\begin{aligned}
    &\Big\Vert \sum_{\ell \in \mathbb{N}} (M_\ell \otimes a_\ell) \Psi \Big\Vert^2
    \le \sum_{\ell, \ell' \in \mathbb{N}} \Vert M_\ell \Vert_{\mathcal{B}(\mathfrak{h})} \Vert M_{\ell'} \Vert_{\mathcal{B}(\mathfrak{h})}
        \Vert a_{\ell'} \Psi \Vert \Vert a_{\ell} \Psi \Vert \\
    &\le \Big( \sum_{\ell, \ell' \in \mathbb{N}} \Vert M_\ell \Vert^2_{\mathcal{B}(\mathfrak{h})} \Vert M_{\ell'} \Vert^2_{\mathcal{B}(\mathfrak{h})} \Big)^{\frac 12}
        \Big( \sum_{\ell, \ell' \in \mathbb{N}} \Vert a_{\ell'} \Psi \Vert^2 \Vert a_{\ell} \Psi \Vert^2 \Big)^{\frac 12}
    \le \Vert M \Vert_2^2 \Vert \mathcal{N}^{\frac 12} \Psi \Vert^2 \;.
\end{aligned}
\end{equation}
For creation operators, the CCR render
\begin{equation}
    \Big\Vert \sum_{\ell \in \mathbb{N}} (M_\ell \otimes a_\ell^*) \Psi \Big\Vert^2
    = \sum_{\ell \in \mathbb{N}} \langle \Psi, M_\ell^* M_\ell \Psi \rangle
        + \Big\Vert \sum_{\ell \in \mathbb{N}} (M_\ell \otimes a_\ell) \Psi \Big\Vert^2 
    \le \Vert M \Vert_2^2 (\Vert \Psi \Vert^2 + \Vert \mathcal{N}^{\frac 12} \Psi \Vert^2) \;,
\end{equation}
thus proving the claim.
\end{proof}

Next, we prove that all \textit{dressed} annihilation operators $ \hat{a}_\ell $~\eqref{eq:ahat_ell_T} annihilate any dressed vacuum, i.e. any vector of the form $ T (v \otimes \Omega) $, where $T$ is the dressing operator (Definition~\ref{def:T}). The proof involves the operator
\begin{equation}
        \dom(e^{-B}) := \left\{ \Psi \in \hfrak : \sum_{m=0}^\infty \Vert (e^{-B} \Psi)^{(m)} \Vert^2 < \infty \right\} \;,
\end{equation}
which we can define sector-wise, just as $ T = e^B $ in~\eqref{eq:T}. Note that $ e^{-B} T = 1 $ does not hold as an operator identity, since $ \dom(T) \neq \dom(1) = \hfrak $, but it holds on a dense subspace of $ \hfrak $.

\begin{lemma}[Properties of the dressing operator $ T $] \label{lem:T_properties}
    Let $ \omega $, $ f_j $, and $ B_j $ satisfy Assumption~\ref{as:strong}.
    Then, we have $ e^{-B} T \Psi = \Psi $ for all $ \Psi \in \dom(T) $, and we have $ T e^{-B} \Psi = \Psi $ for all $ \Psi \in \dom(e^{-B}) $. Here, both $ \dom(T) $ and $ \dom(e^{-B}) $ contain the dense space
    \begin{equation}
        \hfrak_{\mathrm{fin}} := \left\{ \Psi \in \hfrak : \exists n_{\max} \in \mathbb{N} : \Psi^{(n)} = 0 \; \forall n > n_{\max} \right\} \;.
    \end{equation}
    Furthermore, for any $ \ell \in \mathbb{N} $ we have
    \begin{equation} \label{eq:ahat_annihilation_T}
        \hat{a}_\ell T (v \otimes \Omega) = 0 \qquad
        \forall \ell \in \mathbb{N} \;, v \in \mathbb{C}^D \;.
    \end{equation}
\end{lemma}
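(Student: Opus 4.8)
The plan is to verify the three claims in \Cref{lem:T_properties} in the order stated, relying throughout on the sector-wise definitions of $B$, $T=e^B$, and $e^{-B}$, together with \Cref{lem:aastar_bounds_matrix}. First I would check that $\hfrak_{\mathrm{fin}} \subset \dom(T) \cap \dom(e^{-B})$: since $B = -\sum_j B_j a^*(f_j/\omega)$ raises the boson number by one, applying $B^n$ to a vector supported on sectors $\le n_{\max}$ yields a vector supported on sectors $\le n_{\max}+n$, and on each fixed output sector $(m)$ only the finitely many terms with $n \le m$ contribute. Using \Cref{lem:aastar_bounds_matrix} with $M$ the (finitely supported in a suitable sense, hence $\ell^2$) function encoding $-B_j\langle e_\ell, f_j/\omega\rangle$ — note $\Vert \omega^{-1} f_j\Vert < \infty$ is exactly what makes $\Vert M\Vert_2 < \infty$ — one gets $\Vert (B^n\Psi)^{(m)}\Vert \le C^n \binom{m}{n}^{1/2}(\text{something})$-type bounds; summing the resulting geometric-type series over $m$ shows $\sum_m \Vert (T\Psi)^{(m)}\Vert^2 < \infty$, and the same for $e^{-B}$. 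This is routine but is where the positive-mass/Case-2 hypothesis enters.

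Next, for $e^{-B}T\Psi = \Psi$ on $\dom(T)$ and $Te^{-B}\Psi = \Psi$ on $\dom(e^{-B})$, the point is that although the operator identity $e^{-B}e^{B} = 1$ fails globally (domain mismatch), it holds sector by sector: on a fixed output sector $(m)$, both $(T\Psi)^{(m)}$ and then $(e^{-B}T\Psi)^{(m)}$ involve only finitely many powers of $B$ — concretely $(e^{-B}T\Psi)^{(m)} = \sum_{m'\le m}\frac{(-1)^{m-?}}{\dots}$, a finite sum — so one can freely rearrange and the binomial identity $\sum_{k} \frac{(-1)^k}{k!}\frac{1}{(n-k)!} = \delta_{n,0}$ collapses everything to $\Psi^{(m)}$. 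Here I would be careful to justify interchanging the two (formally infinite, actually finite on each sector) sums, which is immediate once one observes that for $\Psi \in \dom(T)$ the vector $T\Psi$ is a genuine element of $\hfrak$ and $e^{-B}$ acts on it sector-wise with only finitely many contributing terms per sector.

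Finally, for the annihilation identity \eqref{eq:ahat_annihilation_T}, I would compute $\hat a_\ell T(v\otimes\Omega)$ directly. Since $\hat a_\ell = a_\ell + \sum_j B_j \langle e_\ell, f_j/\omega\rangle = a(e_\ell) + \sum_j B_j \langle e_\ell, f_j/\omega\rangle$, and since $a(e_\ell)$ is a derivation-like object, the natural route is the CCR-type commutation $a(e_\ell)\, e^{B} = e^{B}\bigl(a(e_\ell) + [a(e_\ell), B]\bigr)$, valid on $v\otimes\Omega$ because everything truncates; and $[a(e_\ell), B] = -\sum_j B_j [a(e_\ell), a^*(f_j/\omega)] = -\sum_j B_j \langle e_\ell, f_j/\omega\rangle$, a bounded matrix (using $[B_j,B_{j'}]=0$ to commute the $B_j$'s freely through $B$, and the finiteness $|\langle e_\ell, f_j/\omega\rangle|<\infty$ noted before \eqref{eq:a_ell}). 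Hence $\hat a_\ell T(v\otimes\Omega) = e^{B}\bigl(a(e_\ell)(v\otimes\Omega) + [a(e_\ell),B](v\otimes\Omega)\bigr) + \bigl(\sum_j B_j\langle e_\ell,f_j/\omega\rangle\bigr) T(v\otimes\Omega)$; the first inner term vanishes since $a(e_\ell)\Omega = 0$, and the commutator term cancels the displacement term exactly, giving $0$. The one subtlety I would expect to be the main obstacle is making the manipulation $a(e_\ell) e^{B}\Psi = e^{B}(a(e_\ell)+[a(e_\ell),B])\Psi$ rigorous rather than formal — i.e. controlling that $a(e_\ell)$ may be applied term-by-term to the series $\sum_n \frac{B^n}{n!}(v\otimes\Omega)$ and that the rearrangement into $e^B$ times a shifted operator is legitimate; this is handled by noting that on $v\otimes\Omega$ only the first few sectors are populated after any fixed number of steps, so all sums are finite on each output sector and \Cref{lem:aastar_bounds_matrix} gives the norm control needed to conclude the resulting vector lies in $\hfrak$.
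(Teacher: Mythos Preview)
Your proposal is correct and follows essentially the same route as the paper: sector-wise finiteness because $B$ raises the boson number by one, the binomial identity $\sum_k \frac{(-1)^k}{k!(n-k)!} = \delta_{n,0}$ for $e^{-B}T = 1$, and the commutation $a_\ell e^B = e^B(a_\ell + [a_\ell,B])$ with $[a_\ell,B] = -\sum_j B_j\langle e_\ell, f_j/\omega\rangle$ cancelling the displacement. The only cosmetic difference is that the paper uses the elementary bound $\Vert a^*(f)\Phi\Vert \le \Vert f\Vert \Vert(\mathcal N+1)^{1/2}\Phi\Vert$ directly rather than routing through \Cref{lem:aastar_bounds_matrix}, and packages the final step as the operator identity $\hat a_\ell T = T a_\ell$ on $\hfrak_{\mathrm{fin}}$.
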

\begin{proof}
First, note that $ \ran(T) \subset \dom(e^{-B}) $. To see this, we have to explicitly compute every sector $ (e^{-B} T \Psi)^{(n)} $ for $ \Psi \in \dom(T) $. As $ B $ only creates particles, this sector contains a finite sum of terms, so it is well-defined, and by explicit computation one checks $ (e^{-B} T \Psi)^{(n)} = \Psi^{(n)} $. Thus, $ e^{-B} T \Psi $ is well-defined and further $ e^{-B} T \Psi = \Psi $. The same argument gives $ T e^{-B} \Psi = \Psi $ for $ \Psi \in \dom(e^{-B}) $.\\
To see that $ \dom(T) $ contains $ \hfrak_{\mathrm{fin}} $, by linearity it suffices to prove $ \Psi \in \dom(T) $ for $ \Psi $ having exactly $ n \in \mathbb{N} $ particles, that is, $ \mathcal{N} \Psi = n \Psi $. We expand $ T = e^B $, keeping in mind that $ \langle B^m \Psi, B^{m'} \Psi \rangle = 0 $ if $ m \neq m' $, since $ B $~\eqref{eq:B} exactly creates one particle, and we use $ \Vert a^*(f) \Phi \Vert \le \Vert f \Vert \Vert (\mathcal{N} + 1)^{\frac 12} \Phi \Vert $:
\begin{equation}
\begin{aligned}\label{eq:bound-t}
    \Vert T \Psi \Vert^2
    &= \sum_{m=0}^\infty \frac{\Vert B^m \Psi \Vert^2}{(m!)^2}
    \le \sum_{m=0}^\infty \frac{1}{(m!)^2} \Big( N \sup_{1 \le j \le N} \Vert B_j \Vert \Vert \omega^{-1} f_j \Vert \Big)^{2m} \Vert (\mathcal{N}+m)^{\frac 12} \ldots (\mathcal{N}+1)^{\frac 12} \Psi \Vert^2 \\
    &\le \sum_{m=0}^\infty \frac{1}{(m!)^2} \frac{(m+n)!}{n!} \Big( N \sup_{1 \le j \le N} \Vert B_j \Vert \Vert \omega^{-1} f_j \Vert \Big)^{2m} \Vert \Psi \Vert^2
    < \infty \;,
\end{aligned}
\end{equation}
so $ \Psi \in \dom(T) $. The same argument applies to $ e^{-B} $.
For proving~\eqref{eq:ahat_annihilation_T}, recall that $ \hat{a}_\ell $ was defined in~\eqref{eq:ahat_ell_T}. By sector-wise computation, one easily checks $ e^{-B} a_\ell e^B = a_\ell + [a_\ell,B] $ as a strong operator identity on $ \hfrak_{\mathrm{fin}} $. From $ [B_j, B] = 0 $ and $ [a_\ell, B] = - \sum_{j=1}^N B_j \Big\langle e_\ell, \frac{f_j}{\omega} \Big\rangle $, we then get
\begin{equation}
    \hat{a}_\ell T
    = e^B \Big( e^{-B} a_\ell e^B + \sum_{j=1}^N B_j \Big\langle e_\ell, \frac{f_j}{\omega} \Big\rangle \Big)
    = T \Big( a_\ell + [a_\ell,B] + \sum_{j=1}^N B_j \Big\langle e_\ell, \frac{f_j}{\omega} \Big\rangle \Big)
    = T a_\ell
\end{equation}
again as a strong operator identity on $ \hfrak_{\mathrm{fin}} $. Then, $ a_\ell(v \otimes \Omega) = 0 $ implies~\eqref{eq:ahat_annihilation_T}. 
\end{proof}

Well-definedness of the excited dressed vacuum vectors $ \Psi_{\underline{\ell}} $ in~\eqref{eq:Psi_ell} is then a consequence of the following lemma:
    
\begin{lemma}[Particle number decay of the dressed vacuum] \label{lem:bound_with_powers_of_N}
Let $ T $ be the dressing operator as per~\eqref{eq:T}, let $ \omega $, $ f_j $, and $ B_j $ satisfy Assumption~\ref{as:strong}, and let $ v \in \mathbb{C}^D $. Then, for every $ \varepsilon > 0 $, there exists a $ C_\varepsilon > 0 $ such that
\begin{equation} \label{eq:bound_with_powers_of_N}
    \Vert (\mathcal{N} + 1)^{\frac 12} \ldots (\mathcal{N} + n)^{\frac 12} T (v \otimes \Omega) \Vert < C_\varepsilon (n!)^{\frac{1+\varepsilon}{2}} \qquad \forall n \in \mathbb{N}_0 \;,
\end{equation}
where $ \mathcal{N}$ is the number operator~\eqref{eq:cN}. 
\end{lemma}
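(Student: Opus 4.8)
The plan is to make the dressed vacuum fully explicit and then estimate it sector by sector. Because $B = -\sum_{j=1}^N B_j a^*(f_j/\omega)$ raises the boson number by exactly one, and because the $B_j$ commute with one another by Assumption~\ref{as:strong} and trivially commute with every creation operator (they act on different tensor factors), the vector $B^m(v\otimes\Omega)$ lies entirely in the $m$-boson sector. Hence $(T(v\otimes\Omega))^{(m)} = \tfrac{1}{m!}B^m(v\otimes\Omega)$ and, expanding the commuting product,
\[
B^m(v\otimes\Omega) = (-1)^m \sum_{j_1,\dots,j_m=1}^N (B_{j_1}\cdots B_{j_m}v)\otimes a^*\!\Big(\tfrac{f_{j_1}}{\omega}\Big)\cdots a^*\!\Big(\tfrac{f_{j_m}}{\omega}\Big)\Omega .
\]

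Next I would record the elementary bound $\Vert a^*(g)\phi\Vert \le \sqrt{k+1}\,\Vert g\Vert\,\Vert\phi\Vert$ whenever $\phi$ has exactly $k$ bosons --- immediate from~\eqref{eq:adaggera} and the fact that $S_{k+1}$ is a contraction --- and iterate it starting from $\Omega$ to obtain $\Vert a^*(g_1)\cdots a^*(g_m)\Omega\Vert \le \sqrt{m!}\,\prod_{i}\Vert g_i\Vert$. Combining this with the triangle inequality over the $N^m$ summands above and with $\Vert B_{j_1}\cdots B_{j_m}v\Vert \le (\max_j\Vert B_j\Vert)^m\Vert v\Vert$ gives
\[
\Vert (T(v\otimes\Omega))^{(m)} \Vert \le \frac{c^m}{\sqrt{m!}}\,\Vert v\Vert, \qquad c := N\,\big(\max_{1\le j\le N}\Vert B_j\Vert\big)\big(\max_{1\le j\le N}\Vert\omega^{-1}f_j\Vert\big) ,
\]
which in particular re-proves that $T(v\otimes\Omega)\in\hfrak$.

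Finally I would assemble the two ingredients. By spectral calculus, $(\mathcal{N}+1)^{\frac12}\cdots(\mathcal{N}+n)^{\frac12}$ acts on the $m$-boson sector as multiplication by $\sqrt{(m+n)!/m!}$, so
\[
\Big\Vert (\mathcal{N}+1)^{\frac12}\cdots(\mathcal{N}+n)^{\frac12} T(v\otimes\Omega) \Big\Vert^2 = \sum_{m=0}^\infty \frac{(m+n)!}{m!}\,\Vert (T(v\otimes\Omega))^{(m)}\Vert^2 \le \Vert v\Vert^2 \sum_{m=0}^\infty \frac{(m+n)!}{(m!)^2}\,c^{2m} .
\]
Using $\frac{(m+n)!}{(m!)^2} = \binom{m+n}{n}\frac{n!}{m!} \le 2^{m+n}\frac{n!}{m!}$, the right-hand side is bounded by $\Vert v\Vert^2 e^{2c^2}\,2^n\,n!$. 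Since $n!$ grows faster than any geometric sequence, $K_\varepsilon := \sup_{n\in\mathbb{N}_0} 2^n (n!)^{-\varepsilon} < \infty$ for every $\varepsilon>0$, and~\eqref{eq:bound_with_powers_of_N} follows with any $C_\varepsilon > \Vert v\Vert\, e^{c^2}\sqrt{K_\varepsilon}$.

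The estimates above are essentially bookkeeping; the one point requiring a little care is the final step, where a genuine loss of $(n!)^\varepsilon$ is unavoidable. Even with the crude bound $\binom{m+n}{n}\le 2^{m+n}$ one is left with a surplus factor $2^n$, and a sharper analysis of $\sum_m \frac{(m+n)!}{(m!)^2}c^{2m}$ (whose dominant terms sit near $m\sim c\sqrt n$) shows the true growth is $n!\,e^{O(\sqrt n)}$, which is not $O(n!)$ but is $O((n!)^{1+\varepsilon})$ for every $\varepsilon>0$; this is precisely why the statement carries an arbitrary $\varepsilon$. One should also make sure the iterated creation-operator bound is tight enough not to spoil the $(m!)^{1/2}$ in the denominator, i.e.\ that the norm of $a^*(g)$ restricted to the $k$-boson sector is exactly $\sqrt{k+1}\,\Vert g\Vert$.
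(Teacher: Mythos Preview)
Your proof is correct and follows essentially the same route as the paper: explicit $m$-sector computation of $T(v\otimes\Omega)$, the bound $\Vert(T(v\otimes\Omega))^{(m)}\Vert \le c^m/\sqrt{m!}\,\Vert v\Vert$, and reduction to estimating $\sum_{m\ge 0}\frac{(m+n)!}{(m!)^2}c^{2m}$. Your final step via $\binom{m+n}{n}\le 2^{m+n}$ is in fact cleaner than the paper's, which splits the sum at $m=2n$ and invokes Stirling's formula $n^n\le C_\varepsilon(n!)^{1+\varepsilon}$ to reach the same $(n!)^{1+\varepsilon}$ bound.
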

\begin{proof}
Recalling $ T = e^B $, where $ B $ was defined in~\eqref{eq:B}, and by using Eq.~\eqref{eq:adaggera}, one explicitly computes
\begin{equation} \label{eq:displacedvacuum_T}
    (T (v \otimes \Omega) )^{(m)} = S_m \frac{(-1)^m}{\sqrt{m!}} \sum_{j_1, \ldots, j_m = 1}^N
        (B_{j_1} \ldots B_{j_m} v) \otimes 
        \Big( \frac{f_{j_1}}{\omega} \otimes \ldots \otimes \frac{f_{j_m}}{\omega} \Big) 
\end{equation}
for $ m \ge 1 $, and $ (T (v \otimes \Omega) )^{(0)} = v \otimes \Omega $. By using Stirling's formula\footnote{In this section, $ C $ denotes a constant independent of $ n $ and $ m $, which may change from line to line.},
\begin{equation} \label{eq:Stirling}
    n! \ge C n^n \e^{-n} \sqrt{n} \qquad \Rightarrow \qquad 
    n^n \le C_\varepsilon (n!)^{1 + \varepsilon} \;,
\end{equation}
we get, uniformly in $ n $,
\begin{equation}
\begin{aligned}
	&\quad \Vert (\mathcal{N}+1)^{\frac 12} \ldots (\mathcal{N}+n)^{\frac 12} T (v \otimes \Omega) \Vert^2
    = \sum_{m=0}^\infty \Vert (\mathcal{N}+1)^{\frac 12} \ldots (\mathcal{N}+n)^{\frac 12} (T (v \otimes \Omega) )^{(m)} \Vert^2 \\
	&\le \sum_{m=0}^\infty \frac{(m+n)!}{(m!)^2}
        \left( N \sup_{1 \le j \le N} \Vert B_j \Vert \Big\Vert \omega^{-1} f_j \Big\Vert \right)^{2m}
	\le \sum_{m=0}^\infty C^m \frac{(m+n)!}{(m!)^2} \\
	&\le \sum_{m=0}^{2n} C^m (3n)^{n}
		+ \sum_{m = 2n+1}^\infty C^m \frac{(m+1) \ldots (m+n)}{m!} \\
	&\le C_\varepsilon (n!)^{1 + \varepsilon}
		+ \sum_{m = 2n+1}^\infty C^m 2^{n} \frac{(m-n) \ldots m}{m!}
	= C_\varepsilon (n!)^{1 + \varepsilon}
		+ (2C)^n \sum_{m = n}^\infty \frac{C^m}{m!} \\
	&\le C_\varepsilon (n!)^{1 + \varepsilon} \;,
\end{aligned}
\end{equation}
where we applied Eq.~\eqref{eq:bound-t} in the second step. This is the claimed inequality.
\end{proof}

Next, we show that the test domain $\mathcal{D}$ used in the construction of our renormalized operator in Theorem~\ref{thm:HE2def_T} is indeed dense:

\begin{proposition}[Denseness of $ \mathcal{D} $] \label{prop:D_dense}
Let $ \omega $, $ f_j $, and $ B_j $ satisfy Assumption~\ref{as:strong}. Then the space $\mathcal{D}\subset\hfrak$ defined in Eq.~\eqref{eq:cD} is dense.
\end{proposition}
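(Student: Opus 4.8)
The plan is to reduce the density of $\mathcal{D}$ to that of the \emph{undressed} test domain
\[
\mathcal{D}_0 := \mathrm{Span}\{ a^*_{\ell_1} \ldots a^*_{\ell_n} (v \otimes \Omega) : n \in \mathbb{N}_0, \; \underline{\ell} \in \mathbb{N}^n, \; v \in \mathbb{C}^D \} \subset \hfrak_{\mathrm{fin}} \;.
\]
The set $\mathcal{D}_0$ is dense in $\hfrak$ by a routine argument: its $n$-boson component contains $\mathbb{C}^D \otimes \mathrm{Span}\{ S_n(e_{\ell_1} \otimes \ldots \otimes e_{\ell_n}) : \underline{\ell} \in \mathbb{N}^n \}$, which is dense in $\mathbb{C}^D \otimes S_n L^2(X)^{\otimes n}$ since $(e_\ell)_{\ell \in \mathbb{N}}$ is an orthonormal basis of $L^2(X)$, and then one sums over $n$. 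Since creation operators commute with one another and $B$ from~\eqref{eq:B} merely creates a boson, $[a^*_\ell, B] = 0$, hence $[a^*_\ell, B^n] = 0$, and a sector-wise computation gives $\Psi_{\underline{\ell}} = a^*_{\ell_1} \ldots a^*_{\ell_n} T(v \otimes \Omega) = T\big(a^*_{\ell_1} \ldots a^*_{\ell_n} (v \otimes \Omega)\big)$ for every generator of $\mathcal{D}$. Thus $\mathcal{D} = T \mathcal{D}_0$ with $\mathcal{D}_0 \subset \hfrak_{\mathrm{fin}} \subset \dom(T)$ by Lemma~\ref{lem:T_properties}, and it suffices to prove $\mathcal{D}_0 \subset \overline{\mathcal{D}}$; together with $\overline{\mathcal{D}_0} = \hfrak$ this gives the claim.

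To prove $\mathcal{D}_0 \subset \overline{\mathcal{D}}$, I would fix a generator $\psi_0 := a^*_{\ell_1} \ldots a^*_{\ell_n}(v \otimes \Omega) \in \mathcal{D}_0$ and reconstruct it from $\Psi_{\underline{\ell}} = T\psi_0 \in \mathcal{D}$ by undoing the dressing. By Lemma~\ref{lem:T_properties}, $e^{-B}\Psi_{\underline{\ell}} = \psi_0$, and the sector-wise $e^{-B}$ coincides here with the norm-convergent series $\sum_{k \ge 0} \frac{(-B)^k}{k!}\Psi_{\underline{\ell}}$: each boson sector gets contributions from only finitely many $k$ (as $B$ raises the boson number by exactly one and $\Psi_{\underline{\ell}}$ has no sectors below $n$), and the series converges absolutely by the bound below. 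Expanding $\omega^{-1} f_j = \sum_\ell \langle e_\ell, \omega^{-1} f_j \rangle e_\ell$ gives $B = \sum_\ell M_\ell a^*_\ell$ with $M_\ell := -\sum_j \langle e_\ell, \omega^{-1} f_j \rangle B_j$, where $\beta := \big( \sum_\ell \Vert M_\ell \Vert^2 \big)^{1/2} < \infty$ by Cauchy--Schwarz and Assumption~\ref{as:strong}, and each $M_\ell$ commutes with $B$ because the $B_j$ commute. I then introduce the truncations $B_L := \sum_{\ell=1}^L M_\ell a^*_\ell$ and the partial sums $S_{K,L} := \sum_{k=0}^K \frac{(-B_L)^k}{k!}\Psi_{\underline{\ell}}$. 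As $B_L$ is a finite sum of operators $M_\ell a^*_\ell$, and each such operator maps a generator $a^*_{\ell_1'} \ldots a^*_{\ell_r'} T(w \otimes \Omega)$ to $a^*_\ell a^*_{\ell_1'} \ldots a^*_{\ell_r'} T(M_\ell w \otimes \Omega) \in \mathcal{D}$ (moving $M_\ell$ through $T$ via $[M_\ell, B] = 0$), an induction on $k$ shows $B_L^k\Psi_{\underline{\ell}} \in \mathcal{D}$, whence $S_{K,L} \in \mathcal{D}$ for all $K, L$.

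It remains to pass $K \to \infty$ and then $L \to \infty$. For the first limit, iterating Lemma~\ref{lem:aastar_bounds_matrix} together with the intertwining identity $(\mathcal{N}+1)^{1/2} B_L^{k-1} = B_L^{k-1}(\mathcal{N}+k)^{1/2}$ (valid since $B_L$ raises the boson number by one) yields
\[
\Vert B_L^k \Psi_{\underline{\ell}} \Vert \le \beta^k \Vert (\mathcal{N}+1)^{1/2} \ldots (\mathcal{N}+k)^{1/2} \Psi_{\underline{\ell}} \Vert \le \beta^k \Vert (\mathcal{N}+1)^{1/2} \ldots (\mathcal{N}+n+k)^{1/2} T(v \otimes \Omega) \Vert \;,
\]
the last step pulling the $n$ creation operators defining $\Psi_{\underline{\ell}}$ out. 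By Lemma~\ref{lem:bound_with_powers_of_N} the right-hand side is $\le \beta^k C_\varepsilon ((n+k)!)^{(1+\varepsilon)/2}$, so that, choosing $\varepsilon < 1$, $\sum_k \frac{1}{k!} \Vert B_L^k \Psi_{\underline{\ell}} \Vert \le C_\varepsilon \sum_k \beta^k ((n+k)!)^{(1+\varepsilon)/2}/k!$ converges (a factor $(k!)^{(1-\varepsilon)/2}$ survives in the denominator and dominates $\beta^k$ times a polynomial in $k$). Hence $S_{K,L} \to e^{-B_L}\Psi_{\underline{\ell}} \in \overline{\mathcal{D}}$ as $K \to \infty$. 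For the second limit, writing $B_L^k - B^k = (B_L - B)\sum_{i=0}^{k-1} B_L^{k-1-i} B^i$ (justified since $[B_L, B] = 0$ and, by Theorem~\ref{thm:HE2def_T} and Lemma~\ref{lem:bound_with_powers_of_N}, $\Psi_{\underline{\ell}} \in \bigcap_{s \ge 0} \dom(\mathcal{N}^s)$), applying $\Vert (B_L - B)\Phi \Vert \le \rho_L \Vert (\mathcal{N}+1)^{1/2}\Phi \Vert$ with $\rho_L := \big(\sum_{\ell > L} \Vert M_\ell \Vert^2\big)^{1/2} \to 0$, and estimating as before, one gets $\Vert (B_L^k - B^k)\Psi_{\underline{\ell}} \Vert \le \rho_L\, k \beta^{k-1} C_\varepsilon ((n+k)!)^{(1+\varepsilon)/2}$; summing against $1/k!$ gives $\Vert e^{-B_L}\Psi_{\underline{\ell}} - e^{-B}\Psi_{\underline{\ell}} \Vert \le \rho_L \cdot \mathrm{const}(n) \to 0$. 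Since $\overline{\mathcal{D}}$ is closed and $e^{-B}\Psi_{\underline{\ell}} = \psi_0$, we obtain $\psi_0 \in \overline{\mathcal{D}}$, hence $\mathcal{D}_0 \subset \overline{\mathcal{D}}$ and $\mathcal{D}$ is dense.

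The main obstacle is the convergence bookkeeping: the argument must control arbitrarily high powers of the unbounded operators $B_L$ and $B$ on the dressed test vectors, and the decisive input is the factorial-decay estimate of Lemma~\ref{lem:bound_with_powers_of_N} with a subcritical exponent $\varepsilon < 1$ --- precisely what makes $\sum_k \beta^k ((n+k)!)^{(1+\varepsilon)/2}/k!$ summable, and which would fail for exponent $\ge 1$. The remaining domain questions are routine, since every $\Psi_{\underline{\ell}}$ lies in $\bigcap_{s \ge 0}\dom(\mathcal{N}^s)$, so all manipulations with $B_L$, $B$, the $a^*_\ell$, and powers of $\mathcal{N}$ are legitimate.
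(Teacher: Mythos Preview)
Your proof is correct and follows essentially the same approach as the paper's: both reduce to approximating the undressed generators $a^*_{\ell_1}\cdots a^*_{\ell_n}(v\otimes\Omega)$ by elements of $\mathcal{D}$ via the identity $e^{-B}T=1$, truncate both the exponential series and the basis expansion of $\omega^{-1}f_j$, and control convergence through the factorial-decay estimate of Lemma~\ref{lem:bound_with_powers_of_N}. The only differences are organizational---you front-load the observation $\mathcal{D}=T\mathcal{D}_0$ via $[a^*_\ell,B]=0$ and reverse the order of the two truncation limits---but the key inputs and estimates are identical.
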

\begin{proof}
It is well-known that the span of vectors of the form
\begin{equation} \label{eq:target_vector}
    \Psi = a^*_{\ell^{(1)}} \ldots a^*_{\ell^{(k)}} (v \otimes \Omega) \;, \qquad \textnormal{with } \; k \in \mathbb{N}_0 , \; (\ell^{(1)}, \ldots, \ell^{(k)}) \in \mathbb{N}^k , \; v \in \mathbb{C}^D 
\end{equation}
is dense in $ \hfrak $. So it suffices to show that any such $ \Psi $ can be approximated by vectors from $ \mathcal{D} $. To do so, we proceed as follows: 
\begin{itemize}
    \item[(i)] First, we construct a sequence $ (\tilde\Psi_m)_{m \in \mathbb{N}} \subset \hfrak $ of excited vectors
\begin{equation}
    \tilde\Psi_m \in \mathrm{Span} \{ a^*(g_1) \ldots a^*(g_n) T (v_m \otimes \Omega) \; : \; n \in \mathbb{N}_0 , \; g_1, \ldots, g_n \in L^2(X), \; v_m \in \mathbb{C}^D \} \;
\end{equation}
such that
\begin{equation} \label{eq:vector_approximation_1}
    \lim_{m \to \infty} \Vert \Psi - a^*_{\ell^{(1)}} \ldots a^*_{\ell^{(k)}} \tilde\Psi_m \Vert = 0 \;.
\end{equation}
\item[(ii)] Then, we show that each $ \tilde\Psi_m $ can be approximated by a vector $ \Psi_{m,m'} \in \mathcal{D} $, in the following sense:
\begin{equation} \label{eq:vector_approximation_2}
    \lim_{m' \to \infty} \Vert a^*_{\ell^{(1)}} \ldots a^*_{\ell^{(k)}} (\tilde\Psi_{m} - \Psi_{m,m'}) \Vert = 0 \;.
\end{equation}
\end{itemize}
Combining~\eqref{eq:vector_approximation_1} and~\eqref{eq:vector_approximation_2} then will yield $ a^*_{\ell^{(1)}} \ldots a^*_{\ell^{(k)}} \Psi_{m,m'(m)} \to \Psi $ for a suitable sequence $ m \mapsto m'(m) $, where $ a^*_{\ell^{(1)}} \ldots a^*_{\ell^{(k)}} \Psi_{m,m'(m)} \in \mathcal{D} $, which will conclude the proof.\medskip
    
\textbf{(i)}: To construct $ \tilde\Psi_m $, recall $ T = \e^B $, with $ B $ defined in~\eqref{eq:B}. We thus expect the vector $v \otimes \Omega $ to be well-approximated by a truncation of the exponential series of $ T \e^{-B} (v \otimes \Omega) $, which motivates the following definition:
\begin{equation} \label{eq:Psi_m_final}
    \tilde\Psi_m := \sum_{n=0}^m \frac{\Phi_n}{n!} \;, \qquad \Phi_n := T (-B)^n (v \otimes \Omega)
    = \sum_{j_1, \ldots, j_n = 1}^N
		a^*\Big( \frac{f_{j_1}}{\omega} \Big) \ldots
		a^*\Big( \frac{f_{j_n}}{\omega} \Big)
		T (B_{j_1} \ldots B_{j_n} v \otimes \Omega) \;.
\end{equation}
Using the strong operator identity $ e^B e^{-B} = 1 $ from Lemma~\ref{lem:T_properties}, one readily sees $ \tilde\Psi_m \to v \otimes \Omega $ as $ m \to \infty $. Furthermore, for $ m_1 > m_2 $,
\begin{equation}
\begin{aligned}
    &\|a^*_{\ell^{(1)}} \ldots a^*_{\ell^{(k)}} (\tilde\Psi_{m_1}-\tilde\Psi_{m_2})\|
    \leq \sum_{m = m_2}^\infty \frac{1}{m!}\|a^*_{\ell^{(1)}} \ldots a^*_{\ell^{(k)}} T (-B)^m (v \otimes \Omega)\|\\
    &\leq \sum_{m = m_2}^\infty \frac{1}{m!}\left( N \sup_{1 \le j \le N} \Vert B_j \Vert \Big\Vert \omega^{-1} f_j \Big\Vert \right)^m \Vert (\mathcal{N}+1)^{\frac 12} \ldots (\mathcal{N}+m+k)^{\frac 12} T (v \otimes \Omega) \Vert \\
	&\leq \sum_{m = m_2}^\infty \frac{1}{m!} C_\varepsilon C^m ((m+k)!)^{\frac{1 + \varepsilon}{2}} \to 0 \qquad \textnormal{as } m_2 \to \infty\;,
\end{aligned}
\end{equation}
so $ (\tilde\Psi_m)_{m \in \mathbb{N}} $ is a Cauchy sequence in the graph norm of $ a^*_{\ell^{(1)}} \ldots a^*_{\ell^{(k)}} $ and by the well-known closedness of this operator, we conclude~\eqref{eq:vector_approximation_1}.\medskip

\textbf{(ii)} To establish the second approximation step~\eqref{eq:vector_approximation_2}, it suffices to consider any
\begin{equation}
    \tilde\Psi
    = a^*(g_1) \ldots a^*(g_n) T (w \otimes \Omega) \;,
    \qquad n \in \mathbb{N}_0 , \; g_1, \ldots, g_n \in L^2(X) , \; w \in \mathbb{C}^D \;,
\end{equation}
and to approximate it by a sequence $ \Psi_{m'} \in \mathcal{D} $, in the following sense:
\begin{equation} \label{eq:vector_approximation_2_bis}
    \lim_{m' \to 0} \Vert a^*_{\ell^{(1)}} \ldots a^*_{\ell^{(k)}} (\tilde\Psi - \Psi_{m'}) \Vert = 0 \;.
\end{equation}
To this end, we can naturally approximate $ \tilde\Psi $ by using the basis expansion $ g_{j,m'} := \sum_{\ell = 1}^{m'} \langle e_\ell, g_j \rangle e_\ell $ with $ g_j = \lim_{m' \to \infty} g_{j,m'} $, where the approximating vectors are
\begin{equation}
    \Psi_{m'} := a^*(g_{1,m'}) \ldots a^*(g_{n,m'}) T (w \otimes \Omega) \in \mathcal{D} \;.
\end{equation}
Using $ \Vert g_{j,m'} \Vert \le \Vert g_j \Vert $, the approximation error is then
\begin{equation}
\begin{aligned}
    &\quad \Vert a^*_{\ell^{(1)}} \ldots a^*_{\ell^{(k)}}(\tilde\Psi - \Psi_{m'}) \Vert \\
    &= \Big\Vert \sum_{j=1}^n a^*_{\ell^{(1)}} \ldots a^*_{\ell^{(k)}} a^*(g_1) \ldots a^*(g_{j-1}) a^*(g_j - g_{j,m'}) a^*(g_{j+1,m'}) \ldots a^*(g_{n,m'}) T (w \otimes \Omega) \Big\Vert \\
    &\le \sum_{j=1}^n \Vert g_j - g_{j,m'} \Vert \;
    \Vert g_1 \Vert \ldots \Vert g_{j-1} \Vert \Vert g_{j+1} \Vert \ldots \Vert g_n \Vert
    \underbrace{\Big\Vert (\mathcal{N}+1)^{\frac 12} \ldots (\mathcal{N}+n+k)^{\frac 12} T (w \otimes \Omega) \Big\Vert}_{< \infty \textnormal{ by Lemma~\ref{lem:bound_with_powers_of_N}}} \\
    &\le C \sum_{j=1}^n \Vert g_j - g_{j,m'} \Vert \;. \\
\end{aligned}
\end{equation}
Then, $ \lim_{m' \to \infty} \Vert g_j - g_{j,m'} \Vert = 0 $, so indeed~\eqref{eq:vector_approximation_2_bis} holds, which concludes the proof.
\end{proof}	

At this point, we proved that $ \Psi_{\underline{\ell}} $ in~\eqref{eq:Psi_ell} is well-defined and $ \mathcal{D} $ in~\eqref{eq:cD} is dense. It remains to prove well-definedness of $ (H-K) = \sum_{\ell, \ell'} \langle e_\ell, \omega e_{\ell'} \rangle \hat{a}_\ell^* \hat{a}_{\ell'} $ on $ \mathcal{D} $. In fact, we prove the following more general statement, which will be useful later.

\begin{lemma}[Shifted second quantization operators] \label{lem:di_hat_Gamma}
Let $ \omega $, $ f_j $, and $ B_j $ satisfy Assumption~\ref{as:strong}. Let $ \xi : X \to [0, \infty)$ be measurable and\footnote{Note that $ \dom(\xi) \cap \dom(\omega) $ is dense in $ L^2(X) $, as it contains all measurable functions $ f: X \to \mathbb{C} $ supported on any joint level set $ L_n = \{ k \in X : \omega(k), \xi(k) \le n \} $, where $ L_n \nearrow X $ as $ n \to \infty $.} $ (e_\ell)_{\ell \in \mathbb{N}} \subset \dom(\xi) \cap \dom(\omega) $ an orthonormal basis of $ L^2(X) $. Then, the following operator is densely defined on $ \mathcal{D} $~\eqref{eq:cD} and non-negative:
\begin{equation} \label{eq:di_hat_Gamma}
    \di \hat{\Gamma} (\xi)
    := \sum_{\ell, \ell' \in \mathbb{N}} \langle e_\ell, \xi e_{\ell'} \rangle \hat{a}_\ell^* \hat{a}_{\ell'} \;.
\end{equation}
\end{lemma}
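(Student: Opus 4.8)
The plan is to read off both assertions---well-definedness of $ \di\hat\Gamma(\xi) $ on $ \mathcal{D} $ and its non-negativity---from two structural facts, and then to make rigorous the only non-formal step, a rearrangement of an infinite double sum, by truncating. The first fact is that the displacement part of $ \hat{a}_\ell = a_\ell + \sum_j B_j \langle e_\ell, \omega^{-1} f_j \rangle $ is a fixed bounded matrix acting on the spin factor, so $ [\hat{a}_\ell, a^*_{\ell'}] = \delta_{\ell\ell'} $ and $ [\hat{a}_\ell, a_{\ell'}] = 0 $ exactly; the second is $ \hat{a}_\ell T(v \otimes \Omega) = 0 $ for all $ \ell $ by Lemma~\ref{lem:T_properties}. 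Together with the formal identity, obtained by inserting $ \langle e_\ell, \xi e_{\ell'} \rangle = \sum_{\ell''} \langle e_\ell, \sqrt{\xi}\, e_{\ell''} \rangle \langle \sqrt{\xi}\, e_{\ell''}, e_{\ell'} \rangle $ and exchanging sums,
\[
    \di\hat\Gamma(\xi) = \sum_{\ell'' \in \mathbb{N}} \hat{a}\big( \sqrt{\xi}\, e_{\ell''} \big)^* \, \hat{a}\big( \sqrt{\xi}\, e_{\ell''} \big) \;,
\]
which is a manifest ``sum of squares'', this already suggests non-negativity; the rigorous argument below is just a careful version of this.

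For well-definedness, the first step is to note that, by linearity, it suffices to evaluate $ \di\hat\Gamma(\xi) $ on a generator $ \Psi_{\underline{\ell}} = a^*_{\ell_1} \cdots a^*_{\ell_n} T(v \otimes \Omega) $. Commuting $ \hat{a}_{\ell'} $ past the creation operators and using $ \hat{a}_{\ell'} T(v \otimes \Omega) = 0 $ gives $ \hat{a}_{\ell'} \Psi_{\underline{\ell}} = \sum_{i=1}^n \delta_{\ell' \ell_i} \Phi_i $, where $ \Phi_i := a^*_{\ell_1} \cdots a^*_{\ell_{i-1}} a^*_{\ell_{i+1}} \cdots a^*_{\ell_n} T(v \otimes \Omega) $; in particular $ \hat{a}_{\ell'} \Psi_{\underline{\ell}} = 0 $ for all but finitely many $ \ell' $, so the $ \ell' $-sum in~\eqref{eq:di_hat_Gamma} is finite. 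Each $ \Phi_i $ lies in $ \dom(\mathcal{N}^{k/2}) $ for all $ k $, because $ T(v \otimes \Omega) $ does by Lemma~\ref{lem:bound_with_powers_of_N} (note $ (\mathcal{N}+1)\cdots(\mathcal{N}+j) \ge \mathcal{N}^j $ sector-wise) and creation operators preserve these domains. It then remains to bound $ \sum_\ell \langle e_\ell, \xi e_{\ell_i} \rangle \hat{a}_\ell^* \Phi_i $; splitting $ \hat{a}_\ell^* = a_\ell^* + \sum_j B_j^* \langle \omega^{-1} f_j, e_\ell \rangle $, I would control the creation part by Lemma~\ref{lem:aastar_bounds_matrix} with $ M_\ell $ the scalar $ \langle e_\ell, \xi e_{\ell_i} \rangle $ times the identity on $ \mathbb{C}^D $, so that $ \Vert M \Vert_2 = \Vert \xi e_{\ell_i} \Vert < \infty $ since $ e_{\ell_i} \in \dom(\xi) $, while the displacement part converges absolutely because $ \sum_\ell |\langle e_\ell, \xi e_{\ell_i} \rangle|\, |\langle \omega^{-1} f_j, e_\ell \rangle| \le \Vert \xi e_{\ell_i} \Vert\, \Vert \omega^{-1} f_j \Vert $ by Cauchy--Schwarz. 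Hence $ \di\hat\Gamma(\xi) \Psi_{\underline{\ell}} \in \hfrak $, so $ \di\hat\Gamma(\xi) : \mathcal{D} \to \hfrak $ is well-defined, and $ \mathcal{D} $ is dense by Proposition~\ref{prop:D_dense}.

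For non-negativity, I would fix $ \Psi \in \mathcal{D} $ and, for $ L \in \mathbb{N} $, introduce the finite truncation $ \di\hat\Gamma^{(L)}(\xi) := \sum_{\ell,\ell' \le L} \langle e_\ell, \xi e_{\ell'} \rangle \hat{a}_\ell^* \hat{a}_{\ell'} $. The matrix $ \Xi^{(L)} := ( \langle e_\ell, \xi e_{\ell'} \rangle )_{\ell,\ell' \le L} $ is Hermitian and positive semidefinite, since $ \sum_{\ell,\ell'} \bar{c}_\ell c_{\ell'} \langle e_\ell, \xi e_{\ell'} \rangle = \int_X \xi\, | \sum_\ell c_\ell e_\ell |^2\, \di k \ge 0 $; diagonalizing $ \Xi^{(L)} = U^* \operatorname{diag}(\lambda_1, \ldots, \lambda_L) U $ with $ \lambda_{\ell''} \ge 0 $ and setting $ g_{\ell''} := \sum_{\ell \le L} \overline{U_{\ell'' \ell}}\, e_\ell $, linearity of $ \hat{a}(\cdot)^* $ and antilinearity of $ \hat{a}(\cdot) $ yield the operator identity $ \di\hat\Gamma^{(L)}(\xi) = \sum_{\ell''=1}^L \lambda_{\ell''}\, \hat{a}(g_{\ell''})^* \hat{a}(g_{\ell''}) $ on $ \mathcal{D} $. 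Since $ \mathcal{D} \subset \dom(\mathcal{N}^{1/2}) $ and $ \hat{a}(g)^* = a^*(g) + (\text{bounded}) $ is the adjoint of $ \hat{a}(g) = a(g) + (\text{bounded}) $ there, this gives $ \langle \Psi, \di\hat\Gamma^{(L)}(\xi) \Psi \rangle = \sum_{\ell''} \lambda_{\ell''} \Vert \hat{a}(g_{\ell''}) \Psi \Vert^2 \ge 0 $. Finally, once $ L $ exceeds all boson-mode indices appearing in $ \Psi $, the truncation in $ \ell' $ becomes inactive, so $ \di\hat\Gamma^{(L)}(\xi) \Psi $ is exactly the $ \ell $-partial sum of $ \di\hat\Gamma(\xi) \Psi $; the tail estimate of the previous paragraph (Lemma~\ref{lem:aastar_bounds_matrix}, using $ \sum_{\ell > L} |\langle e_\ell, \xi e_{\ell'} \rangle|^2 \to 0 $) then shows $ \di\hat\Gamma^{(L)}(\xi) \Psi \to \di\hat\Gamma(\xi) \Psi $, whence $ \langle \Psi, \di\hat\Gamma(\xi) \Psi \rangle = \lim_L \langle \Psi, \di\hat\Gamma^{(L)}(\xi) \Psi \rangle \ge 0 $.

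I expect the only delicate point to be the bookkeeping in the two preceding paragraphs: each $ \hat{a}_\ell^* $ splits into a genuinely unbounded creation operator, which can be tamed only through powers of $ \mathcal{N} $ via Lemmas~\ref{lem:aastar_bounds_matrix} and~\ref{lem:bound_with_powers_of_N}, and a bounded spin-valued displacement, and one must reorganize an a priori merely iterated double sum over $ (\ell, \ell') $ into a sum of squares without ever forming an ill-defined vector. Restricting to a finite index range at each level $ L $ side-steps this, reducing non-negativity to the linear-algebra fact that a positive semidefinite matrix conjugated by $ \hat{a} $'s defines a non-negative operator, and leaving only a routine tail estimate for the passage $ L \to \infty $.
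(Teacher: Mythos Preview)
Your proof is correct. Both the well-definedness and the non-negativity arguments go through as written.

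The route you take differs from the paper's in both parts. For well-definedness, the paper computes $\Vert \di\hat\Gamma(\xi)\Psi_{\underline{\ell}}\Vert^2$ directly: it commutes the two annihilation operators past the two creation operators, obtaining a three-term split $\I+\II+\III$ (a normal-ordered quartic, a quadratic from the $\delta_{\ell\ell'}$ part of $[\hat a_\ell,\hat a_{\ell'}^*]$, and a quadratic from the $[B_j,B_{j'}^*]$ part), and bounds each term by reducing to the finitely many mode indices $\ell_1,\ldots,\ell_n$. Your argument is more economical: you observe that the $\ell'$-sum collapses to a finite sum over $i=1,\ldots,n$, and then bound the remaining $\ell$-sum of $\hat a_\ell^*$'s in one stroke via Lemma~\ref{lem:aastar_bounds_matrix} with $\Vert M\Vert_2=\Vert\xi e_{\ell_i}\Vert$. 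This is shorter and avoids ever expanding the square; the paper's expansion, on the other hand, makes the structure of the commutator $[\hat a_\ell,\hat a_{\ell'}^*]$ (in particular the $[B_j,B_{j'}^*]$ contribution) fully explicit, which is reused later in Lemma~\ref{lem:HLambdasquare_bound}.

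For non-negativity, the paper's proof does not spell out an argument at all; implicitly it relies on the fact that for $\Psi\in\mathcal D$ only finitely many $\hat a_\ell\Psi$ are nonzero, so $\langle\Psi,\di\hat\Gamma(\xi)\Psi\rangle=\sum_{\ell,\ell'}\langle e_\ell,\xi e_{\ell'}\rangle\langle\hat a_\ell\Psi,\hat a_{\ell'}\Psi\rangle$ is a finite quadratic form in a positive semidefinite matrix. Your truncation-plus-diagonalization argument reaches the same conclusion and is entirely sound, if somewhat more elaborate than strictly necessary: once you note that both the $\ell$- and $\ell'$-sums in the expectation are finite (not just the $\ell'$-sum), the diagonalization step and the $L\to\infty$ tail estimate become superfluous.
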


\begin{proof}
By definition~\eqref{eq:ahat_ell_T} of $ \hat{a}_\ell $, we have
\begin{equation} \label{eq:ahat_CCR}
    [\hat{a}_\ell, \hat{a}_{\ell'}^*]
    = \delta_{\ell, \ell'} + \sum_{j, j' = 1}^N [B_j, B_{j'}^*] \Big\langle e_\ell, \frac{f_j}{\omega} \Big\rangle \Big\langle \frac{f_{j'}}{\omega}, e_{\ell'} \Big\rangle \;.
\end{equation}
We now take any $ n \in \mathbb{N}_0 $ and $ \underline{\ell} = (\ell_1, \ldots, \ell_n) \in \mathbb{N}^n $ and compute the action of $\di\hat{\Gamma}(\xi)$ on the corresponding vector $ \Psi_{\underline{\ell}} $ as given in~\eqref{eq:Psi_ell}:
\begin{equation} \label{eq:split_I_II_III}
\begin{aligned}
    \Vert \di \hat{\Gamma} (\xi) \Psi_{\underline{\ell}} \Vert^2
    &= \I + \II + \III \;, \\
    \I &:= \sum_{\ell_1', \ell_2', \ell_3', \ell_4' \in \mathbb{N}_0}
        \langle e_{\ell_1'}, \xi e_{\ell_2'} \rangle 
        \langle e_{\ell_3'}, \xi e_{\ell_4'} \rangle 
        \langle \Psi_{\underline{\ell}}, \hat{a}^*_{\ell_1'} \hat{a}^*_{\ell_3'} \hat{a}_{\ell_2'} \hat{a}_{\ell_4'} \Psi_{\underline{\ell}} \rangle \;, \\
    \II &:= \sum_{\ell_1', \ell_4' \in \mathbb{N}_0}
        \langle e_{\ell_1'}, \xi^2 e_{\ell_4'} \rangle
        \langle \Psi_{\underline{\ell}}, \hat{a}^*_{\ell_1'} \hat{a}_{\ell_4'} \Psi_{\underline{\ell}} \rangle \;, \\
    \III &:= \sum_{\ell_1', \ell_4' \in \mathbb{N}_0}
        \sum_{j, j' = 1}^N
        \Big\langle \xi e_{\ell_1'}, \frac{f_j}{\omega} \Big\rangle
        \Big\langle \frac{f_{j'}}{\omega}, \xi e_{\ell_4'} \Big\rangle 
        \langle \Psi_{\underline{\ell}}, \hat{a}^*_{\ell_1'}[B_j, B_{j'}^*] \hat{a}_{\ell_4'} \Psi_{\underline{\ell}} \rangle \;.
\end{aligned}
\end{equation}
Using $ [\hat{a}_{\ell}, a_{\ell'}^*] = \delta_{\ell, \ell'} $ and Lemma~\ref{lem:T_properties}, we obtain
\begin{equation}
    \hat{a}_{\ell} \Psi_{\underline{\ell}}
    = \sum_{i = 1}^n \delta_{\ell, \ell_i} \Psi_{\underline{\ell}^{(i)}} \;, \qquad \textnormal{where} \qquad
    \Psi_{\underline{\ell}^{(i)}}
    := a^*_{\ell_1} \ldots a^*_{\ell_{i-1}} a^*_{\ell_{i+1}} \ldots a^*_{\ell_n} T (v \otimes \Omega) \;,
\end{equation}
that is, each term $ \Psi_{\underline{\ell}^{(i)}} $ is obtained by removing $ a^*_{\ell_i} $ from $\Psi_{\underline{\ell}}$. Now,
\begin{equation}
\begin{aligned}
    |\II|
    &= \bigg\vert \sum_{\ell_1', \ell_4' \in \mathbb{N}_0}
        \langle e_{\ell_1'}, \xi^2 e_{\ell_4'} \rangle
        \sum_{i_1, i_4 = 1}^n
        \delta_{\ell_1',\ell_{i_1}} \delta_{\ell_4',\ell_{i_4}} \langle \Psi_{\underline{\ell}^{(i_1)}}, \Psi_{\underline{\ell}^{(i_4)}} \rangle \bigg\vert
    = \bigg\vert \sum_{i_1, i_4 = 1}^n
        \langle e_{\ell_{i_1}}, \xi^2 e_{\ell_{i_4}} \rangle
        \langle \Psi_{\underline{\ell}^{(i_1)}}, \Psi_{\underline{\ell}^{(i_4)}} \rangle \bigg\vert \\
    &\le \sum_{i_1, i_4 = 1}^n
        \Vert \xi e_{\ell_{i_1}} \Vert
        \Vert \xi e_{\ell_{i_4}} \Vert
        \Vert \Psi_{\underline{\ell}^{(i_1)}} \Vert
        \Vert \Psi_{\underline{\ell}^{(i_4)}} \Vert
    < \infty \;,
\end{aligned}
\end{equation}
where we recall that $ e_\ell \in \dom(\xi) $, so $ \Vert \xi e_\ell \Vert < \infty $. Similarly, since $\Vert \omega^{-1} f_j \Vert < \infty $,
\begin{equation}
\begin{aligned}
    |\III|
    &= \bigg\vert \sum_{i_1, i_4 = 1}^n
        \sum_{j, j' = 1}^N
        \Big\langle \xi e_{\ell_1'}, \frac{f_j}{\omega} \Big\rangle
        \Big\langle \frac{f_{j'}}{\omega}, \xi e_{\ell_4'} \Big\rangle
        \langle \Psi_{\underline{\ell}^{(i_1)}}, [B_j, B_{j'}^*] \Psi_{\underline{\ell}^{(i_4)}} \rangle \bigg\vert \\
    &\le 2 \sum_{i_1, i_4 = 1}^n
        \sum_{j, j' = 1}^N
        \Vert \xi e_{\ell_1'} \Vert
        \Vert \omega^{-1} f_j \Vert
        \Vert \omega^{-1} f_{j'} \Vert
        \Vert \xi e_{\ell_4'} \Vert
        \Vert B_j \Vert \Vert B_{j'} \Vert
        \Vert \Psi_{\underline{\ell}^{(i_1)}} \Vert
        \Vert \Psi_{\underline{\ell}^{(i_4)}} \Vert
    < \infty \;.
\end{aligned}
\end{equation}
Finally, to bound $ \I $, for $ i < i' $, we introduce the notation
\begin{equation}
    \Psi_{\underline{\ell}^{(i,i')}}
    := a^*_{\ell_1} \ldots
        a^*_{\ell_{i-1}} a^*_{\ell_{i+1}} \ldots
        a^*_{\ell_{i'-1}} a^*_{\ell_{i'+1}} \ldots
        a^*_{\ell_n} T (v \otimes \Omega) \qquad \textnormal{and} \qquad
    \Psi_{\underline{\ell}^{(i',i)}} := \Psi_{\underline{\ell}^{(i,i')}} \;.
\end{equation}
So $ \Psi_{\underline{\ell}^{(i,i')}} $ results from $ \Psi_{\underline{\ell}} $ by removing $ a^*_{\ell_i} $ and $ a^*_{\ell_{i'}} $. Then,
\begin{equation} \label{eq:bound_I}
\begin{aligned}
    |\I|
    &= \bigg\vert \sum_{i_1 \neq i_3} \sum_{i_2 \neq i_4}
        \langle e_{\ell_{i_1}}, \xi e_{\ell_{i_2}} \rangle 
        \langle e_{\ell_{i_3}}, \xi e_{\ell_{i_4}} \rangle 
        \langle \Psi_{\underline{\ell}^{(i_1,i_3)}}, \Psi_{\underline{\ell}^{(i_2,i_4)}} \rangle \bigg\vert \\
    &\le \sum_{i_1 \neq i_3} \sum_{i_2 \neq i_4}
        \Vert \xi^{\frac 12} e_{\ell_{i_1}} \Vert
        \Vert \xi^{\frac 12} e_{\ell_{i_2}} \Vert
        \Vert \xi^{\frac 12} e_{\ell_{i_3}} \Vert
        \Vert \xi^{\frac 12} e_{\ell_{i_4}} \Vert
        \Vert \Psi_{\underline{\ell}^{(i_1,i_3)}} \Vert
        \Vert \Psi_{\underline{\ell}^{(i_2,i_4)}} \Vert
    < \infty \;.
\end{aligned}
\end{equation}
So $ \di \hat{\Gamma} (\xi) \Psi_{\underline{\ell}} \in \hfrak $ and $ \di \hat{\Gamma} (\xi) $ is indeed well-defined on $ \mathcal{D} $.
\end{proof}

\begin{proof}[Proof of Theorem~\ref{thm:HE2def_T}]
Denseness of $ \mathcal{D} $ was established in Proposition~\ref{prop:D_dense} and well-definedness of $ (H-K) = \di\hat{\Gamma}(\omega) $ on $ \mathcal{D} $ follows from Lemma~\ref{lem:di_hat_Gamma}. Since $ \omega \ge 0 $ and $ K $ is bounded, $ H $ is also semi-bounded from below. Therefore, by Friedrichs' theorem~\cite[Thm.~2.13]{teschl2014mathematical}, $ H $ admits a self-adjoint extension.
\end{proof}

\section{Proof of norm resolvent convergence}
\label{sec:proof_normresolvent}

\noindent We begin by recalling the expression~\eqref{eq:HLambda} of the cut-off Hamiltonian:
\begin{equation*}
    H_\Lambda - E_\Lambda = K + \di \Gamma (\omega)
        + \sum_{j=1}^N (B_j a^*(f_{j,\Lambda}) + B_j^* a(f_{j,\Lambda}) )
        + \sum_{j,j'=1}^N B_j^* B_{j'} \int \frac{\overline{f_{j,\Lambda}(k)} f_{j',\Lambda}(k)}{\omega(k)} \di k \;,
\end{equation*}
where the cut-off form factors $(f_{j,\Lambda})_{j=1,\dots,N}\subset L^2(X)$ are chosen in such a way that $ \frac{f_{j,\Lambda}}{\omega} \to \frac{f_j}{\omega} $ in $ L^2(X) $. Correspondingly, the cut-off resolvent and full resolvent are given by
\begin{equation} \label{eq:R_Lambda_z}
    R_\Lambda(z) := (H_\Lambda - E_\Lambda - z)^{-1} \;, \qquad
    R(z) := (H - z)^{-1} \;,
\end{equation}
respectively. In analogy to $ \hat a(g) $ with $ g \in L^2(X) $ and to $ \hat{a}_\ell $, defined in~\eqref{eq:ahat_ell_T}, we introduce the operators $ \hat{a}_\Lambda(g) $ and $ \hat{a}_{\Lambda,\ell} $ via
\begin{equation}
     \hat{a}_\Lambda(g) := a(g) + \sum_{j=1}^N B_j \Big\langle \frac{g}{\omega}, f_{j,\Lambda} \Big\rangle \;, \qquad
     \textnormal{for } \frac{g}{\omega} \in L^2(X) \;, \qquad
     \hat{a}_{\Lambda,\ell} := \hat{a}_\Lambda(e_\ell) \;,
\end{equation}
with adjoints $ \hat{a}^*_\Lambda(g) $ and $ \hat{a}_{\Lambda,\ell}^* $. This allows us to \textit{formally} write the resolvent difference as
\begin{equation} \label{eq:resolventdifference}
    R_\Lambda(z) - R(z)
    = \sum_{j=1}^N R_\Lambda(z) \Big( B_j^* \hat{a}(f_j - f_{j,\Lambda}) + \hat{a}_\Lambda^*(f_j - f_{j,\Lambda}) B_j \Big) R(z) \;.
\end{equation}
To make mathematical sense of the formal expression above, first notice that $ \hat{a}_\Lambda(g) $ is indeed well-defined on a dense subspace of $ \hfrak $ for $ \frac{g}{\omega} \in L^2 $, since 
\begin{equation}
    \Big\langle \frac{g}{\omega}, f_{j,\Lambda} \Big\rangle \le \Big\Vert \frac{g}{\omega} \Big\Vert \Vert f_{j,\Lambda} \Vert \;.
\end{equation}

One can then define its adjoint $ \hat{a}^*_\Lambda(g) $ using suitable Hilbert space riggings as in~\cite[Sect.~3]{lonigro2022generalized}, thus ensuring well-definedness of $\hat{a}_\Lambda^*(f_j - f_{j,\Lambda})$ even though $ (f_j - f_{j,\Lambda}) \notin L^2(X) $. We will explicitly construct such a suitable rigging in~\eqref{eq:rigging}. However, $ \hat{a}(f_j - f_{j,\Lambda}) $ contains a possibly divergent integral $ \langle f_j, \frac{f_j}{\omega} \rangle$. We therefore need to define $\hat{a}(g)$ for $g\not\in L^2(X)$, but $\omega^{-1}g\in L^2(X)$. To that end we define
\begin{equation}\label{eq:hat_a_critical}
    \hat{a}(g):=\sum_{\ell\in \mathbb{N}}\langle g,e_\ell\rangle\hat{a}_\ell\,.
\end{equation}
A priori, it is not evident that this operator is well defined on a dense subset of $\hfrak$, as the series might fail to converge. However, following an analogous argument as in the proof of Lemma~\ref{lem:di_hat_Gamma}, it can be easily seen that the sum becomes finite when applied to elements in $\mathcal{D} $.
Furthermore, if $g\in L^2(X)$, this definition coincides with the one given in~\eqref{eq:ahat_ell_T}, since in that case $g$ admits an expansion in the basis $(e_\ell)_{\ell\in\mathbb{N}}$.  

To define the rigging for $ \hat{a}_\Lambda^*(g)$, let us introduce the following shifted second quantization operators, which are a cut-off version of $ \di \hat\Gamma(\xi) $ in~\eqref{eq:di_hat_Gamma}:
\begin{equation}
    \di \hat{\Gamma}_\Lambda (\xi)
    := \sum_{\ell, \ell' \in \mathbb{N}} \langle e_\ell, \xi e_{\ell'} \rangle \hat{a}_\Lambda(e_\ell)^* \hat{a}_\Lambda(e_{\ell'}) \;.
\end{equation}
Similarly as in Lemma~\ref{lem:di_hat_Gamma}, one can show that, whenever $ \omega^{-1} f_{j,\Lambda} \in L^2(X) $, then $ \di \hat{\Gamma}_\Lambda (\xi) $ is well-defined on
\begin{equation} \label{eq:cD_Lambda}
\begin{aligned}
    \mathcal{D}_\Lambda &:= \mathrm{Span}\{ a^*_\Lambda(e_{\ell_1}) \ldots a^*_\Lambda(e_{\ell_n}) T_\Lambda (v \otimes \Omega) \in \hfrak : n \in \mathbb{N}_0, \; \underline{\ell} \in \mathbb{N}^n , \; v \in \mathbb{C}^D \} \;, \\
    T_\Lambda &:= \e^{B_\Lambda} \;, \qquad
    B_\Lambda := -\sum_{j=1}^N B_j  a^* \left( \frac{f_{j,\Lambda}}{\omega} \right) \;, \qquad
    \underline{\ell} = (\ell_1, \ldots, \ell_n) \;.
\end{aligned}
\end{equation}
Our strategy is now to prove the following bounds: 
 $$\Vert \hat{a}_\sharp(g) \Psi \Vert \le \Vert \omega^{-1} g \Vert \Vert \di \hat\Gamma_\sharp(\omega^2)^{\frac 12} \Psi \Vert \le C \Vert \omega^{-1} g \Vert \Vert \di \hat\Gamma_\sharp(\omega) \Psi \Vert\,, $$
with $\sharp\in\{\Lambda,\,\cdot\,\}$. The first inequality holds in general whenever $g\omega^{-1}\in L^2(X)$, whereas the second bound will require a sufficiently small coupling. To address this, we will prove these bounds for the restriction of $g$ to some measurable set $ S\subseteq X $:

\begin{lemma} \label{lem:ahatconversion}
Let $ \omega $, $ f_j $, and $ B_j $ satisfy Assumption~\ref{as:strong}, and let $ f_{j,\Lambda} \in L^2(X) $. Let $ g: X \to \mathbb{C} $ be a measurable function satisfying $ \frac{g}{\omega} \in L^2(X) $. Then, for any measurable set $ S\subseteq X $ with indicator function $ \chi_S $ and any $ \Psi \in \hfrak $, the following inequalities hold:
\begin{equation} \label{eq:ahatconversion}
    \Vert \hat{a}_\Lambda(g \chi_S) \Psi \Vert
    \le \Big\Vert \frac{g}{\omega} \chi_S \Big\Vert
        \Vert \di \hat{\Gamma}_\Lambda (\omega^2 \chi_S)^{\frac 12} \Psi \Vert \quad  \text{and}\quad \Vert \hat{a}(g \chi_S) \Psi \Vert
    \le \Big\Vert \frac{g}{\omega} \chi_S \Big\Vert
        \Vert \di \hat{\Gamma} (\omega^2 \chi_S)^{\frac 12} \Psi \Vert \;.
\end{equation}
\end{lemma}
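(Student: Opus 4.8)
The plan is to prove the inequality for $\hat{a}_\Lambda(g\chi_S)$ in detail; the case of $\hat{a}(g\chi_S)$ follows verbatim upon setting $\Lambda = \cdot$ (i.e., replacing $f_{j,\Lambda}$ by $f_j$, which is legitimate since the argument only uses $\omega^{-1}f_{j,\Lambda}\in L^2$, and by Assumption \ref{as:strong} also $\omega^{-1}f_j\in L^2$). It suffices to establish the bound for $\Psi$ in the test domain $\mathcal{D}_\Lambda$ (resp.\ $\mathcal{D}$), which is dense by the analogue of Proposition \ref{prop:D_dense}, and where all series below terminate after finitely many terms by the argument used in Lemma \ref{lem:di_hat_Gamma}; the general case then follows by density together with closedness of the operators involved, or simply by noting that the right-hand side is already phrased for arbitrary $\Psi\in\hfrak$ with the convention that $\Vert\di\hat\Gamma_\Lambda(\omega^2\chi_S)^{1/2}\Psi\Vert = \infty$ when $\Psi$ is outside the form domain.

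First I would expand $g\chi_S$ in the orthonormal basis $(e_\ell)_{\ell\in\mathbb{N}}$: writing $g\chi_S = \sum_\ell \langle e_\ell, g\chi_S\rangle e_\ell$ and using definition \eqref{eq:hat_a_critical} (together with its cut-off counterpart), we get $\hat{a}_\Lambda(g\chi_S)\Psi = \sum_\ell \overline{\langle e_\ell, g\chi_S\rangle}\,\hat{a}_{\Lambda,\ell}\Psi$. The key algebraic step is to insert the factor $\omega$: since $g\chi_S = \omega^{-1}(g\chi_S)\cdot\omega$ pointwise and $\omega^{-1}g\chi_S\in L^2$, one has $\langle e_\ell, g\chi_S\rangle = \langle \omega e_\ell, \omega^{-1}g\chi_S\rangle = \langle \omega\chi_S e_\ell, \omega^{-1}g\chi_S\rangle$ (the last equality because $\omega^{-1}g\chi_S$ is supported in $S$). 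Hence
\begin{equation}
\hat{a}_\Lambda(g\chi_S)\Psi = \sum_{\ell\in\mathbb{N}} \overline{\langle \omega\chi_S e_\ell, \omega^{-1}g\chi_S\rangle}\,\hat{a}_{\Lambda,\ell}\Psi\,.
\end{equation}
Now apply the Cauchy--Schwarz inequality in $\ell$, treating the summand as $\langle \omega^{-1}g\chi_S, \omega\chi_S e_\ell\rangle \cdot \hat{a}_{\Lambda,\ell}\Psi$: in the Hilbert space $\hfrak$,
\begin{equation}
\Big\Vert \sum_\ell \langle \omega^{-1}g\chi_S, \omega\chi_S e_\ell\rangle\,\hat{a}_{\Lambda,\ell}\Psi \Big\Vert \le \Big(\sum_\ell |\langle \omega^{-1}g\chi_S, \omega\chi_S e_\ell\rangle|^2\Big)^{1/2}\Big(\sum_\ell \Vert\hat{a}_{\Lambda,\ell}\Psi\Vert^2\Big)^{1/2}\,,
\end{equation}
but this naive splitting is lossy; instead one should use the sharper bound obtained by writing the norm squared as a double sum and recognizing it as $\langle \Psi, \di\hat\Gamma_\Lambda(\omega^2\chi_S)\Psi\rangle$ for the first factor. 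Concretely: $\sum_\ell |\langle \omega\chi_S e_\ell, \omega^{-1}g\chi_S\rangle|^2 = \Vert \omega^{-1}g\chi_S\Vert^2$ is \emph{not} right either; rather, one expands $\Vert\hat{a}_\Lambda(g\chi_S)\Psi\Vert^2 = \sum_{\ell,\ell'}\langle \omega^{-1}g\chi_S,\omega\chi_S e_\ell\rangle \overline{\langle \omega^{-1}g\chi_S,\omega\chi_S e_{\ell'}\rangle}\langle \hat{a}_{\Lambda,\ell}\Psi, \hat{a}_{\Lambda,\ell'}\Psi\rangle$, view this as a sesquilinear pairing of the vector $(\langle \hat{a}_{\Lambda,\ell}\Psi,\,\cdot\,\rangle)_\ell$ with the rank-one positive operator $P = |u\rangle\langle u|$ where $u_\ell := \langle \omega\chi_S e_\ell, \omega^{-1}g\chi_S\rangle$, and bound $P \le \Vert u\Vert^2 \cdot \mathbb{1}$. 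Since $\Vert u\Vert^2 = \sum_\ell |\langle \omega\chi_S e_\ell,\omega^{-1}g\chi_S\rangle|^2 \le \Vert \omega^{-1}g\chi_S\Vert^2$ (Bessel, as $(\omega\chi_S e_\ell)_\ell$ need not be orthonormal — wait, this needs care), and $\sum_\ell \langle\hat{a}_{\Lambda,\ell}\Psi,\hat{a}_{\Lambda,\ell}\Psi\rangle$ is precisely $\langle\Psi,\sum_\ell \hat{a}_{\Lambda,\ell}^*\hat{a}_{\Lambda,\ell}\Psi\rangle$.

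The cleaner route, which I would actually carry out, is: observe that $\di\hat\Gamma_\Lambda(\omega^2\chi_S) = \sum_\ell \hat{a}_\Lambda(\omega\chi_S e_\ell)^*\hat{a}_\Lambda(\omega\chi_S e_\ell)$ after re-expanding $\omega^2\chi_S$ using that $(\omega\chi_S e_\ell)_\ell$, while not orthonormal, satisfies $\sum_\ell |\langle h, \omega\chi_S e_\ell\rangle|^2 \le \Vert \omega\chi_S h\Vert^2$... Actually the correct identity, using $\langle e_\ell,\omega^2\chi_S e_{\ell'}\rangle = \langle \omega\chi_S e_\ell, \omega\chi_S e_{\ell'}\rangle$, is $\di\hat\Gamma_\Lambda(\omega^2\chi_S) = \sum_{\ell,\ell'}\langle\omega\chi_S e_\ell,\omega\chi_S e_{\ell'}\rangle \hat a_{\Lambda,\ell}^*\hat a_{\Lambda,\ell'}$, which is manifestly of the form $C^*C$ with $C := \sum_\ell \langle\,\cdot\,,\omega\chi_S e_\ell\rangle_{\!L^2}\otimes$-type contraction — i.e.\ $\langle\Psi,\di\hat\Gamma_\Lambda(\omega^2\chi_S)\Psi\rangle = \sum_\ell \Vert \hat a_\Lambda(\omega\chi_S e_\ell)\Psi\Vert^2$ only if $(\omega\chi_S e_\ell)$ were orthonormal; in general $\langle\Psi,\di\hat\Gamma_\Lambda(\omega^2\chi_S)\Psi\rangle = \Vert \sum_\ell (\hat a_{\Lambda,\ell}\Psi)\langle\,\cdot\,\rangle\Vert$... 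The honest statement is the vector-valued Cauchy--Schwarz: $\Vert\hat a_\Lambda(g\chi_S)\Psi\Vert = \Vert\sum_\ell u_\ell \hat a_{\Lambda,\ell}\Psi\Vert$ and, since $g\chi_S = \sum_\ell (\omega^{-1}g\chi_S, \omega\chi_S e_\ell)\,e_\ell \cdot$... I would write $g\chi_S/\omega = \sum_\ell c_\ell (\chi_S e_\ell)$ in $L^2(S)$ (expanding in the ONB $(\chi_S e_\ell)$ of $L^2(S)$ after discarding those $e_\ell$ not meeting $S$ — but $(e_\ell)$ is an ONB of $L^2(X)$, not adapted to $S$), with $\sum_\ell|c_\ell|^2 = \Vert\omega^{-1}g\chi_S\Vert^2$, and then $g\chi_S = \sum_\ell c_\ell \omega\chi_S e_\ell$, so $\hat a_\Lambda(g\chi_S) = \sum_\ell \bar c_\ell \hat a_\Lambda(\omega\chi_S e_\ell)$ (by anti-linearity of $\hat a_\Lambda$ in its argument, which holds since $\hat a_\Lambda(h)= a(h)+\ldots$ with $a(h)$ anti-linear), whence by Cauchy--Schwarz $\Vert\hat a_\Lambda(g\chi_S)\Psi\Vert \le (\sum_\ell|c_\ell|^2)^{1/2}(\sum_\ell\Vert\hat a_\Lambda(\omega\chi_S e_\ell)\Psi\Vert^2)^{1/2} = \Vert\omega^{-1}g\chi_S\Vert\cdot\langle\Psi,\di\hat\Gamma_\Lambda(\omega^2\chi_S)\Psi\rangle^{1/2}$, using $\di\hat\Gamma_\Lambda(\omega^2\chi_S)=\sum_\ell\hat a_\Lambda(\omega\chi_S e_\ell)^*\hat a_\Lambda(\omega\chi_S e_\ell)$.

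The main obstacle, then, is purely bookkeeping: one must fix the right orthonormal system in $L^2(S)$, verify that it extends/relates to the global basis $(e_\ell)$ in a way compatible with the defining series \eqref{eq:hat_a_critical} for $\hat a$ and with $\di\hat\Gamma_\Lambda$, and justify the term-by-term manipulations (anti-linearity of $h\mapsto\hat a_\Lambda(h)$, interchange of the $\ell$-sum with the inner product, convergence of $\sum_\ell\Vert\hat a_\Lambda(\omega\chi_S e_\ell)\Psi\Vert^2$) — all of which reduce, for $\Psi\in\mathcal{D}_\Lambda$ (resp.\ $\mathcal{D}$), to the finiteness arguments already deployed in Lemma \ref{lem:di_hat_Gamma}, and extend to general $\Psi$ by the quadratic-form interpretation of the right-hand side. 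I expect no substantive difficulty beyond choosing notation that makes the basis-of-$L^2(S)$ versus basis-of-$L^2(X)$ distinction harmless; the inequality itself is a one-line Cauchy--Schwarz once that is in place.
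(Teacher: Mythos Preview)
Your core idea---factor $g\chi_S = (\omega^{-1}g\chi_S)\cdot(\omega\chi_S)$ and apply Cauchy--Schwarz in a basis index---is exactly what the paper does, and your final route (write $\hat a_\Lambda(g\chi_S)=\sum_\ell \bar c_\ell\,\hat a_\Lambda(\omega\chi_S e_\ell)$, then Cauchy--Schwarz, then identify $\sum_\ell\hat a_\Lambda(\omega\chi_S e_\ell)^*\hat a_\Lambda(\omega\chi_S e_\ell)=\di\hat\Gamma_\Lambda(\omega^2\chi_S)$) is correct and equivalent to the paper's argument.

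The ``bookkeeping obstacle'' you flag at the end, however, is illusory. You do \emph{not} need an orthonormal basis of $L^2(S)$: the global basis $(e_\ell)$ of $L^2(X)$ already does the job. Simply take $c_\ell := \langle e_\ell, \omega^{-1}g\chi_S\rangle$, so that $\sum_\ell |c_\ell|^2 = \Vert \omega^{-1}g\chi_S\Vert^2$ by Parseval, and $\sum_\ell c_\ell\, \omega\chi_S e_\ell = \omega\chi_S\cdot(\omega^{-1}g\chi_S) = g\chi_S$. The paper phrases this as a one-line insertion of the resolution of identity $\sum_{\ell'}|e_{\ell'}\rangle\langle e_{\ell'}|$ between $\omega^{-1}g\chi_S$ and $\omega\chi_S e_\ell$:
\[
\Vert \hat a(g\chi_S)\Psi\Vert
= \Big\Vert \sum_{\ell,\ell'} \Big\langle \tfrac{g}{\omega}\chi_S, e_{\ell'}\Big\rangle \langle e_{\ell'},\omega\chi_S e_\ell\rangle\,\hat a_\ell\Psi\Big\Vert
\le \Big\Vert \tfrac{g}{\omega}\chi_S\Big\Vert
\Big(\sum_{\ell'}\Big\Vert\sum_\ell\langle e_{\ell'},\omega\chi_S e_\ell\rangle\hat a_\ell\Psi\Big\Vert^2\Big)^{1/2},
\]
and the last factor is exactly $\langle\Psi,\di\hat\Gamma(\omega^2\chi_S)\Psi\rangle^{1/2}$ after resumming over $\ell'$. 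So your several false starts and the worry about a basis adapted to $S$ can be dropped entirely; the proof is three lines once you commit to the global $(e_\ell)$.
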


\begin{proof}
By Definition~\eqref{eq:hat_a_critical} and the Cauchy--Schwarz inequality, we have
\begin{equation} \label{eq:ahat_CS_bound}
\begin{aligned}
    &\Vert \hat{a}(g \chi_S) \Psi \Vert
    = \Big\Vert \sum_{\ell, \ell' \in \mathbb{N}} \Big\langle \frac{g}{\omega} \chi_S, e_{\ell'} \Big\rangle \langle e_{\ell'}, \omega \chi_S e_\ell \rangle \hat{a}_\ell \Psi \Big\Vert \\
    &\le \Big( \sum_{\ell' \in \mathbb{N}} \Big\vert \Big\langle \frac{g}{\omega} \chi_S, e_{\ell'} \Big\rangle \Big\vert^2 \Big)^{\frac12}
        \Big( \sum_{\ell' \in \mathbb{N}} \Big\Vert \sum_{\ell \in \mathbb{N}} \langle e_{\ell'}, \omega \chi_S e_\ell \rangle \hat{a}_\ell \Psi \Big\Vert^2 \Big)^{\frac 12}
    = \Big\Vert \frac{g}{\omega} \chi_S \Big\Vert
        \Vert \di \hat{\Gamma} (\omega^2 \chi_S)^{\frac 12} \Psi \Vert \;.
\end{aligned}
\end{equation}
The proof for $ \hat{a}_\Lambda(g \chi_S) $ is identical.
\end{proof}

\begin{lemma} \label{lem:HLambdasquare_bound}
    Let $ \omega $, $ f_j $, and $ B_j $ satisfy Assumption~\ref{as:strong} and let $ \frac{f_{j,\Lambda}}{\omega} \to \frac{f_j}{\omega} $ in $ L^2(X) $.
    Then, given $ \varepsilon > 0 $ and $S\subseteq X$ a measurable set such that 
    \begin{equation}
        \left\|B_j\right\|\left\|\frac{f_j}{\omega}\chi_S\right\|<\delta:=\sqrt{\frac{\varepsilon}{8N^2}}\quad\text{ for all }1\leq j\leq N,
    \end{equation}
    there exists some $ \Lambda_{\min} \in \mathbb{N} $ such that, for all $ \Lambda \ge \Lambda_{\min} $, 
\begin{equation}
    \di \hat{\Gamma}(\omega \chi_S)^2 \ge (1-\varepsilon) \; \di \hat{\Gamma}(\omega^2 \chi_S) \;, \qquad
    \di \hat{\Gamma}_\Lambda(\omega \chi_S)^2 \ge (1-\varepsilon) \; \di \hat{\Gamma}_\Lambda(\omega^2 \chi_S) \;.
\end{equation}
Furthermore, $ \di \Gamma(\omega \chi_S)^2 \ge \di \Gamma(\omega^2 \chi_S) $.
\end{lemma}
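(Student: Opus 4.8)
The plan is to deduce all three inequalities from quadratic–form estimates on the dense test domains, the dressed part by a ``Wick ordering'' computation. The free inequality $\di\Gamma(\omega\chi_S)^2\ge\di\Gamma(\omega^2\chi_S)$ is immediate and needs no smallness, since no dressing is involved: on $S_n L^2(X)^{\otimes n}$ both operators act by multiplication and, as $\chi_S^2=\chi_S$,
\[
  \Bigl(\textstyle\sum_{i=1}^n \omega(k_i)\chi_S(k_i)\Bigr)^2 = \sum_{i=1}^n \omega(k_i)^2\chi_S(k_i) + \sum_{i\neq i'}\omega(k_i)\chi_S(k_i)\,\omega(k_{i'})\chi_S(k_{i'}) \ge \sum_{i=1}^n \omega(k_i)^2\chi_S(k_i)\;,
\]
the cross terms being non-negative.

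For the dressed operators I write $G:=\omega\chi_S$, $G_{\ell\ell'}:=\langle e_\ell,Ge_{\ell'}\rangle$, and $\phi_j:=\omega^{-1}f_j\chi_S\in L^2(X)$, so that the hypothesis on $S$ reads $\Vert B_j\Vert\,\Vert\phi_j\Vert<\delta=\sqrt{\varepsilon/(8N^2)}$. I choose the orthonormal basis $(e_\ell)$ inside $\dom(\omega^2)$ (possible, cf.\ the footnote to Lemma~\ref{lem:di_hat_Gamma}), so that $\di\hat\Gamma(\omega\chi_S)$ and $\di\hat\Gamma(\omega^2\chi_S)$ are well-defined on $\mathcal{D}$ by Lemma~\ref{lem:di_hat_Gamma}. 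For $\Psi\in\mathcal{D}$ — where, exactly as in the proof of Lemma~\ref{lem:di_hat_Gamma}, every multiple sum below has only finitely many nonzero terms after acting on $\Psi$ — I expand $\Vert\di\hat\Gamma(G)\Psi\Vert^2=\sum_{\ell_1,\ell_2,\ell_3,\ell_4}G_{\ell_1\ell_2}G_{\ell_3\ell_4}\langle\Psi,\hat a_{\ell_1}^*\hat a_{\ell_2}\hat a_{\ell_3}^*\hat a_{\ell_4}\Psi\rangle$ and move $\hat a_{\ell_2}$ past $\hat a_{\ell_3}^*$ via the modified CCR~\eqref{eq:ahat_CCR}. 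The $\delta_{\ell_2,\ell_3}$ term contracts — using $\sum_{\ell_2}G_{\ell_1\ell_2}G_{\ell_2\ell_4}=\langle e_{\ell_1},\omega^2\chi_S e_{\ell_4}\rangle$ — to exactly $\langle\Psi,\di\hat\Gamma(\omega^2\chi_S)\Psi\rangle$, leaving
\[
  \Vert\di\hat\Gamma(\omega\chi_S)\Psi\Vert^2 = \langle\Psi,\di\hat\Gamma(\omega^2\chi_S)\Psi\rangle + \langle\Psi,\mathcal{A}\Psi\rangle + \langle\Psi,\mathcal{C}\Psi\rangle\;,
\]
where $\mathcal{A}=\sum G_{\ell_1\ell_2}G_{\ell_3\ell_4}\hat a_{\ell_1}^*\hat a_{\ell_3}^*\hat a_{\ell_2}\hat a_{\ell_4}$ is the Wick-ordered quartic term and $\mathcal{C}=\sum G_{\ell_1\ell_2}G_{\ell_3\ell_4}\bigl(\sum_{j,j'}[B_j,B_{j'}^*]\langle e_{\ell_2},\tfrac{f_j}{\omega}\rangle\langle\tfrac{f_{j'}}{\omega},e_{\ell_3}\rangle\bigr)\hat a_{\ell_1}^*\hat a_{\ell_4}$ is the correction produced by the \emph{non-canonical} commutator.

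For $\mathcal{A}\ge0$ I use that $(G_{\ell\ell'})$ is positive semi-definite (as $G\ge0$): writing $G_{\ell_1\ell_2}=\sum_p\langle e_{\ell_1},\omega^{1/2}\chi_S e_p\rangle\langle e_p,\omega^{1/2}\chi_S e_{\ell_2}\rangle$ and recognizing $\sum_\ell\langle e_\ell,\omega^{1/2}\chi_S e_p\rangle\hat a_\ell^*=\hat a(\omega^{1/2}\chi_S e_p)^*$, $\sum_\ell\langle e_p,\omega^{1/2}\chi_S e_\ell\rangle\hat a_\ell=\hat a(\omega^{1/2}\chi_S e_p)$, together with $[\hat a_\ell,\hat a_{\ell'}]=[\hat a_\ell^*,\hat a_{\ell'}^*]=0$, one gets $\langle\Psi,\mathcal{A}\Psi\rangle=\sum_{p,q}\Vert\hat a(\omega^{1/2}\chi_S e_p)\hat a(\omega^{1/2}\chi_S e_q)\Psi\Vert^2\ge0$. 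For $\mathcal{C}$ I group the sums differently: contracting the inner labels against $f_j/\omega$ via $\sum_{\ell_2}G_{\ell_1\ell_2}\langle e_{\ell_2},\tfrac{f_j}{\omega}\rangle=\langle\omega\chi_S e_{\ell_1},\phi_j\rangle$ gives $\langle\Psi,\mathcal{C}\Psi\rangle=\sum_{j,j'}\langle v_j,[B_j,B_{j'}^*]v_{j'}\rangle$ with $v_j:=\sum_\ell\langle\phi_j,\omega\chi_S e_\ell\rangle\hat a_\ell\Psi$. Expanding $\phi_j$ in the basis yields $v_j=\sum_p\overline{\langle e_p,\phi_j\rangle}\,\hat a(\omega\chi_S e_p)\Psi$, whence, by Cauchy--Schwarz and $\sum_p\hat a(\omega\chi_S e_p)^*\hat a(\omega\chi_S e_p)=\di\hat\Gamma(\omega^2\chi_S)$, $\Vert v_j\Vert\le\Vert\phi_j\Vert\,\langle\Psi,\di\hat\Gamma(\omega^2\chi_S)\Psi\rangle^{1/2}$. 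With $\Vert[B_j,B_{j'}^*]\Vert\le2\Vert B_j\Vert\Vert B_{j'}\Vert$ and $\Vert B_j\Vert\Vert\phi_j\Vert<\delta$ this gives $|\langle\Psi,\mathcal{C}\Psi\rangle|<2N^2\delta^2\langle\Psi,\di\hat\Gamma(\omega^2\chi_S)\Psi\rangle=\tfrac{\varepsilon}{4}\langle\Psi,\di\hat\Gamma(\omega^2\chi_S)\Psi\rangle$, so
\[
  \Vert\di\hat\Gamma(\omega\chi_S)\Psi\Vert^2 \ge \bigl(1-\tfrac{\varepsilon}{4}\bigr)\langle\Psi,\di\hat\Gamma(\omega^2\chi_S)\Psi\rangle \ge (1-\varepsilon)\langle\Psi,\di\hat\Gamma(\omega^2\chi_S)\Psi\rangle
\]
for every $\Psi\in\mathcal{D}$, which is the first dressed inequality in the quadratic-form sense on $\mathcal{D}$. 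The cut-off version follows from the identical computation with $f_{j,\Lambda}$, $\hat a_{\Lambda,\ell}$, $T_\Lambda$, $\mathcal{D}_\Lambda$ and $\phi_{j,\Lambda}:=\omega^{-1}f_{j,\Lambda}\chi_S$ throughout; since $\omega^{-1}f_{j,\Lambda}\to\omega^{-1}f_j$ in $L^2(X)$ and multiplication by $\chi_S$ is a contraction, $\Vert\phi_{j,\Lambda}\Vert\to\Vert\phi_j\Vert$, so the \emph{strict} bound $\Vert B_j\Vert\Vert\phi_j\Vert<\delta$ produces a $\Lambda_{\min}$ with $\Vert B_j\Vert\Vert\phi_{j,\Lambda}\Vert<\delta$ for all $\Lambda\ge\Lambda_{\min}$ and all $j$, which is all the cut-off argument requires.

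The main obstacle is the treatment of $\mathcal{C}$: because the dressed $\hat a_\ell$ violate the canonical CCR, the usual identity ${\di\Gamma(G)^2={:}\di\Gamma(G)^2{:}+\di\Gamma(G^2)}$ is spoiled by $\mathcal{C}$, and one must regroup the quadruple sums in \emph{two incompatible ways} — one making $\mathcal{A}$ a manifest sum of squares (using positivity of $(G_{\ell\ell'})$), the other making $\mathcal{C}$ amenable to Cauchy--Schwarz — and, crucially, pull out of $\mathcal{C}$ the \emph{square-integrable} function $\phi_j=\omega^{-1}f_j\chi_S$ rather than $\omega^{-1/2}f_j\chi_S$, which in Case~2 need not be square-integrable. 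Checking that all these rearrangements are legitimate on $\mathcal{D}$ (finiteness of the sums, as established in the proof of Lemma~\ref{lem:di_hat_Gamma}) is the remaining bookkeeping.
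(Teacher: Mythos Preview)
Your proof is correct and follows essentially the same route as the paper: expand $\di\hat\Gamma(\omega\chi_S)^2$ using the dressed CCR~\eqref{eq:ahat_CCR}, discard the Wick-ordered quartic as non-negative, and bound the $[B_j,B_{j'}^*]$-correction by $\Vert\omega^{-1}f_j\chi_S\Vert$ times $\di\hat\Gamma(\omega^2\chi_S)^{1/2}$ via Cauchy--Schwarz. The only cosmetic differences are that the paper packages your bound on $v_j$ as a citation of Lemma~\ref{lem:ahatconversion} (your $v_j$ is exactly $\hat a(f_j\chi_S)\Psi$), states $\mathcal{A}\ge0$ without the explicit $\omega^{1/2}\chi_S$-factorization you supply, and uses the cruder bound $\Vert B_j\Vert\Vert\omega^{-1}f_{j,\Lambda}\chi_S\Vert<2\delta$ for $\Lambda$ large, giving $8N^2\delta^2=\varepsilon$ instead of your $2N^2\delta^2=\varepsilon/4$.
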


\begin{proof}
First notice that, since $ \omega^{-1}f_{j,\Lambda}\to \omega^{-1}f_j $, we have $ \Vert B_j \Vert \Vert \omega^{-1}f_{j,\Lambda} \chi_S\Vert < 2 \delta $ for $ \Lambda $ large enough. We first prove the lower bound on $ \di \hat{\Gamma}_\Lambda(\omega \chi_S)^2 $ for such $ \Lambda $. Using the CCR and $ [B_j,B_{j'}] = 0 $, one easily verifies $ [\hat{a}_{\Lambda,\ell}, \hat{a}_{\Lambda',\ell'}] = 0 $, so
\begin{equation}
\begin{aligned}
    &\di \hat{\Gamma}_\Lambda(\omega \chi_S)^2 \\
    &= \sum_{\ell_1, \ell_2, \ell_3, \ell_4 \in \mathbb{N}}
        \langle e_{\ell_1}, \omega \chi_S e_{\ell_2} \rangle 
        \langle e_{\ell_3}, \omega \chi_S e_{\ell_4} \rangle 
        \big( \hat{a}^*_{\Lambda,\ell_1} \hat{a}^*_{\Lambda,\ell_3} \hat{a}_{\Lambda,\ell_2} \hat{a}_{\Lambda,\ell_4}
        + \hat{a}^*_{\Lambda,\ell_1} [\hat{a}_{\Lambda,\ell_2}, \hat{a}^*_{\Lambda,\ell_3}] \hat{a}_{\Lambda,\ell_4} \big) \;.
\end{aligned}
\end{equation}
The first contribution is non-negative, and evaluating the commutator we get
\begin{equation} \label{eq:HLambda_intermediate_bound}
    \di \hat{\Gamma}_\Lambda(\omega \chi_S)^2
    \ge \di \hat{\Gamma}_\Lambda(\omega^2 \chi_S)
        + \sum_{j,j'=1}^N [B_j,B_{j'}^*] \hat{a}_\Lambda^*(f_{j,\Lambda} \chi_S) \hat{a}_\Lambda(f_{j',\Lambda} \chi_S) \;.
\end{equation}
We bound the second contribution using the Cauchy--Schwarz inequality and Lemma~\ref{lem:ahatconversion}:
\begin{equation}
\begin{aligned}
    &\bigg\langle \Psi, \sum_{j,j'=1}^N [B_j,B_{j'}^*] \hat{a}_\Lambda^*(f_{j,\Lambda} \chi_S) \hat{a}_\Lambda(f_{j',\Lambda} \chi_S) \Psi \bigg\rangle 
    \ge - \sum_{j, j'=1}^N \Vert [B_j,B_{j'}^*] \Vert
        \Vert \hat{a}_\Lambda(f_{j,\Lambda} \chi_S) \Psi \Vert 
        \Vert \hat{a}_\Lambda(f_{j',\Lambda} \chi_S) \Psi \Vert \\
    &\ge - 2 N^2 \Big( \max_{1 \le j \le N} \Vert B_j \Vert \Big\Vert \frac{f_{j,\Lambda}}{\omega} \chi_S \Big\Vert \Big)^2
    \Vert \di \hat{\Gamma}_\Lambda(\omega^2 \chi_S)^{\frac 12} \Psi \Vert^2
    \\
    &
    \ge -8N^2\delta^2\Vert \di \hat{\Gamma}_\Lambda(\omega^2 \chi_S)^{\frac 12} \Psi \Vert^2
    \\
    &
    =-\varepsilon\left\langle \Psi,  \di \hat{\Gamma}_\Lambda(\omega^2 \chi_S)\Psi\right\rangle
    \;,
\end{aligned}
\end{equation}
which gives 
\begin{equation}
    \sum_{j,j'=1}^N [B_j,B_{j'}^*] \hat{a}_\Lambda^*(f_{j,\Lambda} \chi_S) \hat{a}_\Lambda(f_{j',\Lambda} \chi_S)
    \ge -\varepsilon\, \di \hat{\Gamma}_\Lambda(\omega^2 \chi_S) \;,
\end{equation}
and together with~\eqref{eq:HLambda_intermediate_bound} readily implies the claim. The bound $ \di \hat\Gamma(\omega \chi_S)^2 \ge (1 - \varepsilon) \di \hat\Gamma(\omega^2 \chi_S) $ follows analogously, replacing $ f_{j,\Lambda} $ with $ f_j $. In particular, $ \hat{a}(f_j \chi_S) $ is well-defined by Lemma~\ref{lem:ahatconversion}, even if $ \frac{f_j}{\omega} \chi_S \in L^2(X) $ but $ f_j \chi_S \notin L^2(X) $. Also, the bound $ \di \Gamma(\omega \chi_S)^2 \ge \di \Gamma(\omega^2 \chi_S) $ is analogous, where the $ [B_j, B_{j'}] $-term in~\eqref{eq:HLambda_intermediate_bound} is absent.
\end{proof}

The following lemma will allow us to remove the restriction to small coupling, and thus finally to define $ \hat{a}(g) $ on the desired domain $ \dom(H) = \dom(\di \hat\Gamma(\omega)) $ for $g\notin L^2(X)$ with $\omega^{-1}g\in L^2(X)$.

\begin{lemma} \label{lem:ahat_R_bound}
Let $ \omega $, $ f_j $, and $ B_j $ satisfy Assumption~\ref{as:strong} and let $ \frac{f_{j,\Lambda}}{\omega} \to \frac{f_j}{\omega} $ in $ L^2(X) $. Then, for $ \Re(-z) > 0 $ and $ \Lambda_{\min} \in \mathbb{N} $ large enough, for all $ \frac{g}{\omega} \in L^2(X) $ and $ \Lambda \ge \Lambda_{\min} $, there exists a constant $ C > 0 $ such that
\begin{equation}
    \Vert \hat{a}(g) R(z) \Vert
    \le C \Big\Vert \frac{g}{\omega} \Big\Vert \;, \qquad
    \Vert \hat{a}_\Lambda(g) R_\Lambda(z) \Vert
    \le C \Big\Vert \frac{g}{\omega} \Big\Vert \;.
\end{equation}
\end{lemma}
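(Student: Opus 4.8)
The plan is to prove that $\hat{a}(g)$ is relatively bounded with respect to $H$ with relative bound controlled by $\Vert \omega^{-1} g \Vert$, and likewise that $\hat{a}_\Lambda(g)$ is relatively bounded with respect to $H_\Lambda - E_\Lambda$ with a constant uniform in $\Lambda \ge \Lambda_{\min}$; the claimed resolvent estimates then follow by composing with $R(z)$, respectively $R_\Lambda(z)$, and using that for $\Re(-z)$ large the norms $\Vert R(z) \Vert$, $\Vert H R(z) \Vert$ and their cut-off analogues are bounded --- uniformly in $\Lambda$, since $H_\Lambda - E_\Lambda = K + \di\hat{\Gamma}_\Lambda(\omega) \ge - \Vert K \Vert$ by the cut-off analogue of~\eqref{eq:completingsquare_ell}. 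By Lemma~\ref{lem:ahatconversion} applied with $S = X$, we have $\Vert \hat{a}(g) \Psi \Vert \le \Vert \omega^{-1} g \Vert \, \Vert \di\hat{\Gamma}(\omega^2)^{1/2} \Psi \Vert$, so the entire statement reduces to the single operator bound $\Vert \di\hat{\Gamma}(\omega^2)^{1/2} R(z) \Vert \le C$ (and its cut-off analogue). I would obtain this from the \emph{form} inequality
\[
    \di\hat{\Gamma}(\omega^2) \le C \, ( \di\hat{\Gamma}(\omega) + 1 )^2 ,
\]
because operator monotonicity of the square root then yields $\di\hat{\Gamma}(\omega^2)^{1/2} \le \sqrt{C} \, ( \di\hat{\Gamma}(\omega) + 1 )$, and, recalling that $\di\hat{\Gamma}(\omega) = H - K$ so that $( \di\hat{\Gamma}(\omega) + 1 ) R(z) = 1 + ( 1 - K + z ) R(z)$ is bounded, one gets $\Vert \di\hat{\Gamma}(\omega^2)^{1/2} R(z) \Psi \Vert \le \sqrt{C} \, \Vert ( \di\hat{\Gamma}(\omega) + 1 ) R(z) \Psi \Vert \le C' \Vert \Psi \Vert$; the cut-off case is word for word identical.

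It remains to establish the displayed form inequality. Moving all annihilation operators in $\di\hat{\Gamma}(\omega)^2$ to the right, exactly as in the proof of Lemma~\ref{lem:HLambdasquare_bound}, one writes $\di\hat{\Gamma}(\omega)^2$ as the sum of $\di\hat{\Gamma}(\omega^2)$, a manifestly non-negative normal-ordered quartic expression in the $\hat{a}_\ell$, and the error term $\sum_{j,j'=1}^N [B_j, B_{j'}^*] \, \hat{a}^*(f_j) \, \hat{a}(f_{j'})$; dropping the non-negative term leaves $\di\hat{\Gamma}(\omega)^2 \ge \di\hat{\Gamma}(\omega^2) + \sum_{j,j'} [B_j, B_{j'}^*] \hat{a}^*(f_j) \hat{a}(f_{j'})$. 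I then split $f_j = f_j \chi_S + f_j \chi_{S^c}$, with $S := \{ \omega > \Lambda_0 \}$ and $\Lambda_0$ chosen --- using $\omega^{-1} f_j \in L^2(X)$, and, for the cut-off version, $\omega^{-1} f_{j,\Lambda} \to \omega^{-1} f_j$, which fixes $\Lambda_{\min}$ --- so large that $\Vert B_j \Vert \, \Vert \omega^{-1} f_j \chi_S \Vert$ (respectively $\Vert B_j \Vert \, \Vert \omega^{-1} f_{j,\Lambda} \chi_S \Vert$) lies below the threshold $\delta$ of Lemma~\ref{lem:HLambdasquare_bound} for $\varepsilon = \frac12$. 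On $S$, a Cauchy--Schwarz estimate combined with Lemma~\ref{lem:ahatconversion} and the smallness of the coupling bounds the corresponding part of the error, in absolute value, by $\frac12 \di\hat{\Gamma}(\omega^2 \chi_S) \le \frac12 \di\hat{\Gamma}(\omega^2)$; on $S^c$, where $\omega^2 \chi_{S^c} \le \Lambda_0 \omega$ and $f_j \chi_{S^c} \in L^2(X)$, Lemma~\ref{lem:ahatconversion} together with $\di\hat{\Gamma}(\omega^2 \chi_{S^c}) \le \Lambda_0 \di\hat{\Gamma}(\omega)$ bounds the corresponding part by a constant multiple of $\di\hat{\Gamma}(\omega)$, while the mixed $S$--$S^c$ cross terms are absorbed by Young's inequality into a small multiple of $\di\hat{\Gamma}(\omega^2)$ plus a multiple of $\di\hat{\Gamma}(\omega)$. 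Rearranging and absorbing the $\di\hat{\Gamma}(\omega^2)$ contributions to the left then gives $\di\hat{\Gamma}(\omega^2) \le 2 \di\hat{\Gamma}(\omega)^2 + C \di\hat{\Gamma}(\omega) \le C' ( \di\hat{\Gamma}(\omega) + 1 )^2$, with $C'$ uniform in $\Lambda \ge \Lambda_{\min}$ for the cut-off analogue.

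The main obstacle is the control of the error term $\sum_{j,j'} [B_j, B_{j'}^*] \hat{a}^*(f_j) \hat{a}(f_{j'})$: it need not be small, since the coupling $\Vert B_j \Vert \Vert \omega^{-1} f_j \Vert$, while finite, is not assumed small, and one must exploit the dichotomy between the far-ultraviolet regime --- where the coupling is small, so the term is controlled by (and absorbed into) $\di\hat{\Gamma}(\omega^2)$ --- and the complementary regime, where the dispersion is bounded, so that $\di\hat{\Gamma}(\omega^2 \chi_{S^c})$ collapses to a multiple of $\di\hat{\Gamma}(\omega)$; this is exactly why both the small-coupling hypothesis of Lemma~\ref{lem:HLambdasquare_bound} and the auxiliary set $S$ are needed. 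A secondary, purely technical point is that the computations above are carried out first on the test domain $\mathcal{D}$ and must then be extended to $\ran R(z) = \dom(H) \subseteq \dom(\di\hat{\Gamma}(\omega))$ by a standard density argument, with the particle-number decay of the dressed vectors (Lemma~\ref{lem:bound_with_powers_of_N}) ensuring absolute convergence of all series involved. All constants are uniform in $g$ and, for $\Lambda \ge \Lambda_{\min}$, in $\Lambda$, as required by the proof of Theorem~\ref{thm:normresolvent}.
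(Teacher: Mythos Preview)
Your approach is correct and takes a genuinely different route from the paper's. The paper splits $g = g\chi_S + g\chi_{S^c}$ with $S = \{\omega > \kappa\}$ and treats the two pieces separately: for the UV tail $g\chi_S$ it passes (via Lemmas~\ref{lem:ahatconversion} and~\ref{lem:HLambdasquare_bound}) to $\Vert \di\hat\Gamma_\Lambda(\omega\chi_S) R_\Lambda(z) \Vert$, and then bounds the latter by introducing a \emph{reduced resolvent} $R_\Lambda^>(z) := (\di\Gamma(\omega\chi_{S^c}) + \di\hat\Gamma_\Lambda(\omega\chi_S) - z)^{-1}$ and running a Neumann expansion of $R_\Lambda(z)$ around $R_\Lambda^>(z)$, exploiting the commutativity $[\di\Gamma(\omega\chi_{S^c}),\di\hat\Gamma_\Lambda(\omega\chi_S)] = 0$ and the infinitesimal Kato smallness of the low-frequency interaction. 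You instead keep $g$ whole and aim directly at the single form inequality $\di\hat\Gamma(\omega^2) \le C(\di\hat\Gamma(\omega)+1)^2$; the UV/IR split is applied not to $g$ but to the form factors $f_j$ inside the commutator error $\sum_{j,j'}[B_j,B_{j'}^*]\hat{a}^*(f_j)\hat{a}(f_{j'})$, using monotonicity of $\xi \mapsto \di\hat\Gamma(\xi)$ (which follows from non-negativity, as in the proof of Lemma~\ref{lem:ahatconversion}) together with Young's inequality for the $S$--$S^c$ cross terms. Your route bypasses the reduced resolvent and the Neumann series entirely and delivers a clean reusable operator inequality; the paper's route avoids having to handle cross terms but pays with resolvent-expansion machinery. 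Both arguments rest on the same structural insight --- small coupling in the far UV versus bounded dispersion in the complement --- applied at different stages.
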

In particular, $ \hat{a}(g) R(z) $ and $ \hat{a}_\Lambda(g) R_\Lambda(z) $ are bounded operators defined on all of $ \hfrak $.
\begin{proof}
We want to employ Lemma~\ref{lem:HLambdasquare_bound}, which only holds for small coupling. To do so, we split $ g $ as follows:
\begin{enumerate}
    \item Fix an energy cut-off $ \kappa > 0 $, and set with $ S := \{k \in X : \omega(k) > \kappa \} $, such that for some given $ \delta > 0 $ to be fixed later, we have $ \Vert \omega^{-1}f_j\chi_S \Vert < \delta $ for $ 1 \le j \le N $;
    \item Then, define the UV tail $ g^> := g \chi_S $ and the finite-energy contribution $ g^{\le} := g (1-\chi_S) $, so that
\begin{equation} \label{eq:ag_split}
    \Vert \hat{a}_\Lambda(g) R_\Lambda(z) \Vert
    \le \Vert \hat{a}_\Lambda(g^>) R_\Lambda(z) \Vert
        + \Vert \hat{a}_\Lambda(g^{\le}) R_\Lambda(z) \Vert \;.
\end{equation}
\end{enumerate}
We will bound the two terms on the right-hand side of Eq.~\eqref{eq:ag_split} separately, starting from the first one. We introduce $ \omega^> := \omega \chi_S $, $ \omega^{\le} := \omega(1-\chi_S) $ and $ f_{j,\Lambda}^> := f_{j,\Lambda} \chi_S $, $ f_{j,\Lambda}^{\le} := f_{j,\Lambda}(1-\chi_S) $, and split $ H_\Lambda - E_\Lambda $ as
\begin{equation} \label{eq:HLambda_split}
\begin{aligned}
    H_\Lambda - E_\Lambda
    &= K + \di \hat{\Gamma}_\Lambda(\omega^>) + \di \hat{\Gamma}_\Lambda(\omega^{\le}) \\
    &= K + \di \hat{\Gamma}_\Lambda(\omega^>) + \di \Gamma(\omega^{\le}) + H_{\mathrm{int},\Lambda}^{\le} \;, \\
    H_{\mathrm{int},\Lambda}^{\le}
    &:= \sum_{j=1}^N B_j^* a(f_{j,\Lambda}^{\le})
        + \sum_{j=1}^N B_j a^*(f_{j,\Lambda}^{\le})
        + \sum_{j,j'=1}^N B_j^* B_{j'} \int \frac{\overline{f_{j,\Lambda}^{\le}(k)} f_{j',\Lambda}^{\le}(k)}{\omega(k)} \di k \;.
\end{aligned}
\end{equation}
Then, using Lemma~\ref{lem:ahatconversion} and~\ref{lem:HLambdasquare_bound} with $ \varepsilon = \frac 12 $, for $ \delta $ small enough we get
\begin{equation} \label{eq:aR_UV_reduction}
    \Vert \hat{a}_\Lambda(g^>) R_\Lambda(z)\Psi \Vert
    \le \Big\Vert \frac{g}{\omega} \chi_S \Big\Vert \Vert \di \hat{\Gamma}_\Lambda(\omega^2 \chi_S)^{\frac 12} R_\Lambda(z) \Psi \Vert^2
    \le 2 \Big\Vert \frac{g}{\omega} \Big\Vert
    \Vert \di \hat{\Gamma}_\Lambda(\omega^>) R_\Lambda(z) \Psi \Vert^2 \;,
\end{equation}
uniformly in $ \Lambda \ge \Lambda_{\min} $ for some $ \Lambda_{\min} \in \mathbb{N} $ not depending on $ g $. Introducing the reduced resolvent
\begin{equation} \label{eq:R_>}
    R_\Lambda^>(z) := \big( \di \Gamma(\omega^{\le}) + \di \hat{\Gamma}_\Lambda(\omega^>) - z \big)^{-1} \;,
\end{equation}
we then conclude, by means of a Neumann expansion,
\begin{equation} \label{eq:R_to_R_>}
    \Vert \di \hat{\Gamma}_\Lambda(\omega^>) R_\Lambda(z) \Vert
    \le \Vert \di \hat{\Gamma}_\Lambda(\omega^>) R_\Lambda^>(z) \Vert
        \Big( 1 + \sum_{n=1}^\infty \Vert (H_{\mathrm{int},\Lambda}^{\le} + K) R_\Lambda^>(z) \Vert^n \Big) \;.
\end{equation}
For the first factor, as $ \omega^> $ and $ \omega^{\le} $ have disjoint supports, one easily checks $ [\di \Gamma(\omega^{\le}), \di \hat{\Gamma}_\Lambda(\omega^>) ] = 0 $. So there exists a common spectral measure for both operators, and with $ \di \Gamma(\omega^{\le}) \ge 0 $ we conclude $ \Vert \di \hat{\Gamma}_\Lambda(\omega^>) R_\Lambda^>(z) \Vert < 1 $ for $ \Re(-z) > 0 $ and uniformly in $ \Lambda $. For the second factor, we use the inequality $ \Vert f_{j,\Lambda}^{\le} \Vert \le \kappa \Vert \omega^{-1}f_{j,\Lambda} \Vert $ and the fact that the latter is uniformly bounded in $ \Lambda $. Therefore, $ H_{\mathrm{int},\Lambda}^{\le} + K $ is infinitesimally Kato-bounded against $ \di \Gamma(\omega^{\le}) $, see for instance~\cite[Corollary~5.10]{arai2018analysis}, in the sense that, for any $ \varepsilon > 0 $, there exists some $ b_\varepsilon > 0 $ such that uniformly in $ \Lambda $,
\begin{equation}
    \Vert (H_{\mathrm{int},\Lambda}^{\le} + K) R_\Lambda^>(z) \Vert
    \le \varepsilon \Vert \di \Gamma(\omega^{\le}) R_\Lambda^>(z) \Vert + b_\varepsilon \Vert R_\Lambda^>(z) \Vert \;.
\end{equation}
By the same spectral calculus argument as above, with $ \di \hat{\Gamma}_\Lambda(\omega^>) \ge 0 $, we get $ \Vert \di \Gamma(\omega^{\le}) R_\Lambda^>(z) \Vert < 1 $ for $ \Re(-z) > 0 $. Furthermore, choosing $ \Re(-z) $ large enough renders $ b_\varepsilon \Vert R_\Lambda^>(z) \Vert < \varepsilon $ uniformly in $ \Lambda $. So with $ \varepsilon $ small enough and $ \Lambda $ large enough, we have for some $ C > 0 $\footnote{In this section, $ C > 0 $ denotes a constant independent of $ \Lambda $ and $ g $, which may vary from line to line.}
\begin{equation} \label{eq:ag_>_bound}
    \Vert \di \hat{\Gamma}_\Lambda(\omega^>) R_\Lambda(z) \Vert
    \le 1 + \sum_{n=1}^\infty (2 \varepsilon)^n
    \le 2 \qquad \overset{\eqref{eq:aR_UV_reduction}}{\Rightarrow} \qquad
    \Vert \hat{a}_\Lambda(g^>) R_\Lambda(z) \Vert
    \le C \Big\Vert \frac{g}{\omega} \Big\Vert \;.
\end{equation}

Now we bound the second term on the right-hand side of~\eqref{eq:ag_split}. Recall that $ \hat{a}_\Lambda(g^{\le}) = a(g^{\le}) + \sum_{j=1}^N B_j \langle g^{\le}, \frac{f_{j,\Lambda}}{\omega} \rangle $. With $ \Vert a(g^{\le}) \Psi \Vert \le \Big\Vert \frac{g^{\le}}{\omega} \Big\Vert \Vert \di \Gamma(\omega^2)^{\frac 12} \Psi \Vert $ and $ \di \Gamma(\omega^2) \le \di \Gamma (\omega)^2 $ (see Lemma~\ref{lem:HLambdasquare_bound}), we conclude
\begin{equation}
    \Vert \hat{a}_\Lambda(g^{\le}) R_\Lambda(z) \Vert
    \le \bigg\Vert \frac{g}{\omega} \bigg\Vert
        \Vert \di \Gamma(\omega^{\le}) R_\Lambda(z) \Vert
        + \sum_{j=1}^N \kappa \Vert B_j \Vert
            \bigg\Vert \frac{g}{\omega} \bigg\Vert
            \bigg\Vert \frac{f_{j,\Lambda}^{\le}}{\omega} \bigg\Vert
            \Vert R_\Lambda(z) \Vert \;.
\end{equation}
As in~\eqref{eq:R_to_R_>}--\eqref{eq:ag_>_bound}, we may bound $ \Vert \di \Gamma(\omega^{\le}) R_\Lambda(z) \Vert < C $. Furthermore, $ \Big\Vert \frac{f_{j,\Lambda}^{\le}}{\omega} \Big\Vert \le C $, so
\begin{equation}\label{eq:ag_<_bound}
    \Vert \hat{a}_\Lambda(g^{\le}) R_\Lambda(z) \Vert
    \le C \Big\Vert \frac{g}{\omega} \Big\Vert \;.
\end{equation}

We finally managed to bound both terms on the right-hand side of Eq.~\eqref{eq:ag_split}, see Eqs.~\eqref{eq:ag_>_bound} and~\eqref{eq:ag_<_bound}. Plugging them into~\eqref{eq:ag_split} renders the desired bound on $ \Vert \hat{a}_\Lambda(g) R_\Lambda(z) \Vert $. The bound on $ \Vert \hat{a}(g) R(z) \Vert $ is analogous under replacement $ f_{j,\Lambda} \to f_j $, where in particular $ \Vert f_j^{\le} \Vert \le \kappa \Vert \omega^{-1} f_j \Vert $, so $ f_j^{\le} \in L^2 $.
\end{proof}

We can now define the Hilbert space riggings $ \hfrak_{\Lambda,+} \subset \hfrak \subset \hfrak_{\Lambda,-} $ and $ \hfrak_+ \subset \hfrak \subset \hfrak_- $ via
\begin{equation} \label{eq:rigging}
    \hfrak_{\Lambda,+}
    := \dom(\di \hat\Gamma_\Lambda(\omega) )
    = \dom(H_\Lambda) \;, \qquad
    \hfrak_+
    := \dom(\di \hat\Gamma(\omega) )
    = \dom(H) \;,
\end{equation}
and $ \hfrak_{\Lambda,-}, \hfrak_- $ being their dual spaces, cf.~\cite[Sect. 3]{lonigro2022generalized}.
Since $ R_\Lambda(z): \hfrak \to \hfrak_{\Lambda,+} $ and $ R(z): \hfrak \to \hfrak_+ $ are injective, Lemma~\ref{lem:ahat_R_bound} readily allows us to define the continuous maps
\begin{equation} \label{eq:rigging_a}
    \hat{a}_\Lambda(g): \hfrak_{\Lambda,+} \to \hfrak \;, \qquad
    \hat{a}(g): \hfrak_+ \to \hfrak \;, \qquad \textnormal{for } \frac{g}{\omega} \in L^2(X) \;,
\end{equation}
as well as their adjoints 
\begin{equation} \label{eq:rigging_astar}
    \hat{a}^*_\Lambda(g): \hfrak \to \hfrak_{\Lambda,-} \;, \qquad
    \hat{a}^*(g): \hfrak \to \hfrak_- \;, \qquad \textnormal{for } \frac{g}{\omega} \in L^2(X) \;.
\end{equation}
We can now proceed with the proof of norm resolvent convergence.

\begin{proof}[Proof of Theorem~\ref{thm:normresolvent}]
One easily verifies the following strong operator identity\footnote{
If $ \dom(H_\Lambda) \cap \dom(H) $ is not dense in $ \hfrak $, we proceed as follows: Recall the rigging $ \hfrak_{\Lambda,+} \subset \hfrak \subset \hfrak_{\Lambda_-} $ from~\eqref{eq:rigging}. We then extend $ H_\Lambda $ and $ R_\Lambda(z) $ to $ H_\Lambda: \hfrak \to \hfrak_{\Lambda,-} $ and $ R_\Lambda(z): \hfrak_{\Lambda,-} \to \hfrak $. So for $ \Psi \in \dom(H) $, we have $ (H - H_\Lambda) \Psi \in \hfrak_{\Lambda,-} $ and $ R_\Lambda(z) (H - H_\Lambda) \Psi \in \hfrak $.} on $ \hfrak $:
\begin{equation}
    (R_\Lambda(z) - R(z))
    = R_\Lambda(z) (H - H_\Lambda - E_\Lambda) R(z)
    = \sum_{j=1}^N R_\Lambda(z) \Big( B_j^* \hat{a}(f_j - f_{j,\Lambda}) + \hat{a}_\Lambda^*(f_j - f_{j,\Lambda}) B_j \Big) R(z) \;.
\end{equation}
By using Lemma~\ref{lem:ahat_R_bound}, we can then bound the left-hand side as follows:
\begin{equation}
\begin{aligned}
    &\Vert R_\Lambda(z) - R(z) \Vert
    \le \sum_{j=1}^N \Big\Vert R_\Lambda(z) \Big( B_j^* \hat{a}(f_j - f_{j,\Lambda}) + \hat{a}_\Lambda^*(f_j - f_{j,\Lambda}) B_j \Big) R(z) \Big\Vert \\
    &\le C \sum_{j=1}^N \Vert B_j \Vert
        \Big( \big\Vert \hat{a}(f_j - f_{j,\Lambda}) R(z) \big\Vert
        + \big\Vert R_\Lambda(z) \hat{a}_\Lambda^*(f_j - f_{j,\Lambda}) \big\Vert \Big) \\
    &\le C \sum_{j=1}^N \Vert B_j \Vert \bigg\Vert \frac{f_j}{\omega} - \frac{f_{j,\Lambda}}{\omega} \bigg\Vert \to 0 \qquad \textnormal{as } \Lambda \to \infty \;,
\end{aligned}
\end{equation}
where we used $ \Vert R_\Lambda(z) \hat{a}_\Lambda^*(f_j - f_{j,\Lambda}) \Vert = \Vert \hat{a}_\Lambda(f_j - f_{j,\Lambda}) R_\Lambda(z) \Vert $, as both operators are bounded and adjoints of each other. This concludes the proof.
\end{proof}

\begin{remark}
At this point, the reader may wonder why we did not try to prove resolvent convergence by simply expanding 
\begin{equation*}
    (H_\Lambda - z)^{-1} - (H_{\Lambda'} - z)^{-1}
    = \sum_{j=1}^N (H_\Lambda - z)^{-1} \big( B_j a^*(f_{j,\Lambda'} - f_{j,\Lambda}) + B_j^* a(f_{j,\Lambda'} - f_{j,\Lambda}) \big) (H_{\Lambda'} - z)^{-1} \;,
\end{equation*}
and then bound $ \Vert a(f_{j,\Lambda'} - f_{j,\Lambda}) (H_{\Lambda'} - z)^{-1} \Vert $ uniformly in $ \Lambda $ and $ \Lambda' $. In fact, this expansion would not allow incorporating the self-energy $ E_\Lambda $, so we would not expect it to give a reasonable renormalized Hamiltonian in case $ E_\Lambda \to \infty $.
And indeed, while we bounded $ \hat{a}_\Lambda(g) $ against $ \di \hat\Gamma_\Lambda(\omega) = H_\Lambda - E_\Lambda $, the analogous bound for $ a(g) $ would only be against $ \di \Gamma(\omega) $, which differs from $ H_\Lambda $ by a large perturbation.
\end{remark}

\appendix
\section{Proof of triviality for UV-divergences of Case 3}
\label{sec:triviality}

As in~\cite{dam2020asymptotics}, we will decompose the Hamiltonian into fibers and conjugate each fiber with a generalized Weyl transformation $ W(s,U): \fock \to \fock $ where $ s \in  L^2(X) $, $ U:\fock \to \fock $ is unitary, and
\begin{equation} \label{eq:WsU}
    W(s,U) := \e^{a^*(s) - a(s)} \Gamma(U) \;, \qquad
    (\Gamma(U) \psi)^{(n)} := U^{\otimes n} \psi \;.
\end{equation}
From~\cite[(2.2)]{dam2020asymptotics}, we retrieve the following generalized Weyl relations:
\begin{equation} \label{eq:Weylrelations}
    W(s_1,U_1) W(s_2,U_2) = \e^{-i \mathrm{Im}\langle s_1, U_1 s_2 \rangle} W(s_1 + U_1 s_2, U_1 U_2) \;.
\end{equation}
These allowed the authors of~\cite{dam2020asymptotics} to calculate the conjugated fibers for the standard spin--boson model, where they take the form $ \di\Gamma(\omega) + \eta W(g,-1) $ for some $ \eta \in \mathbb{R} $ and $ g \in  L^2(X) $. The authors then proved \textit{triviality} on the renormalized model, in the following sense:
\begin{equation} \label{eq:trivialitystandard}
    \di\Gamma(\omega) + \eta W(g,-1) \to \di\Gamma(\omega) \quad
    \textnormal{ in the norm resolvent sense as } \quad
    \Vert g \Vert \to \infty \;.
\end{equation}
It is possible to generalize this result to any phase $ \e^{i\alpha} \neq 1 $ instead of $ -1 $, and a finite sum of perturbations. To this purpose, we will employ the following lemma:

\begin{lemma}[Linear combinations of perturbations] \label{lem:linearcomb_pert}
Let $ H $ be a self-adjoint operator on $ \hfrak $. For $ M \in \mathbb{N} $, $ 1 \le m \le M $, let $ (V_{m,\Lambda})_{\Lambda \in \mathbb{N}} \subset \mathcal{B}(\hfrak) $ be a family of operators, uniformly bounded in $ \Lambda $, such that $ (H+V_{m,\Lambda}) \to H $ in norm resolvent sense as $ \Lambda \to \infty $. Let $ (\lambda_m)_{m=1}^M \in \mathbb{C}^M $. Then, we also have
\begin{equation} \label{eq:linearcomb_pert}
    H + \sum_{m=1}^M \lambda_m V_{m,\Lambda} \to H \;,
\end{equation}
in the norm resolvent sense as $ \Lambda \to \infty $.
\end{lemma}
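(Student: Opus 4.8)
The plan is to show that norm‑resolvent convergence of a uniformly bounded perturbation is equivalent to a \emph{linear} condition on the perturbation, which then survives taking finite linear combinations for free. Concretely, set $ V_\Lambda := \sum_{m=1}^M \lambda_m V_{m,\Lambda} $ and $ C := \sup_{1 \le m \le M,\, \Lambda \in \mathbb{N}} \Vert V_{m,\Lambda} \Vert_{\mathcal{B}(\hfrak)} < \infty $, so that $ \Vert V_\Lambda \Vert \le C \sum_{m=1}^M |\lambda_m| =: C' $ uniformly in $ \Lambda $. Fix once and for all a $ z \in \mathbb{C} \setminus \mathbb{R} $ with $ |\Im z| > 2(C + C') $; then $ \Vert (H-z)^{-1} \Vert \le |\Im z|^{-1} $ gives $ \Vert V_{m,\Lambda}(H-z)^{-1} \Vert \le \tfrac12 $ and $ \Vert V_\Lambda (H-z)^{-1} \Vert \le \tfrac12 $ uniformly, so $ H + V_{m,\Lambda} - z $ and $ H + V_\Lambda - z $ are boundedly invertible via Neumann series with uniformly bounded inverses, and $ (H+V_\Lambda-z)^{-1} = (H-z)^{-1}\sum_{k=0}^\infty \big(-V_\Lambda (H-z)^{-1}\big)^k $. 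Since $ H $ is self-adjoint and all perturbations are uniformly bounded, it suffices to prove norm-resolvent convergence at this single $ z $ (and then invoke the standard open-and-connected extension to all non-real $ z $; in any case only large $ |\Im z| $ is ever used in the triviality proof).

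The key step is to convert the hypothesis into the linear condition $ \Vert (H-z)^{-1} V_{m,\Lambda}(H-z)^{-1} \Vert \to 0 $. Writing the second resolvent identity $ (H+V_{m,\Lambda}-z)^{-1} - (H-z)^{-1} = -(H-z)^{-1} V_{m,\Lambda}(H+V_{m,\Lambda}-z)^{-1} $, abbreviating its left-hand side by $ D_\Lambda $ (with $ m $ fixed), and substituting $ (H+V_{m,\Lambda}-z)^{-1} = (H-z)^{-1} + D_\Lambda $ on the right, one obtains $ (H-z)^{-1} V_{m,\Lambda}(H-z)^{-1} = -D_\Lambda - (H-z)^{-1} V_{m,\Lambda} D_\Lambda $, hence $ \Vert (H-z)^{-1} V_{m,\Lambda}(H-z)^{-1} \Vert \le \big(1 + C\,|\Im z|^{-1}\big)\,\Vert D_\Lambda \Vert \to 0 $ by the assumed norm-resolvent convergence of $ H + V_{m,\Lambda} $. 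By linearity and finiteness of the sum, $ \varepsilon_\Lambda := \Vert (H-z)^{-1} V_\Lambda (H-z)^{-1} \Vert \le \sum_{m=1}^M |\lambda_m|\, \Vert (H-z)^{-1} V_{m,\Lambda}(H-z)^{-1} \Vert \to 0 $.

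Finally I would run the Neumann series back: the $ k $-th summand ($ k \ge 1 $) of $ (H+V_\Lambda-z)^{-1} - (H-z)^{-1} = (H-z)^{-1}\sum_{k=1}^\infty\big(-V_\Lambda(H-z)^{-1}\big)^k $ factors as $ (-1)^k\,(H-z)^{-1}V_\Lambda(H-z)^{-1}\,\big(V_\Lambda(H-z)^{-1}\big)^{k-1} $, so its norm is at most $ \varepsilon_\Lambda\,2^{-(k-1)} $; summing the geometric series yields $ \Vert (H+V_\Lambda-z)^{-1} - (H-z)^{-1} \Vert \le 2\varepsilon_\Lambda \to 0 $, which is the claim. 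I do not expect a genuine obstacle here; the only point needing slight care is that the $ V_{m,\Lambda} $ are not assumed self-adjoint, so $ H + V_{m,\Lambda} $ need not be self-adjoint — this is precisely why one works at $ |\Im z| $ large, where bounded invertibility and the uniform resolvent estimates come for free from the Neumann series.
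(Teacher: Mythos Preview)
Your proof is correct and in fact cleaner than the paper's. The key difference is this: you isolate the \emph{linear} condition $\Vert (H-z)^{-1} V_{m,\Lambda}(H-z)^{-1}\Vert \to 0$ as equivalent (for uniformly bounded perturbations, at large $|\Im z|$) to norm resolvent convergence, so that closure under finite linear combinations is immediate; the conversion back to resolvent convergence is then a one-line Neumann series estimate. The paper instead treats scalar multiples first (trivially, via the second resolvent identity) and then proves the additivity step $H+V_{1,\Lambda}+V_{2,\Lambda}\to H$ by a bootstrap: it derives two coupled inequalities of the form
\[
\Vert (H+V_1+V_2-z)^{-1}-(H+V_1-z)^{-1}\Vert \le \tfrac12 \Vert (H+V_1+V_2-z)^{-1}-(H+V_2-z)^{-1}\Vert + o_\Lambda(1)
\]
and its counterpart with $V_1,V_2$ swapped, then substitutes one into the other. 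Your approach is more conceptual and avoids this back-and-forth; the paper's approach avoids explicitly summing the Neumann series but pays for it with the coupling trick. Both are elementary; your reduction to the ``sandwich'' quantity $(H-z)^{-1}V(H-z)^{-1}$ is the sharper observation. Your caveat about non-self-adjoint $V_{m,\Lambda}$ and working at large $|\Im z|$ is exactly right and matches how the paper handles it implicitly.
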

\begin{proof}
First, we note that, for any $ z \in \mathbb{C} $ with $ |\mathrm{Im}(z)| $ large enough, the first resolvent identity yields
\begin{equation}
\begin{aligned}
    \Vert (H + \lambda_m V_{m,\Lambda} - z)^{-1} - (H - z)^{-1} \Vert
    &= \Vert (H + \lambda_m V_{m,\Lambda} - z)^{-1} \lambda_m V_{m,\Lambda} (H - z)^{-1} \Vert \\
    &= |\lambda_m| \Vert (H + V_{m,\Lambda} - z)^{-1} - (H - z)^{-1} \Vert \;,
\end{aligned}
\end{equation}
so also $ (H + \lambda_m V_{m,\Lambda}) \to H $ in the norm resolvent sense. To conclude the proof it suffices to show $ (H + V_{1,\Lambda} + V_{2,\Lambda}) \to H $ in the norm resolvent sense, as an iteration of the additivity argument will then yield~\eqref{eq:linearcomb_pert}. To this end, we expand
\begin{equation} \label{eq:resolventnormdifference_V1V2}
\begin{aligned}
    &\Vert (H + V_{1,\Lambda} + V_{2,\Lambda} - z)^{-1} - (H - z)^{-1} \Vert \\
    &\le \Vert (H + V_{1,\Lambda} + V_{2,\Lambda} - z)^{-1} - (H + V_{1,\Lambda} - z)^{-1} \Vert
        + \underbrace{\Vert (H + V_{1,\Lambda} - z)^{-1} - (H - z)^{-1} \Vert}_{\to 0} \;,
\end{aligned}
\end{equation}
so we must show that the following term vanishes:
\begin{equation} \label{eq:V2V1bound}
\begin{aligned}
    &\Vert (H + V_{1,\Lambda} + V_{2,\Lambda} - z)^{-1} - (H + V_{1,\Lambda} - z)^{-1} \Vert \\
    &= \Vert (H + V_{1,\Lambda} + V_{2,\Lambda} - z)^{-1} V_{2,\Lambda} (H + V_{1,\Lambda} - z)^{-1} \Vert \\
    &\le  \Vert (H + V_{1,\Lambda} + V_{2,\Lambda} - z)^{-1} - (H + V_{2,\Lambda} - z)^{-1} \Vert 
        \underbrace{\Vert V_{2,\Lambda} (H + V_{1,\Lambda} - z)^{-1} \Vert}_{< 1/2} \\
        & \quad + \underbrace{\Vert (H + V_{2,\Lambda} - z)^{-1} V_{2,\Lambda} \Vert}_{\le C} 
        \underbrace{\Vert (H + V_{1,\Lambda} - z)^{-1} - (H - z)^{-1} \Vert}_{\to 0} \\
        & \quad + \underbrace{\Vert (H + V_{2,\Lambda} - z)^{-1} V_{2,\Lambda} (H - z)^{-1} \Vert}_{\to 0} \\
    &\le \frac 12 \Vert (H + V_{1,\Lambda} + V_{2,\Lambda} - z)^{-1} - (H + V_{2,\Lambda} - z)^{-1} \Vert + o_\Lambda(1)
\end{aligned}
\end{equation}
for $ |\mathrm{Im}(z)| $ large enough, as $ \Vert V_{2,\Lambda} \Vert $ is uniformly bounded in $ \Lambda $. Conversely, swapping the roles of $ V_{1,\Lambda} $ and $ V_{2,\Lambda} $ yields
\begin{equation} \label{eq:V1V2bound}
\begin{aligned}
    &\Vert (H + V_{1,\Lambda} + V_{2,\Lambda} - z)^{-1} - (H + V_{2,\Lambda} - z)^{-1} \Vert \\
    &\le \frac 12 \Vert (H + V_{1,\Lambda} + V_{2,\Lambda} - z)^{-1} - (H + V_{1,\Lambda} - z)^{-1} \Vert + o_\Lambda(1) \;.
\end{aligned}
\end{equation}
Plugging~\eqref{eq:V1V2bound} into~\eqref{eq:V2V1bound} and bringing both norms to the same side, we then get
\begin{equation} \label{eq:V1V2bound_2}
    \frac 34 \Vert (H + V_{1,\Lambda} + V_{2,\Lambda} - z)^{-1} - (H + V_{1,\Lambda} - z)^{-1} \Vert
    = o_\Lambda(1) \;,
\end{equation}
whence~\eqref{eq:resolventnormdifference_V1V2} vanishes as $ \Lambda \to \infty $.
\end{proof}

The generalization of~\eqref{eq:trivialitystandard} is now the following.

\begin{lemma} \label{lem:trivialitycondition}
For $ M \in \mathbb{N} $ let $ (\eta_m)_{m=1}^M \in \mathbb{C}^M $ and $ (\alpha_m)_{m=1}^M \in (0,2\pi)^M $. For each $ 1 \le m \le M $, let $ (g_{m,\Lambda})_{\Lambda \in \mathbb{R}} \subset  L^2(X) $ such that $ \Vert g_{m,\Lambda} \Vert \to \infty $ as $ \Lambda \to \infty $. Then,
\begin{equation} \label{eq:trivialitycondition}
    \di\Gamma(\omega) + \sum_{m=1}^M \eta_m W(g_{m,\Lambda}, \e^{i\alpha_m}) 
    \to \di \Gamma(\omega)
\end{equation}
in the norm resolvent sense as $ \Lambda \to \infty $.
\end{lemma}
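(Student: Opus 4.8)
The plan is to peel off the finite sum using Lemma~\ref{lem:linearcomb_pert}, reduce the surviving single-perturbation statement to one operator-norm bound via a Neumann series, and prove that bound from a Wick-ordering estimate together with $\di\Gamma(\omega)\ge m\cN$. Since each $W(g_{m,\Lambda},\e^{\ii\alpha_m})$ is unitary on $\fock$, the families $(W(g_{m,\Lambda},\e^{\ii\alpha_m}))_\Lambda$ are uniformly bounded, so by Lemma~\ref{lem:linearcomb_pert} applied with $H=\di\Gamma(\omega)$ and $\lambda_m=\eta_m$ it suffices to prove, for each fixed $m$, that $\di\Gamma(\omega)+W(g_{m,\Lambda},\e^{\ii\alpha_m})\to\di\Gamma(\omega)$ in the norm resolvent sense. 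Fixing $m$ and abbreviating $g_\Lambda:=g_{m,\Lambda}$, $W_\Lambda:=W(g_\Lambda,\e^{\ii\alpha_m})$, $H_0:=\di\Gamma(\omega)\ge 0$ and $R_0:=(H_0-z)^{-1}$, I would work — as in the proof of Theorem~\ref{thm:normresolvent} — at $z=-R$ with $R>1$ large, so $\|R_0\|\le R^{-1}<1$, and then $H_0+W_\Lambda-z$ is boundedly invertible with
\begin{equation*}
    (H_0+W_\Lambda-z)^{-1}-R_0
    =\sum_{n\ge 1}(-1)^n (R_0W_\Lambda)^n R_0
    =\sum_{n\ge 1}(-1)^n (R_0W_\Lambda R_0)(W_\Lambda R_0)^{n-1} \;.
\end{equation*}
Using $\|W_\Lambda R_0\|\le R^{-1}$ this gives $\|(H_0+W_\Lambda-z)^{-1}-R_0\|\le \frac{R}{R-1}\,\varepsilon_\Lambda$ with $\varepsilon_\Lambda:=\|R_0\,W_\Lambda\,R_0\|$, so the whole statement reduces to showing $\varepsilon_\Lambda\to 0$ as $\Lambda\to\infty$.

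Next I would strip the phase and isolate a displacement estimate. By Eq.~\eqref{eq:WsU}, $W_\Lambda=D(g_\Lambda)\,\Gamma(\e^{\ii\alpha_m})$ with $D(g):=\e^{a^*(g)-a(g)}$ and $\Gamma(\e^{\ii\alpha_m})=\e^{\ii\alpha_m\cN}$; since $\cN$ commutes with $\di\Gamma(\omega)$, hence with $R_0$, I obtain $\varepsilon_\Lambda=\|R_0\,D(g_\Lambda)\,R_0\|$ — the phase is irrelevant here. The key input is
\begin{equation*}
    \big\|\chi(\cN\le K)\,D(g)\,\chi(\cN\le M)\big\|\longrightarrow 0\qquad\text{as }\|g\|\to\infty,\qquad\forall\,K,M\in\mathbb{N}_0 \;,
\end{equation*}
which I would prove from the Wick-ordered form $D(g)=\e^{-\|g\|^2/2}\,\e^{a^*(g)}\,\e^{-a(g)}$ (valid since $[a^*(g),a(g)]=-\|g\|^2$ is central): on $\chi(\cN\le M)$ only the terms $j\le M$ of $\e^{-a(g)}$ contribute, and read from the left against $\chi(\cN\le K)$ only the terms $j\le K$ of $\e^{a^*(g)}$ contribute, so using $\|a(g)^j\chi(\cN\le M)\|\le\|g\|^j\sqrt{M!/(M-j)!}$ and the analogous bound for $a^*(g)^j$, the left-hand side is at most $\e^{-\|g\|^2/2}$ times a polynomial in $\|g\|$ of degree $K+M$, and the Gaussian wins.

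Then I would combine this with $\di\Gamma(\omega)\ge m\cN$, which follows from $\omega\ge m$ in Assumption~\ref{as:strong} and gives $\|R_0\,\chi(\cN>M)\|\le(mM+R)^{-1}$ at $z=-R$. For $\Psi\in\hfrak$, write $R_0\Psi=\chi(\cN\le M)R_0\Psi+\chi(\cN>M)R_0\Psi$; as $\|D(g_\Lambda)\|=1$, the second part contributes at most $R^{-1}(mM+R)^{-1}\|\Psi\|$ to $R_0D(g_\Lambda)R_0\Psi$. With $\phi:=\chi(\cN\le M)R_0\Psi$ (so $\|\phi\|\le R^{-1}\|\Psi\|$), write $D(g_\Lambda)\phi=\chi(\cN\le K)D(g_\Lambda)\phi+\chi(\cN>K)D(g_\Lambda)\phi$: the second part contributes at most $R^{-2}(mK+R)^{-1}\|\Psi\|$, and the first at most $R^{-2}\|\chi(\cN\le K)D(g_\Lambda)\chi(\cN\le M)\|\,\|\Psi\|$. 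Given $\delta>0$, choosing first $M$, then $K$, then $\Lambda$ large makes all three contributions $<\delta\|\Psi\|$, whence $\varepsilon_\Lambda\to 0$, which together with the first reduction concludes the proof.

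The hard part is the operator-norm displacement estimate of the second paragraph; everything else is standard resolvent bookkeeping. It formalises the heuristic that a large coherent displacement sends states of bounded particle number to states of large particle number, the issue being to make this \emph{uniform} in the state. The point that makes it robust is to cut off in $\cN$ rather than in $\di\Gamma(\omega)$: this renders the bound insensitive both to where $g_\Lambda$ is supported in $X$ and to the growth of $\omega$ there — and, as noted, to the phase $\e^{\ii\alpha_m}$. Taking $\e^{\ii\alpha_m}=-1$ recovers the Dam--M\o ller triviality statement~\eqref{eq:trivialitystandard}; the computation shows the phase is cosmetic.
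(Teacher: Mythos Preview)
Your reduction to the single-perturbation case via Lemma~\ref{lem:linearcomb_pert} is exactly what the paper does. The difference is in the treatment of $M=1$: the paper simply invokes \cite[Lemma~5.6]{dam2020asymptotics} for $\alpha=\pi$ and asserts that the same argument extends to generic $\alpha\in(0,2\pi)$, whereas you supply a self-contained proof. Your argument is correct: the Neumann-series reduction to $\varepsilon_\Lambda=\Vert R_0 W_\Lambda R_0\Vert$, the observation that $\Gamma(\e^{\ii\alpha_m})$ commutes with $R_0$ so that only $\Vert R_0 D(g_\Lambda) R_0\Vert$ matters, the Wick-ordered displacement bound $\Vert\chi(\cN\le K)D(g)\chi(\cN\le M)\Vert\le \e^{-\Vert g\Vert^2/2}\cdot(\text{polynomial in }\Vert g\Vert)$, and the number-energy inequality $\di\Gamma(\omega)\ge m\cN$ to control the tails --- all of this is sound. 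There is a cosmetic slip: the high-$\cN$ tail on the \emph{left} contributes $(mK+R)^{-1}R^{-1}\Vert\Psi\Vert$, not $R^{-2}(mK+R)^{-1}\Vert\Psi\Vert$, since $\Vert R_0\chi(\cN>K)\Vert\le(mK+R)^{-1}$ already absorbs the resolvent; this does not affect the conclusion.

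What your route buys is transparency: rather than pointing to an external lemma and noting that the phase restriction $\alpha=\pi$ can be relaxed, you show directly that the phase $\e^{\ii\alpha_m}$ is irrelevant because it commutes through $R_0$. Your proof in fact covers $\alpha_m=0$ as well, which the lemma as stated excludes but does not need to. The paper's route is shorter on the page but less self-contained.
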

\begin{proof}
For $ M = 1 $ perturbation and $ \alpha = \pi $, the result follows from~\cite[Lemma~5.6]{dam2020asymptotics}, keeping in mind that $ \omega \ge m > 0 $ implies injectivity of $ \omega $ as an operator. One easily checks that the proof also goes through for generic $ \alpha \in (0,2\pi) $.
Since $ \Vert W(g_{m,\Lambda}, \e^{i\alpha_m}) \Vert \le 1 $ uniformly in $ \Lambda $, we can apply Lemma~\ref{lem:linearcomb_pert}, which yields the desired norm resolvent convergence.
\end{proof}

Using this result, we can finally prove Proposition~\ref{prop:triviality}:
   
\begin{proof}[Proof of Proposition~\ref{prop:triviality}]
Assumption~\ref{as:eigenbasis} allows for a convenient fiber decomposition of the Hamiltonian: Define the unitary ``untwisting map'' $ V: \hfrak \to \hfrak $ by setting for any element $ v_k $ of the eigenbasis $ (v_k)_{k=1}^D \subset \mathbb{C}^D $ of $ K $ from Assumption~\ref{as:eigenbasis}, any 
$ \psi\in \fock $, and any $ n \in \mathbb{N}_0 $:
\begin{equation}
    V(v_k \otimes P_n \psi) = (B^n v_k) \otimes P_n \psi \;,
\end{equation}
where $ P_n $ is the projection to the $ n $-boson sector. Recall that we assumed $ B v_k = v_{k'} $ and that $ B $ is unitary, so $ B^* v_{k'} = v_k $. Furthermore, we have $ K B^n v_k = \kappa_{k;n} B^n v_k $ for some $ \kappa_{k;n} \in \mathbb{R} $, which allows for the following fiber decomposition of $ H_\Lambda $ (compare~\eqref{eq:HLambda}):
\begin{equation} \label{eq:Fk}
    V^* H_\Lambda V
    = \bigoplus_{k = 1}^D \left( \sum_{n = 0}^\infty \kappa_{k;n} P_n
        + \di \Gamma(\omega)
        + a^*(f_\Lambda) + a(f_\Lambda) \right)
    =: \bigoplus_{k = 1}^D F_{k, \Lambda} \;.
\end{equation}
Now, since $Bv_k=v_{k'}$, there must be some $ M \in \mathbb{N} $, $ M \le D $, possibly depending on $ k $, such that $ B^M v_k = v_k $. Therefore, the sequence $ (\kappa_{k;n})_{n \in \mathbb{N}_0} $ is $ M $-periodic, that is, $ \kappa_{k;n+M} = \kappa_{k;n} $, and we can define the discrete Fourier transform
\begin{equation} \label{eq:etaFT}
    \eta_{k;m} := \frac{1}{M} \sum_{n=0}^{M-1} \kappa_{k;n} \e^{-i \frac{2 \pi}{M} m n} \qquad \Leftrightarrow \qquad
    \kappa_{k;n} = \sum_{m=0}^{M-1} \eta_{k;m} \e^{ i \frac{2 \pi}{M} m n} \;,
\end{equation}
with $ \eta_{k;m+M} = \eta_{k;m} $. Then,
\begin{equation}
    F_{k, \Lambda}
    = \eta_{k;0}
        + \sum_{m=1}^{M-1} \eta_{k;m} \Gamma(\e^{i \frac{2 \pi}{M} m})
        + \di \Gamma(\omega)
        + a^*(f_\Lambda) + a(f_\Lambda) \;.
\end{equation}
Using~\eqref{eq:Weylrelations}, we now conjugate $ F_{k, \Lambda} $ with the Weyl transformation $ W(s_\Lambda, 1)$, $\; s_\Lambda := \frac{f_\Lambda}{\omega} $, and then subtract the self-energy $ E_\Lambda = - \Vert \omega^{-1/2} f_\Lambda \Vert^2 $ :
\begin{equation} \label{eq:fibertransform}
    W(s_\Lambda, 1) (F_{k, \Lambda} - E_\Lambda) W(s_\Lambda, 1)^*
    = \eta_{k;0} + \sum_{m=1}^{M-1} \eta_{k;m} e^{\Im \langle s_\Lambda, e^{i \frac{2 \pi}{M} m} s_\Lambda \rangle} W(s_\Lambda (1 - \e^{i \frac{2 \pi}{M} m}), \e^{i \frac{2 \pi}{M} m}) + \di \Gamma(\omega) \;.
\end{equation}
Applying Lemma~\ref{lem:trivialitycondition} with $ \alpha_m = \frac{2 \pi}{M} m $, $ \eta_m = \eta_{k;m} e^{\Im \langle s_\Lambda, e^{i \frac{2 \pi}{M} m} s_\Lambda \rangle} $, and $ g_{m,\Lambda} := s_\Lambda (1 - \e^{i \alpha_m}) $, where $ \Vert g_{m,\Lambda} \Vert = |1 - \e^{i \alpha_m}| \Vert \omega^{-1} f_\Lambda \Vert \to \infty $ as $ \Lambda \to \infty $, we get that the r.h.s. of~\eqref{eq:fibertransform} converges to $ \eta_{k;0} + \di \Gamma(\omega) $. Thus, with
\begin{equation}
    W_\Lambda, \eta_0: \hfrak \to \hfrak \;, \qquad
    W_\Lambda := (1 \otimes W(s_\Lambda,1)) V^* \;, \qquad
    \eta_0 := \bigoplus_{k=1}^D \eta_{k;0} \;,
\end{equation}
we finally obtain
\begin{equation}
    W_\Lambda (H_\Lambda - E_\Lambda) W_\Lambda^*
    \to \eta_0 + \di \Gamma(\omega)
\end{equation}
as $ \Lambda \to \infty $ in the norm resolvent sense.
\end{proof}
    
\AtNextBibliography{\small}	\DeclareFieldFormat{pages}{#1}\sloppy 
	\printbibliography

\end{document}